\title{Zero Values of the TOV Equation}\let\Title\@title
\author{Jonas Pleyer}\let\Author\@author
\date{09.07.2021}\let\Date\@date
\numberwithin{equation}{subsection}
\newcommand{\Z}{\mathcal{Z}}
\newcommand{\F}{\mathcal{F}}
\newcommand{\Ent}{\mathcal{S}}
\newcommand{\U}{\mathcal{U}}
\newcommand{\R}{\mathbb{R}}
\newcommand{\e}{\mathit{e}}
\newcommand{\diff}{\mathop{}\!\mathrm{d}}
\newtheoremstyle{defstyle}
	{\topsep} 											
	{\topsep} 											
	{}													
	{}													
	{\bfseries}{}										
	{\newline}											
	{\thmname{#1}~\thmnumber{#2}\thmnote{\ -\ #3}}
\newtheoremstyle{normalstyle}
	{\topsep} 											
	{\topsep} 											
	{}													
	{}													
	{\bfseries}{}										
	{\newline}											
	{\thmname{#1}~\thmnumber{#2}\thmnote{\ -\ #3}}
\newtheoremstyle{remarkstyle}
	{\topsep} 											
	{\topsep} 											
	{}													
	{}													
	{\itshape}{}										
	{\newline}											
	{\thmname{#1}~\thmnumber{#2}\thmnote{\ -\ #3}}
\newtheoremstyle{postulatestyle}
	{\topsep} 											
	{\topsep} 											
	{}													
	{}													
	{\bfseries}{}										
	{\newline}											
	{\thmname{#1}\thmnote{ #3}}
\theoremstyle{normalstyle}
\newtheorem{theorem}{Theorem}[section]
\newtheorem{lemma}[theorem]{Lemma}
\newtheorem{hypothesis}[theorem]{Hypothesis}
\theoremstyle{remarkstyle}
\theoremstyle{defstyle}
\theoremstyle{postulatestyle}
\DeclareRobustCommand\acrNoHyperlink[2]{%
	\begingroup
	\disableAcronymHyperlink
	#1{#2}%
	\endgroup
}
\newcommand{\disableAcronymHyperlink}{%
	\def\AC@hyperlink##1##2{##2}%
	\def\AC@hyperref[##1]##2{##2}%
	\def\AC@hypertarget##1##2{##2}%
	\def\AC@phantomsection{}%
}
\begin{document}
\begin{titlepage}
\pagenumbering{Alph}
\thispagestyle{empty}
\begin{center}
 
\Large\textbf{ALBERT-LUDWIGS-UNIVERSITÄT FREIBURG IM BREISGAU\\}
\vspace{0.5cm}
\Large\textbf{Institute of Mathematics}

\rule{\textwidth}{1pt}
\vspace{1.5cm}

\Large\textbf{\Title}

\vspace{1.5cm}

\includegraphics{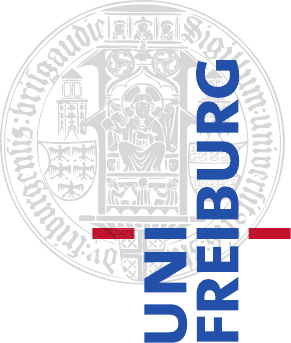}

\vfill

\normalsize
Master Thesis in Physics\\
\vspace{0.5cm}
submitted \Date\hspace{0pt} by\\
\vspace{0.5cm}
\Large\textbf{Jonas Pleyer}\\
\normalsize
\vspace{0.5cm}
\large Supervisor: JProf. Dr. Nadine Große\\
\normalsize

\newpage
\thispagestyle{empty}
My utmost gratitude goes to my parents Adi and Anke Pleyer which guided me as a child and nourished the fundamental interests in my scientific understanding of nature.\\
\vfill
I also thank Alex, Gesa, Sebastian and Yannik for the kind editorial notes.

\end{center}

\newpage\null\thispagestyle{empty}\newpage
\end{titlepage}
\thispagestyle{empty}
\tableofcontents
\pagenumbering{Roman}
\newpage
\thispagestyle{empty}
\section*{List of Abbreviations}
\begin{acronym}[TOV]
	\acro{gr}[GR]{General Relativity}
	\acro{ODE}[ODE]{Ordinary Differential Equation}
	\acro{TOV}[TOV]{Tollmann-Oppenheimer-Vollkoff}
	\acro{eos}[EoS]{Equation of State}
	\acro{LE}[LE]{Lane-Emden}
\end{acronym}

\addcontentsline{toc}{section}{\listfigurename}
\listoffigures
\addcontentsline{toc}{section}{\listtablename}
\listoftables

\section*{List of Symbols}
\begin{tabbing}
	\hspace{2cm}\=\kill
	$\Ent$				\> Entropy\\
	$\F$				\> Free Energy\\
	$\U$				\> Internal Energy\\
	$\Z$				\> Partition Function\\
	$p$					\> Pressure\\
	$\rho$				\> Density
\end{tabbing}

\section*{Units}
This thesis uses the geometrized unit system $G=c=1$.
If not otherwise mentioned any variable named $G$ or $c$ can be set to $1$ and is only written for clarification.

\newpage
\pagenumbering{arabic}
\section{Introduction}
\label{sec:01-Introduction}
The theory of \ac{gr} has allowed scientists deep insights into space and time itself.
From its emergence in 1916~\cite{einsteinGrundlageAllgemeinenRelativitaetstheorie1916a} to its current state it is regarded as one of the most elegant and well tested theories that Physics has to offer~\cite{hafeleAroundtheWorldAtomicClocks1972, vessotTestRelativisticGravitation1980a, battatApachePointObservatory2009}.
Most objects studied in this framework are typically very heavy or close to the speed of light and thus part of the large scale structure of our observable universe.
\ac{gr} is formulated in a geometric context by regarding space and time itself as a Lorentz-Manifold called spacetime with a non-degenerate, symmetric bilinear form called the metric.
The curvature of this Lorentz-Manifold, which can be completely described by the Riemann tensor $R_{\mu \nu}{ }^{\sigma}{ }_{\lambda}$, depends on mass and energy present at the respective points in spacetime which is reflected in the Einstein equations~\cite{einsteinFeldgleichungenGravitation1915}.
\begin{equation}
	G_{\mu\nu}+\Lambda g_{\mu\nu} = R_{\mu\nu} + \left(\frac{1}{2}R+\Lambda\right)g_{\mu\nu}=8\pi T_{\mu\nu}
	\label{eq:01-Intr-Einstein-Equ}
\end{equation}
The left hand side of equation~\ref{eq:01-Intr-Einstein-Equ} contains the Einstein Tensor $G_{\mu\nu}=R_{\mu\nu}+1/2Rg_{\mu\nu}$ which is defined in terms of the Ricci $R_{\mu \nu}=R_{\lambda\mu}{}^{\lambda}{}_\nu$ and metric tensor $g_{\mu\nu}$ and the Ricci scalar $R=g^{\mu\nu}R_{\mu\nu}$.
The Energy Momentum tensor on the right-hand side carries information about mass and energy.
With the Levi-Civita connection on spacetime we can write the Riemann tensor as derivatives of the metric tensor which means the left-hand side of the Einstein equations~\ref{eq:01-Intr-Einstein-Equ} contains non-linear derivatives of $g_{\mu\nu}$.
This non-linearity is one of the key distinctions to other fundamental theories such as a Quantum Theory of Fields and is one of the reasons why few exact solutions of Einsteins equations exist.
In addition, the derivatives of $g_{\mu\nu}$ are also partial which means this theory in general has to deal with non-linear partial differential equations.
A behaviour that can not be altered by simply changing variables.
While these statements might sound demotivating, lots of methods to linearize or obtain \acp{ODE} have been developed.
In particular this thesis will use suitable chosen assumptions to derive a famous \ac{ODE}, the \ac{TOV} equation and analyse its properties.\\
Of the plentiful present problems one in particular of significant importance has been to calculate the macroscopic properties such as mass and angular momentum of stellar objects which in practice are often spherically or cylindrically symmetrical.
Stars, black holes, galaxies and galaxy clusters are just a few examples of this category.
The \ac{TOV} equation~\cite{tolmanStaticSolutionsEinstein1939, oppenheimerMassiveNeutronCores1939} relates pressure, density and mass inside a spherically symmetric object
\begin{equation}
	\begin{aligned}
			\frac{\partial m}{\partial r} &= &&4\pi\rho(r)r^2\\
			\frac{\partial p}{\partial r} &= -&&\frac{Gm\rho}{r^2}\left(1+\frac{p}{\rho c^2}\right)\left(\frac{4\pi r^3 p}{mc^2}+1\right)\left(1-\frac{2Gm}{rc^2}\right)^{-1}
			\label{eq:01-Intr-TOV-Equation}
	\end{aligned}
\end{equation}
and given an \ac{eos} $\rho(p)$ one can solve the differential equation and thus obtain parameters such as the mass and total radius of said stellar object.
Together with non-negative initial values, we specifically prove, that solutions in general exist and are unique.
The non-relativisitc limit of the \ac{TOV} equation is the \ac{LE} equation~\cite{laneTheoreticalTemperatureSun1870, emdenGaskugeln1907} which can be derived in the Newtonian theory of motion when applying the condition of hydrostatic equilibrium.
The outside solution of the \ac{TOV} equation is characterised by the well known Schwarzschild solution~\cite{schwarzschildUberGravitationsfeldMassenpunktes1916} which is also the outside solution of black holes
\begin{equation}
	g=-\left(1-\frac{r_{\mathrm{s}}}{r}\right) c^{2} \diff t^{2}+\left(1-\frac{r_{\mathrm{s}}}{r}\right)^{-1} \diff r^{2}+r^{2}\left( \diff\theta^2+\sin^2\theta \diff\phi^2\right)
	\label{eq:01-Intr-Schwarzschild-Metric}
\end{equation}
and where $r_\mathrm{s}=2GM/c^2$ is the Schwarzschild radius.
Subsection~\ref{subsec:3-Mass-Mass-Bounds} will show that finding constraints for the radius of a stellar object immediately leads to a restriction on its mass.
Numerical results demonstrate that the behaviour of zero values for the \ac{TOV} equation is similar to that of the \ac{LE} equation, details of which will be discussed in section~\ref{subsec:4-NumSol-Sec-TOV-Exponents}.\\
From a Physics perspective, it is clear that Thermodynamics has to play a role in the description of stars since one has to deal with large numbers of particles well in equilibrium.
Thermodynamics is a statistical theory of nature and describes macroscopic phenomena by knowing microscopic behaviour of particles.
Carnot~\cite{carnotReflexionsPuissanceMotrice1824} outlined the first relations between a thermodynamic processes an engine and motive power.
Following in his steps many researchers such as Thomson (Lord Kelvin)~\cite{thomsonAbsoluteThermometricScale2011}, Clausius~\cite{clausiusMechanischeWaermetheorie1876}, Maxwell~\cite{maxwellScientificLettersPapers2002}, Boltzmann~\cite{boltzmannUberMechanischeBedeutung1866} and Gibbs~\cite{gibbsElementaryPrinciplesStatistical2010} greatly impacted the development of Thermodynamics.
Over the years, different ensembles have emerged which can equivalently, although in practice often simplified by a specific choice, describe the same physical system.
The canonical ensemble was first described by Ludwig Boltzmann~\cite{boltzmannUeberEigenschaftenMonocyclischer1885a}.
Its corresponding partition function is given by
\begin{equation}
	\mathcal{Z}(T,V,N) = \int\exp\left(-\frac{H(x_1,\dots,p_N)}{k_B T}\right)\frac{\diff x_1\dots \diff p_N}{N!h^{3N}}
	\label{eq:01-Intr-Canocnical-Ens-Part-Funct}
\end{equation}
where $H$ denotes the Hamiltonian of the particles.
The partition function $\mathcal{Z}$ can then be used to calculate properties such as pressure $p$, internal energy $\mathcal{U}$ and thus also the energy density $\mathcal{U}/V$.
Together with an adiabatic condition one would hope to find a relation between energy density and pressure to derive a \acp{eos}.
In many examples, a polytropic \ac{eos} between pressure $p$ and density $\rho$ is assumed
\begin{equation}
	\rho = Ap^{1+1/n}
	\label{eq:01-Intr-Poly-EOS}
\end{equation}
where $A$ is a constant.
The exponent $n=1/(\gamma-1)$ called the polytropic index can in a special case be related to the adiabatic index $\gamma_\text{ad}$ of a thermodynamic process.
In this case, equation~\ref{eq:01-Intr-Poly-EOS} is derived using the ideal gas equation $pV=Nk_B T$ and adiabaticity $\updelta Q=0$.
Another important relation is between $\gamma_\text{ad}$ and degrees of freedom.
For an ideal gas, the index can be given by $\gamma_\text{ad}=1+2/f$, where $f$ is the degrees of freedom.
In this case, $\gamma$ directly yields information about the underlying microscopic behaviour of the gas.
While there exist more recent examples of more complex \ac{eos} models~\cite{hummerEquationStateStellar1988}, this equation already covers a wide range of cases~\cite{horedtPolytropesApplicationsAstrophysics2004}.
The first part of this thesis will be concerned with deriving a fully special relativistic \ac{eos} of a non-interacting gas that will later be used in numerical solutions of the \ac{TOV} equation.
\newpage
\section{Thermodynamic Calculation of an \texorpdfstring{\acrNoHyperlink{\acs}{eos}}{EoS}}
\label{sec:2-Thermo}
This chapter aims at developing a fully special relativistic \ac{eos} of a non-interacting gas.
We briefly summarise important concepts necessary for the derivation.
In the canonical ensemble, for which an introduction can be found in~\cite{fliessbachStatistischePhysikLehrbuch2018}, the internal energy $\U$ is obtained by the relation $\U=\F-TS$ where the quantities $\F$ and $S$ can be derived by means of the partition function $\Z$ while $T$ is a variable.
Microscopically, the partition function is given by the behaviour of the $N$ particles determined by the Hamiltonian $H$.
Concepts and definitions of Hamilton Mechanics can be found in~\cite{eschrigTopologyGeometryPhysics2011, fliessbachMechanikLehrbuchZur2020, spivakPhysicsMathematiciansMechanics2010}.
In general we assume $H:\R^{3N}\times\R^{3N}\rightarrow\R$ to be a positive smooth function that assigns an energy to the positions and momenta of $N$ particles.
This explicitly takes form in the well known equations
\begin{align}
	\Z(T,V,N) 	&= \int\limits_{\mathcal{V}^N\times\R^N}\exp\left(-\frac{H(x_1,\dots,x_N,p_1,\dots,p_N)}{k_B T}\right)\frac{\diff x_1\diff p_1\dots \diff x_N\diff p_N}{N!h^{3N}}\\
	\F(T,V,N) 	&= - k_B T\log\left(\Z(T,V,N)\right)\label{2-IntEner-FreeEnerDef}\\
	\U(T,V,N) 	&= \F + T \Ent \hspace{0.2cm}=\hspace{0.2cm} \F - T\frac{\partial\F}{\partial T} \hspace{0.2cm}=\hspace{0.2cm} k_B T^2\frac{1}{\Z}\frac{\partial\Z}{\partial T}\label{2-IntEner-Def}
\end{align}
where $x_i\in \mathcal{V}\subset\R^3$  and $p_i\in \R^3$.
The relation $V=\text{vol}(\mathcal{V})$ describes the total volume occupied by the medium while $T$ and $N$ are its temperature and particle number respectively.
To fully calculate $U$ it is necessary to obtain the partition function $\Z$.
From there $\U/V$ as the energy density can be compared to the pressure defined by
\begin{equation}
	p=-\frac{\partial\F}{\partial V}
	\label{eq:2-IntEner-pressureDef}
\end{equation}
to yield an \ac{eos}.
\begin{subsection}{The Ultra-Relativistic Internal Energy}
The ultra-relativistic case is a textbook example and well known limit that we will use in order to verify our later calculated results. 
First we write down the ultra-relativistic Hamiltonian given by 
\begin{equation}
	H(x,p)=||p||c.
	\label{eq:2-IntEner-Ultra-Rel-Hamiltonian}
\end{equation}
The corresponding partition function for an $N$ particle system then reads ($\beta=(k_B T)^{-1}$)
\begin{align}
	\Z 	&= \frac{V^N}{N!h^{3N}}\left[\int\limits_{\R^3}\exp\left(- \beta H(p)\right)\diff^3 p\right]^N\\
		&= \frac{V^N}{N!h^{3N}}\left[\int\limits_0^\infty4\pi p^2\exp(-\beta pc) \diff p\right]^N\\
		&= \frac{V^N}{N!h^{3N}}\frac{(4\pi)^N}{(\beta c)^{3N}}\left[\int\limits_0^\infty x^2\exp(-x) \diff x\right]^N\label{2-IntEner-UltraRelZ-Int-eq}\\
	\Z	&= \frac{1}{N!}\left(8\pi V\left(\frac{k_B T}{hc}\right)^3\right)^N
	\label{eq:2-IntEner-UltraRelZ}
\end{align}
where from the first to second line we used usual spherical coordinates and afterwards the integral transformation $x=\beta cp$.
The integral in equation~\eqref{2-IntEner-UltraRelZ-Int-eq} can be solved exactly with value $2$.
In the last line $\beta=(k_B T)^{-1}$ was used for visual clarity.
With equation~\eqref{2-IntEner-Def} the internal energy $\U = 3Nk_B T$ and with equation~\eqref{eq:2-IntEner-pressureDef} the \ac{eos} can now be written down
\begin{equation}
	p = \frac{Nk_B T}{V} = \frac{1}{3}\frac{\U}{V} = \frac{1}{3}\rho.
	\label{eq:2-IntEner-eos-nonrel}
\end{equation}
\end{subsection}
\begin{subsection}{The Special-Relativistic Internal Energy}
A source for the following calculations could not be found and were thus carried out by the author solely.
The special relativistic Hamiltonian is given by
\begin{equation}
	H=mc^2\sqrt{1+\frac{p^2}{m^2 c^2}}.
	\label{eq:2-IntEner-Rel-Hamiltonian}
\end{equation}
The ultra-relativistic limit can be obtained by letting $m\rightarrow0$.
In this limiting case we should be able to recover the results from equation~\eqref{eq:2-IntEner-UltraRelZ}.
\begin{align}
	\Z 	&= \frac{V^N}{N!h^{3N}}\left[\int\limits_{\R^3}\exp\left(- \frac{mc^2\sqrt{1+\frac{p^2}{m^2 c^2}}}{k_B T}\right)\diff^3 p\right]^N\label{2-IntEner-EQ1}\\
		&= \frac{V^N}{N!h^{3N}}\left[\int\limits_0^\infty 4\pi p^2\exp\left(-\beta mc^2\sqrt{1+\frac{p^2}{m^2 c^2}} \right)\diff p\right]^N\\
		&= \frac{(4\pi V)^N}{N!}\left(\frac{mc}{h}\right)^{3N}\left[\int\limits_0^\infty q^2\exp\left(-\alpha\sqrt{1+q^2}\right)\diff q\right]^N\\
		&= \frac{(4\pi V)^N}{N!}\left(\frac{mc}{h}\right)^{3N}\left(\int\limits_0^\infty\sinh(x)^2\cosh(x)\exp(-\alpha\cosh(x))\diff x \right)^N\\
		&= \frac{(4\pi V)^N}{N!}\left(\frac{mc}{h}\right)^{3N}\left(\int\limits_0^\infty\frac{\sinh(2x)}{2}\sinh(x)\exp(-\alpha\cosh(x))\diff x \right)^N\\
		&= \frac{(4\pi V)^N}{N!}\left(\frac{mc}{h}\right)^{3N}\Biggl(\left.-\frac{\sinh(2x)}{2\alpha}\exp(-\alpha\cosh(x))\right|_0^\infty\\
		&\hspace{12em} + \frac{1}{\alpha}\int\limits_0^\infty\cosh(2x)\exp(-\alpha\cosh(x))\diff x  \Biggr)^N\\
		&= \frac{(4\pi V)^N}{N!}\left(\frac{mc}{h}\right)^{3N}\left(\frac{1}{\alpha}\int\limits_0^\infty\cosh(2x)\exp(-\alpha\cosh(x))\diff x  \right)^N\\
		&= \frac{1}{N!}\left(8\pi V\left(\frac{k_B T}{hc}\right)^{3}\frac{\alpha^2 K_2(\alpha)}{2}\right)^N
		\label{2-IntEner-PartFunc}
\end{align}
In the first step we used spherical coordinates followed by the substitution $qmc=p$ and $\alpha=\beta mc^2=mc^2/k_B T$.
Afterwards we substituted $q=\sinh(x)$ and used the identity $\cosh(x)\sinh(x)=\sinh(2x)/2$.
Partial integration then leads to the last integral which can be identified as the modified Bessel function of the 2nd kind $K_2(\alpha)$~\cite{abramowitzPocketbookMathematicalFunctions1984}.
The equation is then rewritten such that the ultra-relativistic limit can be read off upon letting $\alpha\rightarrow0$.\\
We can now calculate the internal energy $\U$ from $\Z$ via equation~\eqref{2-IntEner-Def}
\begin{align}
    \U &= 3Nk_B T - Nk_B T\alpha\left(\frac{\partial_\alpha K_2(\alpha)}{K_2(\alpha)}+2\right)\\
    \U &= 3Nk_B T - mc^2\left(\frac{\partial_\alpha K_2(\alpha)}{K_2(\alpha)}+2\right).
    \label{2-IntEner-InternelEnergyExplicit}
\end{align}
Again, it can be seen that the ultra-relativistic limit can be obtained by letting $\alpha\rightarrow0$, since the term written in parenthesis vanishes.
It becomes immediately clear that the ultra-relativistic \ac{eos} is $\rho=3p$ which is equivalent to a traceless energy momentum tensor.\\
From equation~\eqref{2-IntEner-PartFunc} and~\eqref{2-IntEner-FreeEnerDef}, we immediately derive the ideal gas equation via the definition of pressure (see equation~\eqref{eq:2-IntEner-pressureDef} in the canonical ensemble
\begin{equation}
    p=\frac{\partial\F}{\partial V} = \frac{Nk_ BT}{V}.
    \label{eq:2-IntEner-IdealGasEq}
\end{equation}
Note that in the case of the Hamiltonian $H$ having a non-trivial dependence on the coordinates $x_i$, the partition function would differ by a factor involving $V$ and $T$.
This more general scenario does not yield the ideal gas law and would thus need to be treated separately.
\end{subsection}
\begin{subsection}{The Special-Relativistic Equation of State}
\label{2-IntEner-SR-EOS-Derivation}
This section aims to develop an equation between the thermodynamic energy density $\rho=\U/V$ and the pressure $p$ of the gas given by the ideal gas equation~\eqref{eq:2-IntEner-pressureDef}.
We assume an additional constraint namely adiabaticity and thus further reduce the degrees of freedom of the thermodynamic system.\\
When assuming an adiabatic condition $\updelta Q=0$ and using the First Law of Thermodynamics~\cite{fliessbachStatistischePhysikLehrbuch2018} $\diff U =\updelta Q + \updelta W$, where $\updelta W=-p\diff V$ and $\diff U=C_V\diff T$, we can relate pressure and temperature. 
This adiabatic condition is quite well satisfied since the timescale of radiational and other losses compared to the timescale of thermodynamic events is negligible~\cite{noerdlingerSolarMassLoss2008, vinkMassLossStellar2017}.
Using equation~\eqref{2-IntEner-InternelEnergyExplicit} and~\eqref{eq:2-IntEner-pressureDef}, we obtain
\begin{alignat}{3}
    -p\diff V &= C_V\diff T\\
    -\frac{Nk_B T}{V}\diff V &= Nk_B &\Biggl[1 + &\alpha^2\left(\left(\frac{\partial_\alpha K_2(\alpha)}{K_2(\alpha)}\right)^2 - \frac{\partial^2_\alpha K_2(\alpha)}{K_2(\alpha)}\right)\Biggr]\diff T\\
    - \frac{\diff V}{V} &= &\Biggl[1 - &\alpha^2\partial_\alpha\left(\frac{\partial_\alpha K_2(\alpha)}{K_2(\alpha)}\right)\Biggr]\frac{\diff T}{T}\\
    &= &\Bigl[1 - &\alpha^2\partial_\alpha^2\left(\log K_2(\alpha)\right)\Bigr]\frac{\diff T}{T}
\end{alignat}
This equation also shows explicitly the $T$ dependence of the specific heat $C_V$ for the general case.
Again, taking the ultra-relativistic limit by letting $\alpha\rightarrow0$, the right-hand term in the first equation converges to $-2Nk_B$.
This agrees with the expected specific heat for an ultra-relativistic gas $C_{V,\text{ur}}=3Nk_B$.
With the identity $\diff\alpha/\alpha = -\diff T/T$ (using $\alpha=mc^2/k_B T$), we can transform the equation and integrate it.
After applying partial integration, the result is
\begin{align}
	\frac{\diff V}{V} &= \left(1-\alpha^2\partial_\alpha^2\log K_2(\alpha)\right)\frac{\diff\alpha}{\alpha}\\
    \log\left(\frac{V}{V_0}\right) &= \log\left(\frac{\alpha}{\alpha_0}\right) - \int\limits_{\alpha_0}^\alpha \alpha\frac{\partial^2}{\partial \alpha^2}\log(K_2(\alpha'))\diff\alpha'\\
    &=\log\left(\frac{\alpha}{\alpha_0}\right) + \log\left(\frac{K_2(\alpha)}{K_2(\alpha_0)}\right) - \Bigl[\alpha\frac{\partial_\alpha K_2}{K_2}\Bigr]_{\alpha_0}^\alpha
\end{align}
This equation enables us to write down a relation between volume and temperature (encapsulated in $\alpha=mc^2/k_B T$)
\begin{equation}
	V(\alpha) = \frac{\alpha K_2(\alpha)}{C}\exp\left(\alpha\frac{K_3(\alpha)+K_1(\alpha)}{2K_2(\alpha)}\right)
	\label{eq:2-IntEner-Volume-Alpha-Dependence}
\end{equation}
where the constant $C$ is defined by the equation beforehand and only depends on the integration boundaries $\alpha_0$ and $V_0$.
It is given by
\begin{equation}
	C = \frac{\alpha_0 K_2(\alpha_0)}{V_0}\exp\left(\alpha_0\frac{K_1(\alpha_0)+K_3(\alpha_0)}{2K_2(\alpha_0)}\right).
	\label{eq:2-IntEner-Parameter-Eos-Integration-Result}
\end{equation}
Since the goal of this section is to obtain a readable output for an \ac{eos}, it is necessary to construct a bijection relating $p$ and $T$.
This becomes clear when writing down the energy density
\begin{equation}
	\rho = \frac{\U}{V} = \frac{Nk_B T}{V} - \frac{Nk_B T}{V}\left(\alpha\frac{\partial_\alpha K_2(\alpha)}{K_2(\alpha)}\right)
	\label{eq:2-IntEner-DensityAlpha}
\end{equation}
where $p=Nk_B T/V$ can be easily identified but the $T$ dependence via $\alpha$ is not solved yet.\\
The pressure $p$ can be rewritten to take the form
\begin{equation}
	p = \frac{Nk_B T}{V} = CNmc^2\frac{1}{K_2(\alpha)\alpha^2}\exp\left(-\alpha\frac{K_1(\alpha)+K_3(\alpha)}{2K_2(\alpha)}\right).
	\label{eq:2-IntEner-PressureAlpha}
\end{equation}
At this point it is not reasonable to ask what happens in the ultra-relativistic limit since $C$ depends non-trivially on $m$ and thus $m$ is not fully substituted in $\alpha$.\\
Interestingly, the pressure seems to be constant for very high temperatures.
The limiting case is obtained when taking $T\rightarrow\infty$ (which corresponds to $\alpha\rightarrow0$)
\begin{equation}
	\lim\limits_{\alpha\to0}\left[\frac{1}{K_2(\alpha)\alpha^2}\exp\left(-\alpha\frac{K_1(\alpha)+K_3(\alpha)}{2K_2(\alpha)}\right)\right] = \frac{1}{2\e^2} \approx 0.006767.
	\label{eq:2-IntEner-PressureAlpha-Limit}
\end{equation}
The same argument then holds true for the density given by equation~\eqref{eq:2-IntEner-DensityAlpha} and since
\begin{equation}
	\lim\limits_{\alpha\to0}\left[1+\alpha\frac{K_1(\alpha)+K_3(\alpha)}{2K_2(\alpha)}\right] = 3
	\label{eq:2-IntEner-Limit-Alpha}
\end{equation}
we have
\begin{equation}
	\lim\limits_{\alpha\to0}\left(\frac{\rho(\alpha)}{Cmc^2}\right) = \frac{3}{2\e^2} \approx 0.203003.
	\label{eq:2-IntEner-Limit-Alpha-2}
\end{equation}
\begin{theorem}
	The mapping $p:\R_{>0}\rightarrow \R_{>0},\alpha\mapsto p(\alpha)$ written down in equation~\eqref{eq:2-IntEner-PressureAlpha} is a bijection for any $N,m,c,C\neq0$.
	\begin{equation}
		p = \frac{Nk_B T}{V} = CNmc^2\frac{1}{K_2(\alpha)\alpha^2}\exp\left(-\alpha\frac{K_1(\alpha)+K_3(\alpha)}{2K_2(\alpha)}\right)
		\label{eq:2-IntEner-Pressure-Alpha-Dependence}
	\end{equation}
\end{theorem}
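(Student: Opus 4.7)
The plan is to deduce bijectivity from three easier properties: smoothness, strict monotonicity in $\alpha$, and identification of the image from boundary limits. Throughout I assume $C, N, m, c > 0$, so that the prefactor $CNmc^2$ is positive and $p(\alpha) > 0$; this is the natural sign convention from the derivation in subsection~\ref{2-IntEner-SR-EOS-Derivation}.

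Smoothness is immediate because $K_1, K_2, K_3$ are smooth and strictly positive on $(0, \infty)$, so the right-hand side of equation~\eqref{eq:2-IntEner-Pressure-Alpha-Dependence} defines a smooth, strictly positive function $p\colon(0,\infty)\to(0,\infty)$. To prove strict monotonicity I would differentiate logarithmically, using the standard recurrence $K_\nu'(\alpha) = -\tfrac{1}{2}(K_{\nu-1}(\alpha)+K_{\nu+1}(\alpha))$. Writing $g(\alpha) := \alpha(K_1(\alpha)+K_3(\alpha))/(2K_2(\alpha))$ one checks $(\log K_2)'(\alpha) = -g(\alpha)/\alpha$, and a direct computation yields
\begin{equation*}
	\frac{\diff}{\diff\alpha}\log p(\alpha) = -\frac{2}{\alpha} + \alpha\,(\log K_2)''(\alpha), \qquad \alpha^2(\log K_2)''(\alpha) = g(\alpha) - \alpha g'(\alpha).
\end{equation*}
Hence $p$ is strictly decreasing if and only if $h(\alpha) := g(\alpha) - \alpha g'(\alpha) < 2$ for every $\alpha > 0$.

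From equation~\eqref{eq:2-IntEner-Limit-Alpha} and the next-order small-argument expansion of $K_\nu$ one has $h(0^+) = 2$, while $h'(\alpha) = -\alpha g''(\alpha)$. The inequality $h < 2$ on $(0,\infty)$ therefore reduces to convexity of $g$, i.e., $g''(\alpha) > 0$ for all $\alpha > 0$. This is the one non-trivial step, and I expect it to be the main obstacle. My plan is to extract it from the Tur\'an-type inequality $K_\nu(\alpha)^2 \geq K_{\nu-1}(\alpha)K_{\nu+1}(\alpha)$ (log-convexity of $\nu\mapsto K_\nu(\alpha)$) combined with the modified Bessel differential equation, which gives $K_2''$ in terms of $K_2$ and $K_2'$; rewriting $g''$ as a rational expression in $K_1,K_2,K_3$ should then make the desired sign transparent.

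Finally, the image is pinned down by the boundary behaviour. Equation~\eqref{eq:2-IntEner-PressureAlpha-Limit} already records $\lim_{\alpha\to 0^+}p(\alpha) = CNmc^2/(2e^2)$, and the standard large-$\alpha$ asymptotic $K_\nu(\alpha)\sim\sqrt{\pi/(2\alpha)}\,e^{-\alpha}$, together with the expansion $\alpha(K_1+K_3)/(2K_2) = \alpha + 1/2 + O(\alpha^{-1})$, yields $p(\alpha) \sim CNmc^2 \sqrt{2/\pi}\,e^{-1/2}\alpha^{-3/2}\to 0$ as $\alpha\to\infty$. Continuity plus strict monotonicity then identify the image as the open interval $(0, CNmc^2/(2e^2))$; strictly speaking the codomain in the theorem should be replaced by this image, and the bijection holds onto it.
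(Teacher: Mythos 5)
Your overall architecture (smoothness, strict monotonicity, image pinned down by boundary limits) is sound, your logarithmic-derivative reduction to the inequality $h(\alpha)=g(\alpha)-\alpha g'(\alpha)<2$ is algebraically correct, and your closing remark is a genuine improvement on the paper: since $p(0^+)=CNmc^2/(2\e^2)$ is finite, the map cannot be onto $\R_{>0}$, and the codomain must be replaced by the image $(0,CNmc^2/(2\e^2))$ — an issue the paper's own proof passes over in silence. The large-$\alpha$ asymptotics you quote are also correct.

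The gap is exactly the step you flag as the main obstacle: $g''(\alpha)>0$ is never proved, only planned, and the plan as written starts from a false premise. Log-convexity of $\nu\mapsto K_\nu(\alpha)$ (which follows from the representation $K_\nu(\alpha)=\int_0^\infty \e^{-\alpha\cosh t}\cosh(\nu t)\diff t$ and Cauchy--Schwarz) gives the Tur\'an-type inequality $K_\nu(\alpha)^2\le K_{\nu-1}(\alpha)K_{\nu+1}(\alpha)$, the \emph{opposite} direction to the $\ge$ you state; for the second-kind functions the product dominates the square. Beyond that sign slip, convexity of $g(\alpha)=-\alpha(\log K_2)'(\alpha)$ is a delicate fact about Bessel ratios, strictly stronger than the monotonicity you actually need, so the reduction to ``convexity plus endpoint value $h(0^+)=2$'' trades the original claim for a harder one. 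The paper's proof avoids this entirely: it writes $p$ as a product of three positive factors $1/K_2(\alpha)$, $\alpha^{-2}$ and the exponential, observes that the first two are decreasing since $K_2'=-(K_1+K_3)/2<0$, and reduces the sign of the logarithmic derivative of the third to the single elementary inequality $K_1(\alpha)<K_2(\alpha)$, proved by one integration by parts in the representation $K_\nu(z)\propto z^\nu\int_1^\infty\e^{-tz}(t^2-1)^{\nu-1/2}\diff t$. To close your argument you should either supply a complete proof of $g''>0$ (with the Tur\'an inequality in the correct direction) or switch to this factorwise decomposition, which needs nothing deeper than $K_1<K_2$.
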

\begin{proof}
	For this proof it suffices to show that the function $p(\alpha)$ has a strictly monotonous behaviour.
	Without loss of generality, we assume $N,m,c,C>0$.
	Now it is obvious that the first two terms $Nmc^2/CK_2(\alpha)$ and $\alpha^{-2}$ are strictly decreasing.
	This is easy to see when using~\cite{abramowitzPocketbookMathematicalFunctions1984}
	\begin{equation}
		\frac{\partial K_n}{\partial\alpha} = \frac{n}{\alpha}K_n-K_{n+1} = -\frac{K_{n-1}+K_{n+1}}{2}.
		\label{eq:2-IntEner-BesselDerivative}
	\end{equation}
	We then calculate the derivative of the third term and divide by the exponential (since it is positive)
	\begin{align}
		&\hspace{1em} \frac{1}{\exp(\alpha\partial_\alpha\log(K_2))}\frac{\partial}{\partial\alpha}\exp(\alpha\partial_\alpha\log(K_2))\\
		&= \partial_\alpha\log(K_2) + \alpha\partial^2_\alpha\log(K_2)\\
		&= \frac{\partial_\alpha K_2}{K_2} + \alpha\frac{\partial^2_\alpha K_2}{K_2} + \alpha\left(\frac{\partial_\alpha K_2}{K_2}\right)^2\\
		&= \frac{K_1-\frac{2}{\alpha}K_2}{K_2} + \alpha\frac{\frac{1}{\alpha}K_1-K_2-\frac{2}{\alpha}\left(K_1-\frac{2}{\alpha}K_2\right)+\frac{2}{\alpha^2}K_2}{K_2^2}\\
		&\hspace{5cm}+\alpha\frac{K_1^2-\frac{1}{\alpha}K_1 K_2-\frac{4}{\alpha^2}K_2^2}{K_2^2}\\
		&= -\alpha + \alpha\frac{K_1^2}{K_2^2} - 4\frac{K_1}{K_2}
	\end{align}
	thus, it is sufficient to show that
	\begin{equation}
		\alpha\frac{K_1^2}{K_2^2}  < \alpha + 4\frac{K_1}{K_2}.
		\label{eq:2-IntEner-BesselK-Ineq-K1-over-K2}
	\end{equation}
	We quickly prove the more general result $K_\nu<K_{\nu+1}$.
	One possible definition~\cite{abramowitzPocketbookMathematicalFunctions1984} for the Bessel function $K_\nu$ is given by
	\begin{equation}
		K_\nu \coloneqq \frac{\sqrt{\pi}}{\left(\nu-\frac{1}{2}\right)!}\left(\frac{1}{2}z\right)^\nu\int\limits_1^\infty\e^{-tz}\left(t^2-1\right)^{\nu-\frac{1}{2}}\diff t
		\label{eq:2-IntEner-BesselK-Def}
	\end{equation}
	We inspect the ratio
	\begin{equation}
		\frac{K_\nu}{K_{\nu+1}} = \frac{1}{\left(\nu+\frac{1}{2}\right)\left(\frac{1}{2}z\right)}\frac{\int\e^{-tz}\left(t^2-1\right)^{\nu-1/2}\diff t}{\int\e^{-tz}\left(t^2-1\right)^{\nu+1/2}\diff t}
		\label{eq:2-IntEner-BesselK-Ratio}
	\end{equation}
	and rewrite the denominator with partial integration
	\begin{equation}
		 \frac{1}{2}z\int\limits_1^\infty\e^{-tz}\left(t^2-1\right)^{\nu+1/2}\diff t = \left(\nu+\frac{1}{2}\right)\int\limits_1^\infty \e^{-tz}t\left(t^2-1\right)^{\nu-1/2}\diff t.
		 \label{eq:2-IntEner-BesselK-Integral-Evaluation}
	\end{equation}
	Now it is obvious that $K_{\nu+1}>K_\nu$.
	Thus in total, the function given by equation~\eqref{eq:2-IntEner-FinalEOS} can be inverted.
\end{proof}%
\noindent With this mapping $p:\R_{>0}\rightarrow \R_{>0},\alpha\mapsto p(\alpha)$ and its inverse $\alpha:\R_{>0}\rightarrow \R_{>0},p\mapsto \alpha(p)$, we can use~\eqref{eq:2-IntEner-DensityAlpha} and finally write down the \ac{eos}
\begin{equation}
	\rho = \frac{\U}{V} = p\left(1+\alpha(p)\frac{K_1 (\alpha(p))+K_3(\alpha(p))}{2K_2 (\alpha(p))}\right).
	\label{eq:2-IntEner-FinalEOS}
\end{equation}
Figure~\ref{fig:2-IntEner-RelEOSPlot} is obtained by choosing numerical values and then interpolating and inverting equation~\eqref{eq:2-IntEner-PressureAlpha}.
The constant factor
\begin{equation}
	B\coloneqq CNmc^2/p_0
	\label{eq:2-IntEner-FactorExplanation}
\end{equation}
is substituted to obtain independence of $p_0$\footnote{This will be the starting value for the pressure in the \acl{TOV} equation. Since the pressure decreases from the inside a star to the outside, this resembles our highest value of $p$.}.
Furthermore the graphs of the plotted \ac{eos} are normalised such that
\begin{equation}
	\rho_{i}(p_0)=\rho_{0,i}
	\label{eq:2-IntEner-Plt-Initial-Val-Def}
\end{equation}
and thus can be compared with each other.
Note that the $\rho_0$ of the plot is not a universal value but rather each function has been scaled individually and $\rho_0$ is a placeholder for the corresponding $\rho_{0,i}$.
The graph of the polytropic \ac{eos} $\rho_{cla}(p)=Ap^{1/\gamma}$ in figure~\ref{fig:2-IntEner-RelEOSPlot} when normalised as before, is independent of $A$ and can be uniquely characterised by $n$.
For further details, the interested reader is referred to~\cite{pleyerGithubRepositoryJonas2021}.
\begin{figure}[H]
	{\centering
	\import{pictures/2-InternalEnergy-Explicit/}{RelEOS.pgf}
	}
	\caption{Relativistic \acl{eos}}
	\label{fig:2-IntEner-RelEOSPlot}
	\small
	The relativistic \ac{eos} $\rho(p)$ is normalised such that values can be compared with a polytropic \ac{eos}.
	Graphs for the relativistic version are independent of the exponent $n$ which is a degree of freedom intrinsic to the polytropic \ac{eos}.
	By normalisation, the graphs of the polytropic \ac{eos} are independent of the factor $A$.
\end{figure}
\end{subsection}
\newpage
\section{Calculating the Mass of a Star with an \texorpdfstring{\acrNoHyperlink{\acs}{eos}}{EoS}}
\label{sec:3-Mass}
\subsection{Deriving the \texorpdfstring{\acrNoHyperlink{\acs}{TOV}}{TOV}-Equation}
\label{subsec:3-Mass-Sec-TOVDerivation}
This chapter derives the \ac{TOV} equation which was introduced earlier in section~\ref{sec:01-Introduction}.
Information about \ac{gr} can be found in~\cite{choquet-bruhatAnalysisManifoldsPhysics2000, choquet-bruhatGeneralRelativityEinstein2009, choquet-bruhatIntroductionGeneralRelativity2015, waldGeneralRelativity1984}.
We follow the derivation of Wald.
We consider a spherical-symmetric static Lorentz-Manifold $(V,g)$ with charts such that the metric $g$ can be written as
\begin{equation}
	g=-\e^\nu \diff t^2+\e^\lambda \diff r^2 + r^2(\diff\theta^2+\sin^2\theta \diff\phi^2).
\end{equation}
The stress-energy tensor of an ideal fluid with density $\rho$ and pressure $p$ is given by
\begin{equation}
	T_{\mu\nu}=\rho u_\mu u_\nu + p(g_{\mu\nu}+u_\mu u_\nu)
\end{equation}
where $u$ is the 4-velocity of the fluid.
In the rest frame where $u^\mu=(-\e^{-\nu/2},0,0,0)$, this equation simplifies to
\begin{equation}
	T^\mu_\nu=\text{diag}(-\rho,p,p,p)
\end{equation}
The Christoffel symbols for this metric are
\begin{align}
	\Gamma_{\mu\nu}^0 &= \begin{bmatrix}
	                     	0 & \nu'/2 & & \\
	                     	\nu'/2 & 0 & & \\
	                     	& & 0 &\\
	                     	& & & 0
	                     \end{bmatrix}, \hspace{0.3cm}
	\Gamma_{\mu\nu}^1 = \begin{bmatrix}
	                     	\nu'\e^{\nu-\lambda}/2 & & &\\
	                     	& \lambda'/2 & & \\
	                     	& & -r\e^{-\lambda} & \\
	                     	& & & -r\sin^2\theta\e^{-\lambda}
	                     \end{bmatrix}\\
	\Gamma_{\mu\nu}^2 &= \begin{bmatrix}
	                     	0 & & &\\
	                     	& 0 & 1/r &\\
	                     	& 1/r & 0 &\\
	                     	& & & -\sin\theta\cos\theta
	                     \end{bmatrix}, \hspace{0.3cm}
	\Gamma_{\mu\nu}^3 = \begin{bmatrix}
	                     	& 0 & 0 &\\
	                     	& 0 & 0 & 1/r\\
	                     	& 0 & 0 & \cos\theta/\sin\theta\\
	                     	& 1/r & \cos\theta/\sin\theta & 0
						\end{bmatrix}
\end{align}
From these, the non-zero components of the Ricci-Tensor can be calculated 
\begin{align}
	R_{11} &= \frac{1}{4r}\e^{-\lambda}\left[\left(2r\nu''+r\nu'^2\right) + \left(4-r\lambda'\right)\nu'\right]\\
	R_{22} &= -\frac{1}{4r}\e^{-\lambda}\left[\left(2r\nu''\right)+r\nu'^2-r\lambda'\nu'-4\lambda'\right]\\
	R_{33} &= -\frac{1}{2r^2}\e^{-\lambda}\left(r\nu'-r\lambda'-2\e^\lambda+2\right)\\
	R_{44} &= R_{33}
\end{align}
and with $R_{\mu\nu}-g_{\mu\nu}R/2=G_{\mu\nu}=8\pi T_{\mu\nu}$ we ultimately receive the following field equations.
\begin{align}
	-8\pi T_0^0 = 8\pi\rho &= \frac{\lambda'\e^{-\lambda}}{r}+\frac{1-\e^{-\lambda}}{r^2}\label{eq:3-Mass-Equ-EinstEqu-1}\\
	8\pi T_1^1 = 8\pi p &= \nu'\frac{\e^{-\lambda}}{r} - \frac{1-\e^{-\lambda}}{r^2}\label{eq:3-Mass-Equ-EinstEqu-2}\\
	8\pi T_2^2 = 8\pi p &= \frac{\e^{-\lambda}}{2}\left[\nu''+\left(\frac{\nu'}{2} + \frac{1}{r}\right)\left(\nu'-\lambda'\right) \right]\label{eq:3-Mass-Equ-EinstEqu-3}
\end{align}
Since $R_4^4=R_3^3$, we omitted the last equation.
From equation \eqref{eq:3-Mass-Equ-EinstEqu-1} we infer the relation.
\begin{equation}
	\e^{-\lambda} = 1 - \frac{2}{r}\int\limits_0^r 4\pi\rho(r') r'^2\diff r' = 1 - \frac{2m(r)}{r}.
	\label{eq:3-Mass-Equ-MassDepOnr}
\end{equation}
The metric needs to be defined at every point in space and thus can not have any additional integration constant in equation~\eqref{eq:3-Mass-Equ-MassDepOnr}, since otherwise we would obtain a term $a/r$ which is not defined for $r\rightarrow0$.\\
The property $m(r)$ can be recognised as the Newtonian mass of the star (which is different to the proper mass~\cite{waldGeneralRelativity1984}).
Since $\e^{-\lambda}>0$, we immediately see that $m(r)<r/2$.\\
In addition to the Field equations \eqref{eq:3-Mass-Equ-EinstEqu-1} through~\eqref{eq:3-Mass-Equ-EinstEqu-3} the divergence of the Stress-Energy Tensor yields more information
\begin{equation}
	\nabla_\mu T^{\mu\nu}=0.
\end{equation}
The following explicit calculation, which again assumes spherical symmetry, shows how to obtain this additional restriction on the pressure and density.
\begin{align}
	\nabla_\mu T^\mu_\nu 	&= \partial_\mu T^\mu_1 + \Gamma^\mu_{\mu\sigma}T^\sigma_\nu-\Gamma^\sigma_{\mu\nu}T^\mu_\sigma\\
	\nabla_\mu T^\mu_1		&= \frac{\partial p}{\partial r} + p\left(\Gamma^0_{01}+\Gamma^1_{11}+\Gamma^2_{21}+\Gamma^3_{31} \right) - \Gamma^\sigma_{\mu 1}T^\mu_\sigma\\
							&= \frac{\partial p}{\partial r} + p\left(\frac{\nu'+\lambda'}{2} + \frac{2}{r}\right) + \rho\frac{\nu'}{2} - p\frac{\lambda'}{2} - p\frac{2}{r}\\
	\frac{\partial p}{\partial r} &= -\frac{p+\rho}{2}\nu'
\end{align}
Together with equation~\eqref{eq:3-Mass-Equ-EinstEqu-2} and the definition~\eqref{eq:3-Mass-Equ-MassDepOnr}, we can write
\begin{align}
	\frac{\partial p}{\partial r} 	&= -\frac{p+\rho}{2}\left(\frac{8\pi pr + \frac{1-\e^{-\lambda}}{r}}{\e^{-\lambda}} \right)\\
									&= -\frac{p+\rho}{2r}\left(\frac{8\pi pr+ \frac{2m}{r^2}}{1-\frac{2m}{r}} \right)\\
									&= -\frac{m\rho}{r^2}\left(1+\frac{p}{\rho}\right)\left(\frac{4\pi r^3 p}{m}+1\right)\left(1-\frac{2m}{r}\right)^{-1}\\
	\frac{\partial p}{\partial r} 	&= -\frac{Gm\rho}{r^2}\left(1+\frac{p}{\rho c^2}\right)\left(\frac{4\pi r^3 p}{mc^2}+1\right)\left(1-\frac{2Gm}{rc^2}\right)^{-1}
	\label{eq:3-Mass-Equ-TOV-Eq-1}
\end{align}
where in the last step the constants $c=G=1$ were put back in.
Equation~\eqref{eq:3-Mass-Equ-TOV-Eq-1} together with~\eqref{eq:3-Mass-Equ-MassDepOnr} yields the \ac{TOV} differential equations
\begin{alignat}{3}
	\frac{\partial m}{\partial r} &= &&4\pi\rho(r)r^2\\
	\frac{\partial p}{\partial r} &= -&&\frac{Gm\rho}{r^2}\left(1+\frac{p}{\rho c^2}\right)\left(\frac{4\pi r^3 p}{mc^2}+1\right)\left(1-\frac{2Gm}{rc^2}\right)^{-1}
	\label{3-Mass-TOV-Eq}
\end{alignat}
\subsection{Newtonian Limit}
\label{subsec:3-Mass-Sec-LEDerivation}
This section follows~\cite{weissteinLaneEmdenDifferentialEquation2020} and~\cite[89\psqq]{chandrasekharChandrasekharAnIntroductionStudy1958}.
Together with a polytropic \ac{eos} $p=K\rho^{1+1/n}$ and the definition $\rho=\lambda\theta^n$, we expect to obtain the Newtonian behavior in the non-relativistic limit in the form of the \ac{LE} equation
\begin{equation}
	\frac{K(n+1)\lambda^{1/n-1}}{4\pi}\Delta\theta+\theta^n=0.
\end{equation}
The usual non-relativistic limit is obtained from a Taylor expansion of equation \eqref{3-Mass-TOV-Eq} around $1/c^2$ in lowest order.
The resulting equation then reads
\begin{equation}
	\frac{\partial p}{\partial r} = -\frac{Gm\rho}{r^2} + \mathcal{O}\left(\frac{1}{c^2}\right).
	\label{eq:3-Mass-Equ-TOVNonRel-Limit}
\end{equation}
One could be tempted to make the assumption that $\frac{\partial p_{\text{TOV}}}{dr} \leq \frac{\partial p_{\text{LE}}}{dr}.$ However this equation fails since the mass given in equation \eqref{3-Mass-TOV-Eq} is not the same as the one given in the \ac{LE} equation. Using the previous relations for $\rho$ and $p$ and again setting $G=c=1$ , we can calculate
\begin{equation}
	\frac{\partial p}{\partial r} = \frac{\partial}{\partial r}\left(K\rho^{1+1/n}\right)= K\lambda^{1+1/n}(n+1)\theta^n\frac{\partial\theta}{\partial r} = -\frac{m\lambda\theta^n}{r^2}
\end{equation}
by using the definition of our polytropic \ac{eos} and equation \eqref{eq:3-Mass-Equ-TOVNonRel-Limit}.
Rearranging and taking the derivative of this equation and using $\partial m/\partial r = 4\pi\rho r^2$, we obtain
\begin{equation}
	- \frac{\partial m}{\partial r} = K\lambda^{1/n}(n+1)\frac{\partial}{\partial r}\left(r^2\frac{\partial\theta}{\partial r}\right) = -4\pi r^2\lambda\theta^n
\end{equation}
Upon redefining $\xi=r/\kappa$ where $4\pi\kappa^2=(n+1)K\lambda^{1/n-1}$, one can obtain the mathematically cleaner looking equation
\begin{equation}
	\frac{1}{\xi^2}\frac{\partial}{\partial\xi}\left(\xi^2\frac{\partial\theta}{\partial\xi}\right) + \theta^n=0
	\label{eq:3-Mass-Equ-Lane-Emden-Eq}
\end{equation}
Exact solutions are known for the cases $n=0,1,5$.
The derivation can be found in appendix~\ref{99-App-A-Exact-LE-Solutions}.
Figure~\ref{fig:3-Mass-Plt-LE-Exact-Results-Plots} and Table~\ref{fig:3-Mass-Tbl-LE-Exact-Results} summarise them.
Together with equation~\eqref{eq:3-Mass-Equ-Lane-Emden-Eq}, we can already suspect that for values $n\geq5$, the equation does not yield solutions with a zero value.\\
\noindent
\begin{minipage}{0.5\textwidth}
	\centering
	\import{pictures/3-MassOfStar/}{LE-SingleSolve.pgf}
\end{minipage}\hfill%
\begin{minipage}{0.45\textwidth}
	\renewcommand{\arraystretch}{1.2}
	\begin{tabular}[b]{@{}lcccc@{}}
		\toprule
		$n$ & \phantom{ab} & \ac{LE} Solution & \phantom{ab} & $\xi_0$\\
		\cmidrule{1-5}
		$n=0$ && $\displaystyle 1-\frac{1}{6}\xi^2$ && $\sqrt{6}$\\[3ex]
		$n=1$ && $\displaystyle \frac{\sin(\xi)}{\xi}$ && $\pi$\\[3ex]
		$n=5$ && $\displaystyle \frac{1}{\sqrt{1+\frac{1}{3}\xi^2}}$ && $\infty$\\[2ex]
		\bottomrule
	\end{tabular}
\end{minipage}
\begin{minipage}[t]{0.5\textwidth}
	\begin{figure}[H]
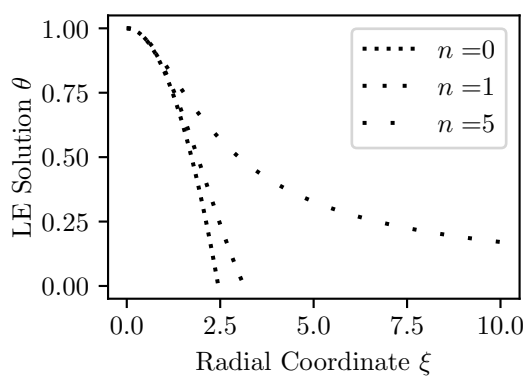

		\caption[Graph of exact Lane Emden Solutions]{Graph of exact \ac{LE} solutions.}
		\label{fig:3-Mass-Plt-LE-Exact-Results-Plots}
	\end{figure}
\end{minipage}\hfill%
\begin{minipage}[t]{0.45\textwidth}
	\begin{figure}[H]
		\captionof{table}[Lane Emden exact Solutions]{\ac{LE} exact Solutions and their zero value.}
		\label{fig:3-Mass-Tbl-LE-Exact-Results}
	\end{figure}
\end{minipage}
\subsection{Mass Bounds}
\label{subsec:3-Mass-Mass-Bounds}
We will first follow the approach given in \cite{waldGeneralRelativity1984}. 
The first assumptions will be $d\rho/dr<0$ and $\rho\geq0$. 
Also we again consider a compact star, meaning $\rho(r)=0$ for all $r>R$.
We first state a useful Lemma for the proof of our next theorem.
\begin{lemma}
	Given $\rho\in C^1(\R_{\geq0})$ is monotonously decreasing, the function $\rho_{av}=m/r^3$ has negative slope.
\end{lemma}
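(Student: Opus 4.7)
The plan is to compute the derivative of $\rho_{av}(r)=m(r)/r^3$ directly and reduce the desired inequality to a statement that follows immediately from monotonicity of $\rho$. Since $m(r)=\int_0^r 4\pi\rho(r')r'^2\,\diff r'$ by equation~\eqref{eq:3-Mass-Equ-MassDepOnr}, differentiation under the integral sign is well justified for $\rho\in C^1(\R_{\geq 0})$, and the quotient rule gives
\begin{equation*}
    \frac{\diff}{\diff r}\!\left(\frac{m(r)}{r^3}\right)
    =\frac{m'(r)}{r^3}-\frac{3m(r)}{r^4}
    =\frac{4\pi\rho(r)r^2}{r^3}-\frac{3m(r)}{r^4}.
\end{equation*}

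Next I would clear denominators: showing this is negative for $r>0$ is equivalent to showing $4\pi\rho(r)r^3<3m(r)$. The key algebraic move is to rewrite the left-hand side as an integral itself via $r^3/3=\int_0^r r'^2\,\diff r'$, yielding
\begin{equation*}
    4\pi\rho(r)r^3
    =3\int_0^r 4\pi\rho(r)r'^2\,\diff r'
    <3\int_0^r 4\pi\rho(r')r'^2\,\diff r'
    =3m(r),
\end{equation*}
where the middle inequality is exactly where the hypothesis enters: for $r'\in(0,r)$ monotonous decrease gives $\rho(r')>\rho(r)$, so the integrand of the difference is strictly positive on a set of positive measure, making the inequality strict.

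The step I expect to need the most care is the behaviour at $r=0$, since $m/r^3$ is a priori indeterminate there; I would handle it by noting $\lim_{r\to 0^+}m(r)/r^3=4\pi\rho(0)/3$ (from L'Hospital or a Taylor expansion of $m$) and extending $\rho_{av}$ continuously, then invoking the computation above for $r>0$. A secondary subtlety is the convention behind ``monotonously decreasing'': if it is interpreted only in the weak sense, the conclusion should be read as nonpositive slope, and strict negativity requires $\rho$ to be strictly decreasing on every subinterval $(0,r)$, which I would mention as a remark rather than re-prove.
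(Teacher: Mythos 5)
Your proposal is correct and follows essentially the same route as the paper: compute $\partial_r(m/r^3)=4\pi\rho/r-3m/r^4$ and reduce the claim to $4\pi\rho(r)r^3/3\leq m(r)$, which follows by bounding the integrand of $m=\int_0^r 4\pi\rho(r')r'^2\,\diff r'$ from below using $\rho(r')\geq\rho(r)$ for $r'\leq r$. Your added remarks on the removable indeterminacy at $r=0$ and on weak versus strict monotonicity are refinements the paper's (terser) proof omits, but they do not change the argument.
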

\begin{proof}
	Taking the derivative of $\rho_{av}$ and applying equation~\eqref{eq:3-Mass-Equ-MassDepOnr}, we obtain
	\begin{equation}
		\frac{\partial}{\partial r}\left(\frac{m}{r^3}\right) = -3\frac{m}{r^4} + \frac{4\pi\rho}{r}.
	\end{equation}
	By equation~\eqref{3-Mass-TOV-Eq} we know that $p$ has negative slope and since $\rho$ is monotonously increasing, we have
	\begin{equation}
		m = \int\limits_0^r 4\pi\rho r^2 \diff r \geq \rho\int\limits_0^r4\pi r^2 \diff r= \frac{4\pi r^3\rho}{3}
	\end{equation}
	which shows that $\frac{4\pi\rho r^3}{3}\leq m$ and completes the proof.
\end{proof}\noindent
\begin{theorem}[Mass Bound]
	The Mass of a spherically symmetric star is bound from above by
	\begin{equation}
		M < \frac{4}{9}R.
	\end{equation}
\end{theorem}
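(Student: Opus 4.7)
The plan is to follow the classical Buchdahl argument, introducing the auxiliary function $\zeta(r) := e^{\nu(r)/2}$ and exploiting the third Einstein equation~\eqref{eq:3-Mass-Equ-EinstEqu-3}, which has not yet been used in the derivation of the \ac{TOV} equation. Equating~\eqref{eq:3-Mass-Equ-EinstEqu-3} with~\eqref{eq:3-Mass-Equ-EinstEqu-2}, substituting $e^{-\lambda} = 1 - 2m/r$, and regrouping terms, I expect to arrive at the key identity
\begin{equation*}
    \frac{d}{dr}\left(\frac{\sqrt{1 - 2m/r}}{r}\,\frac{d\zeta}{dr}\right) = \frac{\zeta}{\sqrt{1 - 2m/r}}\,\frac{d}{dr}\left(\frac{m}{r^3}\right).
\end{equation*}
This is the technically most demanding step: it is mechanical but delicate, being essentially a consequence of the isotropy condition $T_1^{\,1} = T_2^{\,2}$ on the stress-energy tensor and will require the introduction of $\zeta$ to collect the second-derivative terms.

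Given the identity, the previous lemma immediately shows that $h(r) := (\sqrt{1 - 2m/r}/r)\,\zeta'(r)$ is non-increasing on $[0,R]$. Matching to the exterior Schwarzschild solution~\eqref{eq:01-Intr-Schwarzschild-Metric} at $r = R$, where $p(R) = 0$ forces continuity of $\nu$ and $\nu'$, gives $\zeta(R) = \sqrt{1 - 2M/R}$ and a direct calculation yields $h(R) = M/R^3$. Combining the monotonicity $h(r) \geq h(R)$ with a second application of the lemma in the form $m(r)/r^3 \geq M/R^3$ on $[0, R]$ (equivalently $1 - 2m(r)/r \leq 1 - 2Mr^2/R^3$) then produces the pointwise bound
\begin{equation*}
    \zeta'(r) \;\geq\; \frac{Mr/R^3}{\sqrt{1 - 2Mr^2/R^3}}.
\end{equation*}

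The right-hand side integrates in closed form via the substitution $u = 1 - 2Mr^2/R^3$ to $\tfrac{1}{2}\bigl(1 - \sqrt{1 - 2M/R}\bigr)$, so that
\begin{equation*}
    \zeta(R) - \zeta(0) \;\geq\; \tfrac{1}{2}\bigl(1 - \sqrt{1 - 2M/R}\bigr),
\end{equation*}
which rearranges to $\zeta(0) \leq \tfrac{3}{2}\sqrt{1 - 2M/R} - \tfrac{1}{2}$. Regularity of the Lorentzian metric at the center requires $\zeta(0) > 0$, forcing the right-hand side to be strictly positive, i.e.\ $\sqrt{1 - 2M/R} > 1/3$, which is precisely $M < 4R/9$. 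The crux of the proof is thus isolated in the derivation of the identity; once it is in hand, the conclusion follows from two applications of the lemma, one elementary integral, and the regularity condition at the origin.
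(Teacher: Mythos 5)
Your proposal is correct and follows essentially the same route as the paper's proof: the key identity is exactly the paper's difference of equations~\eqref{eq:3-Mass-Equ-EinstEqu-2} and~\eqref{eq:3-Mass-Equ-EinstEqu-3} rewritten in terms of $\zeta=\e^{\nu/2}$ (your $h(r)$ is one half of the paper's quantity $\frac{1}{r}\nu'\e^{(\nu-\lambda)/2}$), and the subsequent steps --- monotonicity from $\partial_r(m/r^3)\leq0$, Schwarzschild matching at $r=R$, the bound $m(r)/r^3\geq M/R^3$ inside the integral, and positivity of $\e^{\nu(0)/2}$ --- coincide with the paper's. The closed-form evaluation of the integral and the final rearrangement also match.
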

\begin{proof}
	In our attempt to obtain an upper limit for the mass of a spherically symmetric star, we start by taking the difference of equation \eqref{eq:3-Mass-Equ-EinstEqu-2} and \eqref{eq:3-Mass-Equ-EinstEqu-3}, we obtain
	\begin{align}
		0 &= \nu'\frac{\e^{-\lambda}}{r} - \frac{1-\e^{-\lambda}}{r^2} - \frac{\e^{-\lambda}}{2}\left[\nu''+\left(\frac{\nu'}{2} + \frac{1}{r}\right)\left(\nu'-\lambda'\right) \right]\\
		&= - \frac{2m(r)}{r^3} + \frac{\lambda'\e^{-\lambda}}{2r} -  \frac{\e^{-\lambda}}{2}\left[\nu''+ \frac{\nu'^2}{2} - \frac{\nu'}{r} -\frac{\lambda'\nu'}{2}\right]\\
		&= r\frac{\partial}{\partial r}\left(\frac{m(r)}{r^3}\right) - \frac{\e^{-\lambda}}{2}\left[\nu''+ \frac{\nu'^2}{2} - \frac{\nu'}{r} -\frac{\lambda'\nu'}{2}\right]\\
		0 &= \frac{\partial}{\partial r}\left(\frac{m(r)}{r^3}\right) - \frac{\e^{-\lambda}}{2}\left[\frac{\nu''}{r}+ \frac{\nu'^2}{2r} - \frac{\nu'}{r^2} -\frac{\lambda'\nu'}{2r}\right]\\
		&= \frac{\partial}{\partial r}\left(\frac{m(r)}{r^3}\right) - \frac{1}{2}\e^{-\frac{\lambda+\nu}{2}}\frac{\partial}{\partial r}\left[\frac{1}{r}\nu'\e^{\frac{\nu-\lambda}{2}}\right].
	\end{align}
	Since $\partial_r\rho\leq0$, also the average density $m(r)/r^3$ decreases with $r$.
	Thus we obtain
	\begin{equation}
		\frac{\partial}{\partial r}\left[\frac{1}{r}\nu'\exp\left(\frac{\nu-\lambda}{2}\right) \right] \leq 0.
	\end{equation}
	We integrate this equation from $R$ to $r<R$
	\begin{equation}
		\frac{\nu'}{r}\exp\left(\frac{\nu-\lambda}{2}\right)\geq\frac{2\nu'(R)}{R}\e^{-\frac{1}{2}\lambda(R)}\left.\frac{\partial}{\partial r}\e^{\frac{\nu}{2}}\right|_R
	\end{equation}
	and use the Schwarzschild solution at $r=R$ for $\e^\lambda$ and $\e^\nu$. 
	This is justified since we assumed $\rho(r)=0$ for $r>R$ and thus we need to recover the vacuum solution for a spherically symmetric object which is given by the Schwarzschild solution. 
	By continuity of the metric on every point of space, we can match 
	\begin{equation}
		\left.\e^{-\lambda(r)}\right|_R=\left[1-\frac{2M}{r}\right]_R=\left.\e^{\nu(r)}\right|_R
		\label{eq:3-Mass-Equ-MassLimitApprox1}
	\end{equation}
	and with the explicit solution for $\e^{-\lambda}$, we obtain
	\begin{equation}
		\left.\frac{2m(r)}{r}\right|_R = \frac{2M}{R}.
	\end{equation}
	When plugging this into equation~\eqref{eq:3-Mass-Equ-MassLimitApprox1}, the result is
	\begin{equation}
		\frac{\nu'}{2r}\exp\left(\frac{\nu-\lambda}{2}\right)\geq\frac{(1-2M/R)^{1/2}}{R}\left.\frac{\partial}{\partial r}\left(1-\frac{2M}{r}\right)^{1/2}\right|_{r=R} = \frac{M}{R^3}.
	\end{equation}
	Now we multiply by $r\exp(\lambda/2)$ and use the explicit solution for $\e^\lambda$
	\begin{equation}
		\frac{\partial}{\partial r}\left(\e^{\frac{\nu}{2}}\right) \geq \frac{M}{R^3}r\e^\frac{\lambda}{2} = \frac{M}{R^3}\left(r-2m(r)\right)
	\end{equation}
	and integrate again this time from $0$ to $R$
	\begin{equation}
		\e^{\nu(0)/2}\leq\left(1-\frac{2M}{R}\right)^{1/2}-\frac{M}{R^3}\int\limits_0^R\left[1-\frac{2m(r)}{r} \right]^{-1/2}r\diff r.
		\label{eq:3-Mass-Equ-MassLimitNuApprox}
	\end{equation}
	As we have already noted, the average density $m(r)/r^3$ decreases, meaning explicitly $m(r)/r^3\geq M/R^3$ and thus the integral with the previous equation can be written as
	\begin{align}
		\e^{\nu(0)/2}&\leq\left(1-\frac{2M}{R}\right)^{1/2}+\frac{1}{2}\left.\left[1-\frac{2Mr^2}{R^3} \right]^{1/2}\right|^{r=R}_{r=0}\\
		&= \frac{3}{2}\left(1-\frac{2M}{R}\right)^{1/2} - \frac{1}{2}.
	\end{align}
	The simple fact that $\e^{\nu(0)/2}>0$ then implies
	\begin{equation}
		\left(1-\frac{2M}{R}\right)^{1/2} > \frac{1}{3}
	\end{equation}
	which is equivalent to
	\begin{equation}
		M< \frac{4R}{9}.
	\end{equation}
\end{proof}\noindent
This shows that the mass of star has an upper limit under the assumptions $\rho\geq0$, $\partial_r\rho\leq0$ and $\rho(R)=0$ for some $R\geq0$.
In particular by restraining the total radius of the stellar object one can immediately find mass limits.
\newpage
\begin{section}{Numerical Solutions}
\begin{subsection}{Comparing \texorpdfstring{\acrNoHyperlink{\acs}{TOV}}{TOV} and \texorpdfstring{\acrNoHyperlink{\acs}{LE}}{LE} results with a polytropic \texorpdfstring{\acrNoHyperlink{\acs}{eos}}{EoS}}
\label{subsec:4-NumSol-Sec-Comp-TOV-LE}
This section presents numerical solutions of the \ac{TOV} equation
\begin{align}
	\frac{\partial m}{\partial r} &= 4\pi\rho r^2\label{eq:4-NumSol-Equ-TOVEqBasic1}\\
	\frac{\partial p}{\partial r} &=-\frac{m\rho}{r^2}\left(1+\frac{p}{\rho}\right)\left(\frac{4\pi r^3 p}{m}+1\right)\left(1-\frac{2m}{r}\right)^{-1}
	\label{eq:4-NumSol-Equ-TOVEqBasic2}
\end{align}
as derived previously in section~\ref{subsec:3-Mass-Sec-TOVDerivation}.
Results in this section are implemented and carried out by the author.
To obtain numerical solvability an \ac{eos} in the form $\rho(r,p)$ is supplied.
In Figure~\ref{fig:4-NumSol-Plt-TOVEqEasyEOS}, a plot of such a solution is presented.
The density $\rho$ is derived via the equation~\eqref{eq:4-NumSol-Equ-TOVEqBasic1} and the integration is done with a 4th order Runge-Kutta Method~\cite{rungeUeberNumerischeAufloesung1895, schlömilch1901zeitschrift, h.SimplifiedDerivationAnalysis2010}.
The integration is stopped once the pressure reaches values $p\leq0$.
The initial value of~\eqref{eq:4-NumSol-Equ-TOVEqBasic1} is $\partial_r m(r=0)=0$.
For equation~\eqref{eq:4-NumSol-Equ-TOVEqBasic1}, the initial value can be calculated when applying L'Hôpital's rule and combining them to obtain
\begin{alignat}{5}
	\lim\limits_{r\rightarrow0}\frac{m}{r} &= \lim\limits_{r\rightarrow0}\frac{\partial m}{\partial r} &&=\lim\limits_{r\rightarrow0}\frac{4\pi\rho r^2}{1} &&= 0\\
	\lim\limits_{r\rightarrow0}\frac{m}{r^2} &= \lim\limits_{r\rightarrow0}\frac{1}{2r}\frac{\partial m}{\partial r}  &&= \lim\limits_{r\rightarrow0}\frac{4\pi\rho r^2}{2r} &&= 0\\
	\lim\limits_{r\rightarrow0}\frac{m}{r^3} &= \lim\limits_{r\rightarrow0}\frac{1}{3r^2}\frac{\partial m}{\partial r} &&=\lim\limits_{r\rightarrow0}\frac{4\pi\rho r^2}{3r^2} &&= \frac{4\pi\rho_0}{3}\\
	\lim\limits_{r\rightarrow0}\frac{\partial p}{\partial r} &= 0
\end{alignat}
Explicit code can be found in~\cite{pleyerGithubRepositoryJonas2021}.
The Lane-Emden equation was obtained in section~\ref{subsec:3-Mass-Sec-LEDerivation} as the non-relativistic limit of the \ac{TOV} equation by neglecting terms of order $1/c^2$ and higher and setting $G=c=1$.
To obtain numerical results for the \ac{LE} equation as given in equation~\eqref{eq:3-Mass-Equ-Lane-Emden-Eq}, a substitution of the form $d\theta/d\xi=\chi$ was used.
\begin{equation}
	\begin{aligned}
		\frac{d\theta}{d\xi} &= \chi &\hspace{1cm} \left.\frac{d\theta}{d\xi}\right|_{\xi=0} &= 0\\
		\frac{d\chi}{d\xi} &= -\frac{2}{\xi}\chi-\theta^n & \left.\frac{d\chi}{d\xi}\right|_{\xi=0} &= -\theta^n
		\label{eq:4-NumSol-Equ-LE-Substitution}
	\end{aligned}
\end{equation}
The initial value for $d\chi/d\xi$ can be calculated with L'Hôpital's rule.
With the conversion factor $\kappa$ derived in section~\ref{subsec:3-Mass-Sec-LEDerivation}, \ac{TOV} and \ac{LE} results can be compared.
Figure~\ref{fig:4-NumSol-Plt-TOVEqEasyEOS} shows the solution of both equations for the parameters of Table~\ref{tab:4-NumSol-Tbl-TOVParameters}.
Additionally conversion equations to compare \ac{TOV} and \ac{LE} results are displayed.
Since the mass reads
\[
	m(r) = \int\limits_0^r 4\pi r'^2\rho(r')dr',
\]
we expect $\partial m/\partial r(R)=0$ if $p(R)=0$ when choosing a polytropic equation of state with $\gamma>0$.
The plot in Figure~\ref{fig:4-NumSol-Plt-TOVEqEasyEOS} shows this expected behaviour for the Lane Emden equation at $r\approx2.31$ and has the same behaviour for the \ac{TOV} results at $r\approx6.80$.\footnote{For the purpose of nicely displaying the calculated result, the plot only shows result up to $r=2.5$}
\begin{table}
	{\renewcommand{\arraystretch}{1.2}
	\centering
	\begin{tabular}{@{}llcll@{}}
		\toprule
		\multicolumn{2}{c}{\textbf{TOV}} & \phantom{b} &\multicolumn{2}{c}{\textbf{LE}}\\
		\cmidrule{1-2} \cmidrule{4-5}
		EOS & $\rho=Ap^{1/\gamma}$ && EOS & $p=K\rho^{\gamma}$\\
		$A$ & $2$ & & \\
		$\gamma=1+\frac{1}{n}$ & $4/3$ && $n=1/(\gamma-1)$ & $3$\\
		$p_0$ & $0.5$ && $\theta_0$ & $1$\\
		$m_0$ & $0$ && $d\theta_0$ & $0$\\
		$dr$ & $0.01$ && $d\xi=dr/\kappa$ & $\approx0.0298$\\
		\cmidrule{1-2} \cmidrule{4-5}
		$\rho_0=Ap_0^{1/\gamma}$ & $2(2)^{\frac{4}{3}}\approx1.1892$ && $\lambda=\rho_0$ & $2(2)^{\frac{4}{3}}\approx1.1892$\\
		&&& $K=A^{-1/\gamma}$ & $2^{-3/4}\approx0.5946$\\
		&&& $\kappa^2=((n+1)K\lambda^{1/n-1})/(4\pi)$ & $\approx0.1125$\\
		\bottomrule
	\end{tabular}
	}
	\caption[Numerical Parameters for \acrNoHyperlink{\acs}{TOV} and LE equation]{Numerical Parameters for \ac{TOV} and \ac{LE} equation}
	\label{tab:4-NumSol-Tbl-TOVParameters}
	\small
	To compare results, values $\rho_0,K,\kappa$ are calculated from ones supplied to the solving routine.
\end{table}%
\begin{figure}[H]
	{\centering
	\import{pictures/4-NumericalSolutions/}{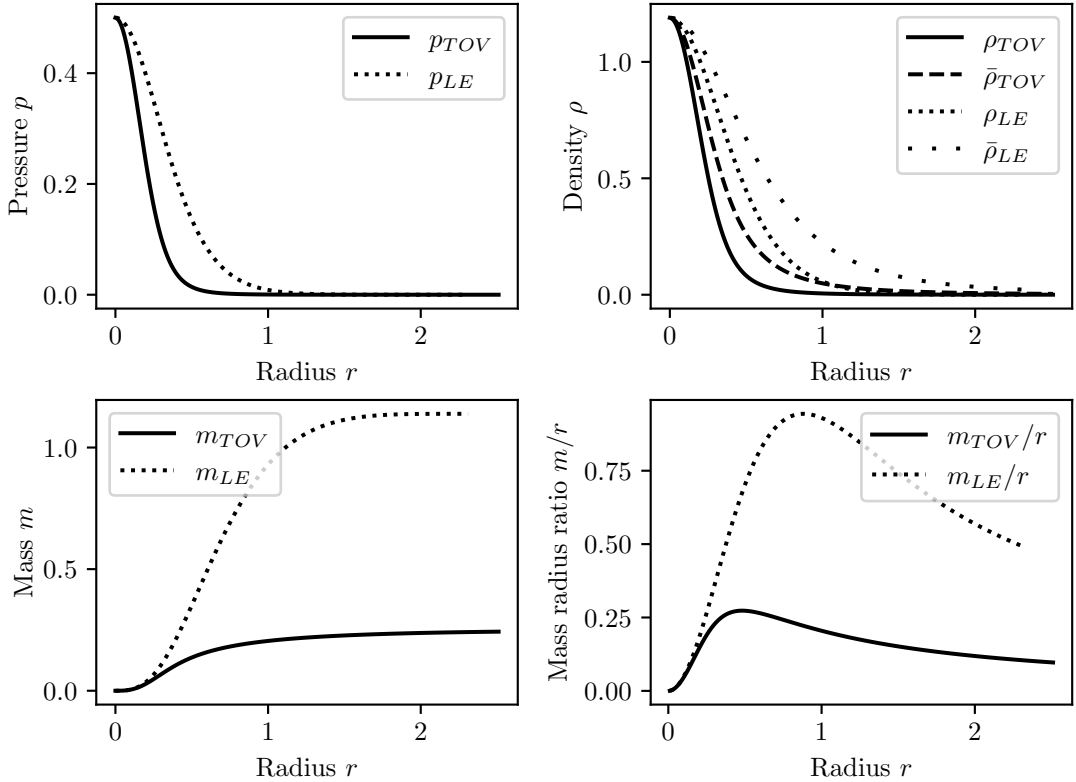}
	}%
	\caption[Comparison of the \acrNoHyperlink{\acs}{TOV} and \acrNoHyperlink{\acs}{LE} equation]{Comparison of the \ac{TOV} and \ac{LE} equation.}
	\label{fig:4-NumSol-Plt-TOVEqEasyEOS}
	\small
	The images show the plots for the parameters of Table~\ref{tab:4-NumSol-Tbl-TOVParameters}.
	The pressures and masses as presented in the first column are direct solutions of the \ac{TOV} or \ac{LE} equations.
	On the top right-hand side the densities calculated with the \ac{eos} and average densities $\bar{\rho}_i=(4\pi/3)^{-1}m_i/r^3$ are being compared.
\end{figure}
\end{subsection}
\begin{subsection}{Verifying the results}
\label{subsec:4-NumSol-Sec-Verifiying-the-results}
One can compare calculated \ac{LE} results with already known exact  solutions for certain exponents as given in table~\ref{fig:3-Mass-Tbl-LE-Exact-Results}.
The graphs of figure~\ref{fig:4-NumSol-Plt-ValidateLEResults} overall show good numerical agreement.
The integration stepsize for the left-hand side results of figure~\ref{fig:4-NumSol-Plt-ValidateLEResults} is $d\xi=0.03$.
Initial values at $r=0$ are identical since they were set to be.
The spike occurring afterwards can be explained by equations~\eqref{eq:4-NumSol-Equ-LE-Substitution}.
The initial value of all derivatives is identically $0$ which means no change in the values $\theta$ or $\chi$ for the initial step of the numerical integration.
On the other hand the exact results will show a change which explains the large discrepancy in the beginning.
Despite this behaviour we can see that the criterion for convergence is well fulfilled in the right-hand side of figure~\ref{fig:4-NumSol-Plt-ValidateLEResults}.
The plot displays nicely that by lowering the integration stepsize, the maximum difference in this interval decreases.
\begin{figure}[H]
	\centering
	\import{pictures/4-NumericalSolutions/}{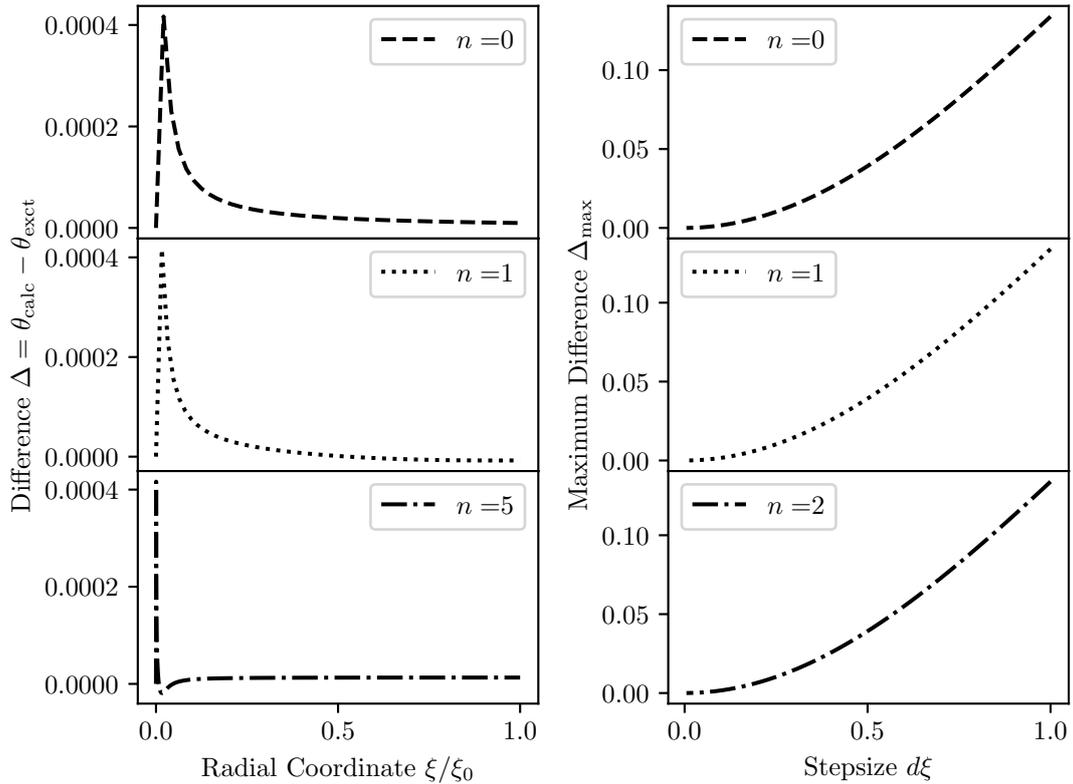}
	\caption{Validation of numerical LE results}
	The first two plots show absolute and relative difference of known exact and numerically calculated results while the third plot shows the behaviour of the maximum value of $\Delta$ as the stepsize decreases.
	\label{fig:4-NumSol-Plt-ValidateLEResults}
\end{figure}\noindent
%
\end{subsection}
\begin{subsection}{Relativistic \texorpdfstring{\acrNoHyperlink{\acs}{eos}}{EoS}}
\label{subsec:4-NumSol-Sec-RelEOS}
In the previous discussion, we relied on the \ac{eos} given in Table~\ref{tab:4-NumSol-Tbl-TOVParameters}.
This is a versatile assumption, but one could ask what would happen to a star in which the particles have no interaction but are close to relativistic speed.
The resulting \ac{eos} was calculated in the beginning (see equation~\eqref{eq:2-IntEner-FinalEOS}) although not written down explicitly.
Since explicit inversion of the given function is complicated, we rely on numerical methods for calculation.
Section~\ref{2-IntEner-SR-EOS-Derivation} showed that the function given in equation~\eqref{eq:2-IntEner-PressureAlpha}
\begin{equation}
	p(\alpha) = CNmc^2\frac{1}{K_2(\alpha)\alpha^2}\exp\left(-\alpha\frac{K_1(\alpha)+K_3(\alpha)}{2K_2(\alpha)}\right).
	\label{eq:4-NumSol-Pressure-Alpha-Dep}
\end{equation}
explicitly has a non-positive slope. 
When calculating an inverse function the values $C,N,m$ need to be chosen high enough in order to reach sufficiently high values for $p$.
By knowing the initial value $p_0$ and by the fact that the pressure of the \ac{TOV} solution~\eqref{eq:3-Mass-Equ-TOV-Eq-1} has a non-positive slope and with equation~\eqref{eq:2-IntEner-PressureAlpha-Limit}, it is sufficient to choose $Bp_0=CNmc^2>2/\e^2$.
On the other hand it is also important to choose high enough values for $\alpha$ for numerical inversion since otherwise functions can not be displayed if they reach small enough values for $p$.
The limit for $\alpha$ is chosen such that values as low as $0.001p_0$ can be evaluated.
Afterwards the function over the given interval is inverted by a polynomial fit and then used in equation~\eqref{eq:2-IntEner-DensityAlpha} to obtain the \ac{eos}.\\
Figure~\ref{fig:4-NumSol-Plt-RelEOS-TOV-Comparison} is an example which shows a clear difference between the polytropic and relativistic \ac{eos}.
The values used are $B=20$, $p_0=0.4$, $dr=0.001$, $n=1$.
The correct prefactor $A$ of the polytropic \ac{eos} is chosen after numerical calulation of the relativistic \ac{eos}.
It is then simply given by 
\begin{equation}
	A=\frac{\rho_\mathrm{rel}(p_0)}{p_0^{1/\gamma}}.
	\label{eq:4-NumSol-EOS-Factor-Explanation}
\end{equation}
This normalises both \acp{eos} to same inital pressures.
\begin{figure}[H]
	\centering
	\import{pictures/4-NumericalSolutions}{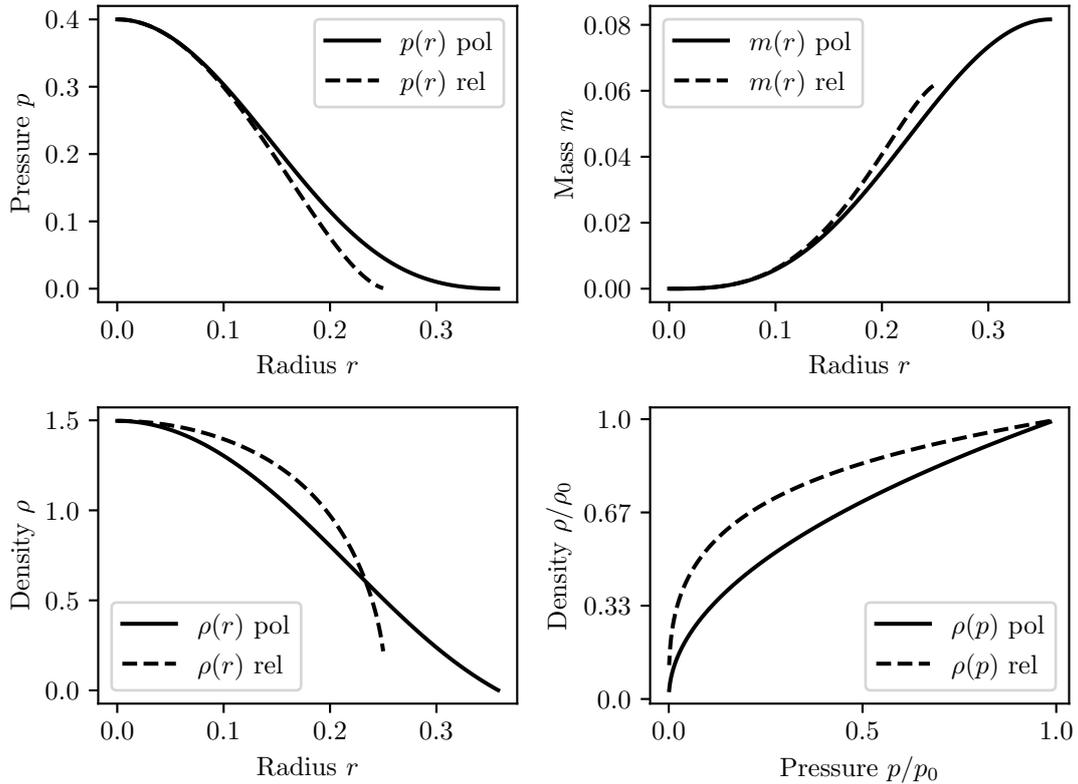}
	\caption[Comparison of TOV polytropic and relativistic EOS]{Comparison of TOV polytropic and relativistic \acs{eos}}
	\label{fig:4-NumSol-Plt-RelEOS-TOV-Comparison}
\end{figure}
\end{subsection}
\begin{subsection}{Zero Values of the \texorpdfstring{\acrNoHyperlink{\acs}{TOV}}{TOV} and \texorpdfstring{\acrNoHyperlink{\acs}{LE}}{LE} equation}
\label{subsec:4-NumSol-Sec-TOV-Exponents}
Having discussed individual solutions for the \ac{TOV} and \ac{LE} equation, this section now turns to analysing zero values of both equations.
We define a zero value as the first point in which the solution of our differential equation reaches value $0$, or more precisely where the pressure $p$ reaches zero.\footnote{Note that $p_0$ still denotes the initial value while $r_0$ and $\xi_0$ denote values where $p=0$.}
Even now the problem is not well posed since for different solving routines alternative transformations may be used.
This has to be kept in mind when comparing results. 
Apart from the previously defined exact solutions of Table~\ref{fig:3-Mass-Tbl-LE-Exact-Results} we will only use the parameter $r$ as defined by equation~\eqref{3-Mass-TOV-Eq} to display final results.
In order to solve the equations, a transformation in the form applied in section~\ref{subsec:3-Mass-Sec-LEDerivation} is used since it was hoped to increase the precision with which the zero value is determined.
This can be motivated by looking at plots for \ac{LE} solutions as in Figure~\ref{fig:3-Mass-Plt-LE-Exact-Results-Plots} and comparing them to \ac{TOV} results of Figure~\ref{fig:4-NumSol-Plt-TOVEqEasyEOS}.
With the \ac{eos} $\rho=Ap^{1/\gamma}$ and the new definition $\rho=\rho_0\theta^n$, the equations used are 
\begin{alignat}{3}
	\frac{\partial m}{\partial\xi} &= &&4\pi\kappa^3\rho\xi^2\\
	\frac{\partial\theta}{\partial\xi} &= -&&\frac{\left(1+K\rho_0^{1/n}\theta\right)\left(4\pi\xi^3\kappa^3 K\rho_0^{1+1/n}\theta^{n+1}+ m\right)}{\left((n+1)K\rho_0^{1/n}\kappa\xi^2\right)\left(1-\frac{2 m}{\kappa\xi}\right)}
	\label{eq:4-NumSol-Equ-TOV-Exponents-Transf-TOV}
\end{alignat}
where $\rho=\rho_0\theta^n$ and initial values $\partial m/\partial\xi=0$ and $\partial\theta/\partial\xi=0$ can again be calculated with L'Hospitals rule together with $\theta_0=1$ and $ m_0=0$.
When comparing equations~\eqref{eq:4-NumSol-Equ-TOV-Exponents-Transf-TOV} and~\eqref{3-Mass-TOV-Eq} one can immediately recognise the distinct terms.
When talking about zero values we stated earlier that transformations need be accounted when comparing results.
However, in this case since
\begin{equation}
	p=\frac{\rho_0^n\theta^n}{A^{1+1/n}}
	\label{eq:4-NumSol-Pressure-Theta-Relation}
\end{equation}
it is clear that zero values of $\theta$ and $p$ match under the same parameter $\xi$.
When transforming back by using $\kappa\xi=r$ as explained in section~\ref{subsec:3-Mass-Sec-LEDerivation} one should note that $\kappa$ depends on $n$ via
\begin{equation}
	4\pi\kappa^2=(n+1)K\rho_0^{1/n-1}
	\label{eq:4-NumSol-Conversion-Factor}
\end{equation}
and it will thus change values by more than constant scaling.
Figure~\ref{fig:4-NumSol-Plt-TOV-Exponents-Combo} shows zero values for solutions of the \ac{TOV} and \ac{LE} equations respectively with their dependence on $n$.
Since not only the parameters $A$ and $n$ in the \ac{eos} may alter results but also the initial value $p_0$ a family of curves was plotted.
The upper half of the plot displays a currently not explainable bump which is prominent in the \ac{TOV} $p_0=0.1$ case.
A similar but dampened behaviour can also be seen along the $p_0=1$ curve.
It is however still unclear if the \ac{TOV} solutions similarly to the \ac{LE} solutions blow up at the same combination of parameters.
In order to further analyse this property higher integration ranges for larger values of the polytropic index $n$ would be required.
This observation motivates a theorem that for every combination of parameters $A$ and $p_0$ there exists a value $n_0$ such that each solution of the TOV equation has no zero values for that particular combination of parameters $A,p_0$ if $n\geq n_0$.
The methods to prove this theorem will be discussed in the next section.\\
To obtain Figure~\ref{fig:4-NumSol-Plt-TOV-Exponents-Combo} the differential equations need to be solved multiple times for varying parameter combinations until the pressure reaches its zero value.
Numerical optimisations to speed up calculation times, and a short discussion on the methods used can be found in the appendix in section~\ref{subsec:99-App-Numerical-Optimisations}.
\begin{figure}[H]
	{\centering
	\import{pictures/4-NumericalSolutions/}{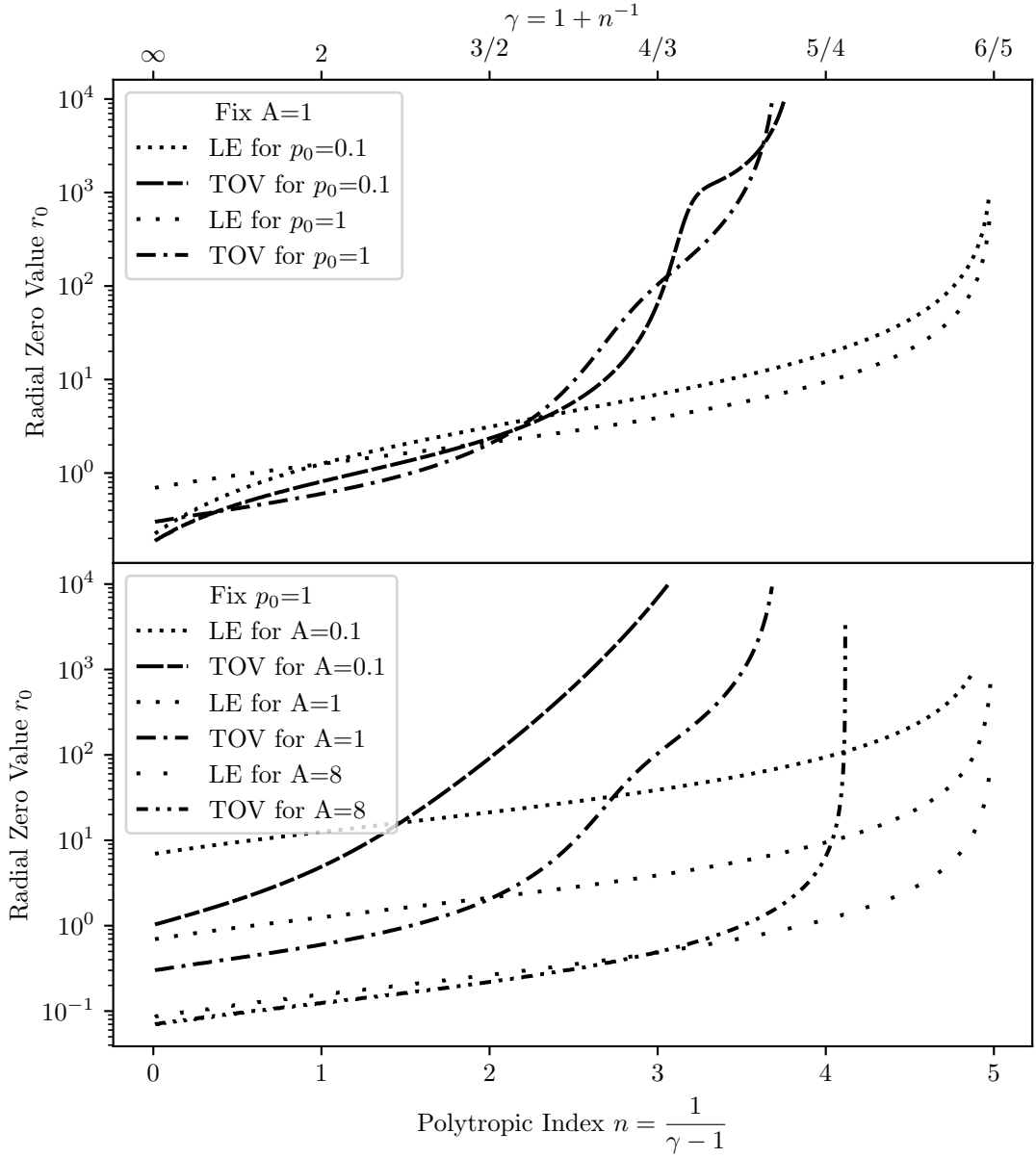}
	}
	\caption{Zero Values of TOV and LE equation}
	\label{fig:4-NumSol-Plt-TOV-Exponents-Combo}
	In the first plot results for $p_0=8$ have been omitted for better visual clarity.
	These behave similar to the $p_0=1$ results.
\end{figure}
\end{subsection}
\end{section}
\newpage
\section{To a General Zero Value Theorem}
\label{sec:5-zeroes}
The author will present proofs for uniqueness and existence for both equations.
Together with a result in~\cite{quittnerSuperlinearParabolicProblems2007}, a zero value theorem for the \ac{LE} equation is proved.
\subsection{\texorpdfstring{\acrNoHyperlink{\acs}{LE}}{LE} Equation}
\label{subsec:5-zeroes-le-equation}
Motivated by the preceding sections, the aim now is to prove the following theorem.
\begin{theorem}[\acrNoHyperlink{\acl}{LE} Finite Boundary]
	\label{5-Zeroes-Theo-Lane-EmdenFiniteBoundary}
	For every $0\leq n<5$, the generalised Lane-Emden equation
	\begin{equation}
		\frac{1}{\xi^{l-1}}\frac{d}{d\xi}\left(\xi^{l-1}\frac{d\theta}{d\xi}\right)+\theta^n=0
		\label{eq:5-Zeroes-Equ-LEeq}
	\end{equation}
	with $\theta_0=1$ and $l\geq2$ has a zero value for a finite $\xi_0$.
	To prove this theorem, we show the statements following in this section.
\end{theorem}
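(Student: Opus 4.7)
The natural decomposition into the announced sequence of sub-statements is: (i) local existence and uniqueness at the singular point $\xi=0$, (ii) monotone decay of $\theta$ while positive, (iii) extension to a maximal interval, and (iv) the core claim that this interval is finite. I would carry them out in this order.

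For step (i), I would recast \eqref{eq:5-Zeroes-Equ-LEeq} in integral form by multiplying by $\xi^{l-1}$ and integrating twice from the origin, using the regularity condition $\theta'(0)=0$ that follows from requiring a $C^2$ solution at $\xi=0$. This yields
$$
\theta(\xi) \;=\; 1 \;-\; \int_0^\xi s^{1-l}\int_0^s t^{\,l-1}\,\theta(t)^n\,dt\,ds,
$$
in which the factors $s^{1-l}$ and $t^{l-1}$ combine to remove the apparent singularity at the origin. A Banach fixed-point argument on a small interval $[0,\xi^*]$ in the complete metric space $\{\theta\in C([0,\xi^*]):\tfrac12\le\theta\le\tfrac32\}$ gives existence and uniqueness; the Lipschitz bound on $u\mapsto u^n$ is uniform on this range. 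Standard ODE continuation then extends $\theta$ to a maximal interval $[0,\xi_{\max})$ on which $\theta>0$.

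For step (ii), the same integral identity gives $\xi^{l-1}\theta'(\xi)=-\int_0^\xi t^{l-1}\theta(t)^n\,dt$, so $\theta'(\xi)<0$ throughout the positivity interval. Hence $\theta$ is strictly decreasing and bounded above by $1$ on $[0,\xi_{\max})$. Combined with the energy functional $E(\xi)=\tfrac12(\theta')^2+\tfrac{\theta^{n+1}}{n+1}$, which satisfies $E'(\xi)=-\tfrac{l-1}{\xi}(\theta')^2\le 0$ and hence $E(\xi)\le\tfrac{1}{n+1}$, one also obtains an \emph{a priori} bound on $\theta'$.

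The crux is step (iv): showing that the positivity interval cannot be all of $[0,\infty)$. This is exactly where the subcritical restriction $n<5$ enters, since $5=(l+2)/(l-2)$ with $l=3$ is the Sobolev critical exponent for $-\Delta u=u^n$, and the non-existence of positive bounded entire radial solutions in the subcritical regime is a classical result I would quote from Quittner--Souplet rather than reprove. A self-contained alternative is a Pohozaev-type identity obtained by multiplying \eqref{eq:5-Zeroes-Equ-LEeq} by $\xi^l\theta'$ and by $\xi^{l-1}\theta$, integrating over $[0,\xi]$, and combining: one arrives at a boundary identity whose overall sign is governed by $\tfrac{l-2}{2}-\tfrac{l}{n+1}$, and for $n$ strictly below the critical exponent this sign is incompatible with a positive, bounded, non-constant solution persisting to infinity. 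Either route forces $\xi_{\max}<\infty$ and, combined with the monotone decay of $\theta$, yields a finite first zero $\xi_0$.

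The main obstacle is step (iv). Steps (i)--(iii) are routine once the singularity at $\xi=0$ is handled carefully, but the finite-extinction claim is a genuine PDE non-existence statement in disguise: without the Pohozaev identity (or, equivalently, a phase-plane analysis after the Emden--Fowler transformation $\theta(\xi)=\xi^{-2/(n-1)}v(\log\xi)$ turning the equation into an autonomous planar system) there is no obvious reason why $\theta$ could not asymptote to a positive limit or oscillate towards zero without touching it. This is why the author plans to outsource the decisive step to a cited result.
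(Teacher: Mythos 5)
Your decomposition matches the paper's exactly: local existence via a fixed-point argument on the twice-integrated form of the equation, monotone decay from the first integral, continuation of positive solutions until they either hit zero or become global, and an appeal to Quittner--Souplet for the nonexistence of positive global solutions in the subcritical range. The one substantive difference is the fixed-point machinery: you use the Banach contraction principle on the complete metric space $\{\tfrac12\le\theta\le\tfrac32\}$, whereas the paper builds a compact convex set $C^*$ of monotone, uniformly Lipschitz functions bounded below by $\theta_0/2$ and invokes the Schauder fixed-point theorem (proving completeness, convexity and compactness via Arzel\`a--Ascoli along the way). Your route is leaner and buys uniqueness for free, which Schauder does not deliver even though the paper's local-existence lemma claims it; the paper's route avoids having to verify a contraction constant but pays for it with three auxiliary lemmata. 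Your offer of a self-contained Pohozaev identity or Emden--Fowler phase-plane analysis for the finiteness step is a genuine alternative the paper does not pursue -- the paper instead reproduces the Quittner--Souplet argument (Bochner identity plus integral gradient estimates with test functions) in its appendix. One caveat worth recording, which your remark that $5=(l+2)/(l-2)$ only for $l=3$ already hints at: the cited nonexistence result covers $1<n<(l+2)/(l-2)$, so the uniform bound $n<5$ in the theorem statement is tied to $l=3$, and the range $0\le n\le1$ is not covered by the citation in either your proposal or the paper and would need the explicit solutions or a separate elementary argument.
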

\begin{lemma}[\acrNoHyperlink{\acs}{LE} Local Existence]
	\label{5-Zeroes-Lem-Lane-Emden-Local-Existence}
	For every $n\geq0$, the Lane-Emden equation~\ref{eq:5-Zeroes-Equ-LEeq} with initial values defined as above has a unique solution in an $\epsilon$ Ball around $\xi=0$.
\end{lemma}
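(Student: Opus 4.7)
The plan is to recast the singular ODE as an integral equation and apply the Banach fixed-point theorem on a small ball of continuous functions around the constant function~$1$. The apparent obstacle is the singular factor $\xi^{1-l}$ at the origin, but it will be tamed by the double integration.

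First, I would multiply the equation by $\xi^{l-1}$ and integrate from $0$ to $\xi$. Using the condition $\theta'(0)=0$ (which is forced by requiring $\xi^{l-1}\theta'(\xi)$ to admit a differentiable extension to $0$, or equivalently by continuity of $\theta'$), this yields
\begin{equation*}
	\xi^{l-1}\theta'(\xi) = -\int_0^\xi s^{l-1}\theta(s)^n\diff s.
\end{equation*}
Dividing by $\xi^{l-1}$ and integrating once more from $0$ to $\xi$, and using $\theta(0)=1$, I obtain the Volterra-type integral equation
\begin{equation*}
	\theta(\xi) = 1 - \int_0^\xi s^{1-l}\int_0^s t^{l-1}\theta(t)^n \diff t\diff s =: T[\theta](\xi).
\end{equation*}
Any continuous solution of this fixed-point equation is a classical solution of~\eqref{eq:5-Zeroes-Equ-LEeq}; conversely every $C^2$ solution satisfies it.

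Next I would work on the closed ball $X_\epsilon = \{\theta\in C([0,\epsilon]):\|\theta-1\|_\infty\leq 1/2\}$ equipped with the sup norm. For $\theta\in X_\epsilon$ we have $1/2\leq\theta\leq 3/2$, hence $|\theta^n|\leq (3/2)^n =: M$. The inner integral is bounded by $Ms^l/l$, so
\begin{equation*}
	|T[\theta](\xi)-1| \leq \int_0^\xi s^{1-l}\cdot\frac{Ms^l}{l}\diff s = \frac{M\xi^2}{2l},
\end{equation*}
which is $\leq 1/2$ whenever $\epsilon^2\leq l/M$, so $T$ maps $X_\epsilon$ into itself. For contraction, the key point is that on the interval $[1/2,3/2]$ the map $x\mapsto x^n$ is Lipschitz with constant $L_n := n(3/2)^{n-1}$ for $n\geq 1$ and $L_n := n(1/2)^{n-1}$ for $0\leq n<1$ (both finite because we stay bounded away from~$0$). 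The same computation as above then gives
\begin{equation*}
	\|T[\theta_1]-T[\theta_2]\|_\infty \leq \frac{L_n\epsilon^2}{2l}\|\theta_1-\theta_2\|_\infty,
\end{equation*}
so shrinking $\epsilon$ further so that $L_n\epsilon^2/(2l)<1$ makes $T$ a strict contraction on the complete metric space $X_\epsilon$. The Banach fixed-point theorem then delivers a unique $\theta\in X_\epsilon$ with $T[\theta]=\theta$; a routine bootstrap (the right-hand side is $C^2$ in $\xi$ once $\theta$ is continuous) shows that this $\theta$ is $C^2$ on $[0,\epsilon]$ and solves~\eqref{eq:5-Zeroes-Equ-LEeq} with the prescribed initial data.

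The main obstacle is ensuring the double-integral operator absorbs the singular factor $s^{1-l}$ and that the nonlinearity $\theta^n$ is Lipschitz for general $n\geq 0$; both issues disappear once we localise to $X_\epsilon$, which keeps $\theta$ bounded away from~$0$ so the non-integer powers behave smoothly, and replaces the singular coefficient with the harmless factor $\xi^2/(2l)$ after two integrations.
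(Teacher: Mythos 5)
Your proof is correct, but it takes a genuinely different route from the paper's. Both arguments recast the problem as the same Volterra fixed-point equation $T(f)(\xi)=\theta_0-\int_0^\xi\int_0^s (t/s)^{l-1}f^n(t)\,\diff t\,\diff s$, but the paper then builds a compact convex subset $C^*$ of candidate solutions (monotone decreasing, bounded below by $\theta_0/2$, uniformly Lipschitz with a constant independent of the function), checks completeness, convexity and compactness via Arzel\`a--Ascoli, and invokes the \emph{Schauder} fixed-point theorem, for which only continuity of $T$ is needed. You instead work on the plain closed ball $\lVert\theta-1\rVert_\infty\le 1/2$ and show $T$ is a strict \emph{contraction} there, exploiting that $x\mapsto x^n$ is Lipschitz on $[1/2,3/2]$, so Banach's theorem applies. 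Your route is more elementary (no compactness machinery) and, importantly, it actually delivers the uniqueness asserted in the lemma: Schauder guarantees only the existence of a fixed point, so the paper's proof as written leaves the word ``unique'' in the statement unjustified, whereas your contraction argument settles it. The only thing the Schauder route would buy is applicability to nonlinearities that fail to be Lipschitz, but since both arguments confine $\theta$ to a region bounded away from zero, where $x\mapsto x^n$ is smooth, that extra generality is not needed here. One minor point worth stating explicitly: Banach gives uniqueness within $X_\epsilon$ only; to exclude other solutions one observes that any continuous solution with $\theta(0)=1$ lies in $X_{\epsilon'}$ for sufficiently small $\epsilon'$, which is a standard continuity step.
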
\noindent
The procedure for proving existence is similar for both equations.
In a similar approach given by the author in~\cite[p.~50]{quittnerSuperlinearParabolicProblems2007}, we want to use the Schauder fixed-point theorem~\cite{schauderFixpunktsatzFunktionalraumen1930}, which is a variation the Banach fixed-point theorem to show that over a suitably chosen space of functions we obtain a unique solution to the differential equation.
The statement as formulated in~\cite{minazzoTheoremesPointFixe2007} reads
\begin{theorem}
	Every continuous function from a convex compact subset K of a Banach space to K itself has a fixed point.
\end{theorem}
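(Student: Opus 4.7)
The plan is to reduce the infinite-dimensional problem to the classical finite-dimensional Brouwer fixed-point theorem, which I will assume as a prerequisite (its proof belongs to algebraic topology and uses, e.g., simplicial approximation or the non-retraction of the ball onto its boundary). The bridge between the two statements is the compactness of $K$, which produces finite $\varepsilon$-nets and thereby allows continuous finite-dimensional approximations of the map.

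First I would construct, for each $n \in \mathbb{N}$, a \emph{Schauder projection}. Using compactness, choose a finite $(1/n)$-net $\{x_1, \ldots, x_{k_n}\} \subset K$. Define the continuous weights
\begin{equation*}
    \mu_i(x) = \max\!\left(0,\ \tfrac{1}{n} - \|x - x_i\|\right), \qquad \lambda_i(x) = \frac{\mu_i(x)}{\sum_{j=1}^{k_n} \mu_j(x)},
\end{equation*}
which are well defined because the $x_i$ form a $(1/n)$-net, so some $\mu_j(x) > 0$ for every $x \in K$. Set $P_n(x) = \sum_{i=1}^{k_n} \lambda_i(x)\, x_i$. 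Then $P_n \colon K \to K_n$ is continuous, where $K_n := \mathrm{conv}(x_1,\ldots,x_{k_n})$; note $K_n \subset K$ since $K$ is convex. A short triangle inequality argument using that $\lambda_i(x) = 0$ whenever $\|x - x_i\| \geq 1/n$ gives the uniform bound $\|P_n(x) - x\| \leq 1/n$ for all $x \in K$.

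Next I would apply Brouwer to the composition $g_n := P_n \circ f \colon K_n \to K_n$. The set $K_n$ is a convex compact subset of the finite-dimensional affine span of $x_1,\ldots,x_{k_n}$, and $g_n$ is continuous as a composition of continuous maps, so Brouwer's theorem yields $y_n \in K_n$ with $y_n = P_n(f(y_n))$. Consequently
\begin{equation*}
    \|y_n - f(y_n)\| = \|P_n(f(y_n)) - f(y_n)\| \leq \frac{1}{n}.
\end{equation*}

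Finally I would pass to the limit. Since $K$ is compact, the sequence $(y_n) \subset K$ admits a subsequence $y_{n_k} \to y^\ast \in K$. Continuity of $f$ gives $f(y_{n_k}) \to f(y^\ast)$, and combining this with $\|y_{n_k} - f(y_{n_k})\| \leq 1/n_k \to 0$ forces $y^\ast = f(y^\ast)$. Hence $y^\ast$ is the desired fixed point.

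The main obstacle is the construction and verification of the Schauder projection: one must check simultaneously that $P_n$ is well defined (no division by zero), continuous, mapping into a convex compact finite-dimensional set lying inside $K$, and satisfies the quantitative approximation $\|P_n(x)-x\| \leq 1/n$. Everything else is either assumed (Brouwer) or follows routinely from compactness and continuity.
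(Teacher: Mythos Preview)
Your argument is the standard and correct proof of the Schauder fixed-point theorem: build Schauder projections from finite $\varepsilon$-nets, apply Brouwer on the finite-dimensional convex hull, and pass to the limit using compactness. All the verifications you flag (well-definedness of the partition of unity, $P_n(K)\subset K_n\subset K$, the bound $\|P_n(x)-x\|\le 1/n$, and the limiting step) are routine and go through as you describe.

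However, there is nothing in the paper to compare against. The paper does not prove this theorem at all: it merely \emph{quotes} it as a known black box, with citations to Schauder's original 1930 paper and to \cite{minazzoTheoremesPointFixe2007}, and then uses it as a tool to establish local existence for the Lane--Emden and TOV equations. So your proposal is not an alternative to the paper's proof --- it is a proof the paper deliberately omits. If the intent was to supply what the paper takes for granted, you have done so correctly; if the intent was to reconstruct the paper's own argument, be aware that no such argument is present.
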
\noindent
In order to follow this approach, we examine some properties that solutions of the \ac{LE} differential equation will have.
\begin{lemma}
	\label{5-Zeroes-Lem-LE-Conditions}
	A positive solution $\theta$ of the \ac{LE} equation with $n\geq0$ on a suitably chosen interval $[0,\lambda]$ obeys the conditions
	\begin{align}
		\theta(0) 					&= \theta_0>0\label{5-Zeroes-Equ-LE-Conditions-Initial-1}\\
		\left.\frac{d\theta}{d\xi}\right|_{\xi=0} &= 0 \label{5-Zeroes-Equ-LE-Conditions-Initial-2}\\
		x\leq y 					&\Rightarrow \theta(y)\leq\theta(x)\label{5-Zeroes-Equ-LE-Conditions-3}\\
		\theta_0/2					&\leq \theta \label{5-Zeroes-Equ-LE-Conditions-4}\\
		|\theta(x)-\theta(y)|		&\leq L|x-y|\label{5-Zeroes-Equ-LE-Conditions-5}
	\end{align}
	where $L$ is independent of $\theta$.
	We will call $C^*\subseteq C^0$ the space of all continuous functions $f:[0,\lambda]\rightarrow\R$ that satisfy those conditions.
\end{lemma}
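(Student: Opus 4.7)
The plan is to derive all five properties directly from the Lane--Emden equation written in divergence form, using the initial data to absorb the constants of integration, and then to pick $\lambda$ at the very end to be small enough that the a priori estimates hold simultaneously and uniformly in the unknown $\theta$.

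The first two conditions are simply the prescribed initial data and need no argument. For the monotonicity statement, rewrite the equation as
\[
\frac{d}{d\xi}\bigl(\xi^{l-1}\theta'\bigr) \;=\; -\xi^{l-1}\theta^{n}
\]
and integrate from $0$ to $\xi$; the condition $\theta'(0)=0$ kills the boundary term and gives
\[
\xi^{l-1}\theta'(\xi) \;=\; -\int_{0}^{\xi} s^{l-1}\theta(s)^{n}\, ds \;\leq\; 0
\]
as long as $\theta$ remains positive, since the integrand is non-negative. Dividing by $\xi^{l-1}>0$ yields $\theta'\leq 0$ and in particular $\theta(\xi)\leq \theta_{0}$ throughout the interval of positivity.

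For the Lipschitz property, combine the integral identity above with the upper bound $\theta\leq \theta_{0}$ just established to obtain
\[
|\theta'(\xi)| \;\leq\; \frac{1}{\xi^{l-1}}\int_{0}^{\xi} s^{l-1}\theta_{0}^{\,n}\, ds \;=\; \frac{\theta_{0}^{\,n}\,\xi}{l} \;\leq\; \frac{\theta_{0}^{\,n}\,\lambda}{l},
\]
so that $L := \theta_{0}^{\,n}\lambda/l$ serves as a Lipschitz constant depending only on the fixed data $\theta_{0},n,l,\lambda$ and not on the particular solution $\theta$. Integrating this pointwise derivative bound once more gives
\[
\theta_{0}-\theta(\xi) \;=\; -\int_{0}^{\xi}\theta'(s)\, ds \;\leq\; \frac{\theta_{0}^{\,n}\,\xi^{2}}{2l},
\]
so choosing $\lambda$ to satisfy $\lambda \leq \sqrt{l\,\theta_{0}^{\,1-n}}$ forces $\theta(\xi)\geq \theta_{0}/2$ throughout $[0,\lambda]$, which is the remaining inequality.

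The main obstacle is a mild circularity: monotonicity, and hence the upper bound $\theta\leq \theta_{0}$, was used to derive the derivative estimate on which the lower bound $\theta\geq \theta_{0}/2$ in turn rests, while both bounds were needed to guarantee positivity of $\theta$ in the first place. The cure is a standard continuity/bootstrap argument. By continuity the solution stays arbitrarily close to $\theta_{0}$ on some small interval around $0$, where positivity and hence monotonicity hold; one then extends the estimate up to the first $\xi^{\ast}\leq \lambda$ at which $\theta$ would leave $[\theta_{0}/2,\theta_{0}]$, and uses the explicit bound $\theta(\xi)\geq \theta_{0} - \theta_{0}^{\,n}\xi^{2}/(2l)$ to conclude that no such $\xi^{\ast}$ exists once $\lambda$ is chosen as above. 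This simultaneously closes the argument and fixes the constant $\lambda$ referred to in the statement, producing the space $C^{*}$ that will serve as the convex, bounded, equicontinuous set in the Schauder argument that follows.
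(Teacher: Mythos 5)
Your proposal is correct and follows essentially the same route as the paper: integrate the equation once to get monotonicity from positivity, integrate twice to get the bound $\theta_0-\theta(\xi)\leq\theta_0^n\xi^2/(2l)$, choose $\lambda$ small to secure $\theta\geq\theta_0/2$, and read off the uniform Lipschitz constant $L=\theta_0^n\lambda/l$ from the same double integral. Your explicit continuity/bootstrap remark closing the apparent circularity between positivity, the upper bound $\theta\leq\theta_0$, and the lower bound is a point the paper passes over silently, and is a welcome addition rather than a deviation.
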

\begin{proof}
	The first and second initial conditions of equations~\eqref{5-Zeroes-Equ-LE-Conditions-Initial-1} and~\eqref{5-Zeroes-Equ-LE-Conditions-Initial-2} are true by assumption.
	The third condition~\eqref{5-Zeroes-Equ-LE-Conditions-3} can be obtained by integrating the \ac{LE} equation once
	\begin{equation}
		\frac{d\theta}{d\xi} = -\int\limits_0^\xi \frac{t^{l-1}}{\xi^{l-1}}\theta^n \diff t
		\label{eq:5-Zeroes-LE-Equation-Rewritten-Integration-1}
	\end{equation}
	where we not yet needed any of the above criteria except positivity of $\theta$.
	When integrating equation~\eqref{eq:5-Zeroes-LE-Equation-Rewritten-Integration-1} once more and using the initial value of condition~\eqref{5-Zeroes-Equ-LE-Conditions-Initial-1}, we obtain
	\begin{equation}
		\theta(\xi) = \theta_0 -\int\limits_0^\xi\int\limits_0^s \left(\frac{t}{s}\right)^{l-1}\theta^n(t) \diff t \diff s.
		\label{eq:5-Zeroes-LE-Equation-Rewritten-Integration-2}
	\end{equation}
	We combine this statement with the previously shown condition of equation~\eqref{5-Zeroes-Equ-LE-Conditions-3}.
	Then it is clear that
	\begin{equation}
		\int\limits_0^\xi\int\limits_0^s \left(\frac{t}{s}\right)^{l-1}\theta^n(t) \diff t \diff s \leq \theta_0^n\int\limits_0^\xi\int\limits_0^s \left(\frac{t}{s}\right)^{l-1} \diff t \diff s \leq \theta_0^n\frac{\xi^2}{2l}
		\label{eq:5-Zeroes-LE-Equation-Rewritten-Inequality}
	\end{equation}
	and thus $\xi\leq\lambda$ can be chosen suitably small to satisfy condition~\eqref{5-Zeroes-Equ-LE-Conditions-4}.
	For the last condition we derive from equation~\eqref{eq:5-Zeroes-LE-Equation-Rewritten-Integration-2} that
	\begin{equation}
		|\theta(x)-\theta(y)| = \left|\int\limits_x^y\int\limits_0^s \left(\frac{t}{s}\right)^{l-1}\theta^n(t) \diff t \diff s\right| \leq \theta^n_0\frac{|x^2-y^2|}{2l} \leq \theta^n_0\frac{\lambda}{l}|x-y|
		\label{eq:5-Zeroes-LE-Equation-Rewritten-Inequality-2}
	\end{equation}
\end{proof}\noindent
The now introduced function space $C^*$ contains all functions that are possible solutions for the \ac{LE} equation.
In order to use the Schauder fixed-point theorem, we have to show that $C^*$ is compact and convex which we will show in the next two lemmata.
\begin{lemma}
	The space $C^*$ defined in lemma~\ref{5-Zeroes-Lem-LE-Conditions} with the norm $||\cdot||_\infty$ is complete and convex.
	\label{5-Zeroes-Lem-LE-Convex-Complete}
\end{lemma}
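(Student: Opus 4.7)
The plan is to dispatch convexity by term-by-term verification and handle completeness by showing that $C^*$ is closed inside the Banach space $(C^0([0,\lambda]), ||\cdot||_\infty)$, which is itself complete.

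For convexity, I would fix $\theta_1, \theta_2 \in C^*$ and $t \in [0,1]$, set $\theta := t\theta_1 + (1-t)\theta_2$, and check the five defining conditions in order. Conditions~\eqref{5-Zeroes-Equ-LE-Conditions-Initial-1} and~\eqref{5-Zeroes-Equ-LE-Conditions-Initial-2} transfer to $\theta$ because point evaluation and differentiation at a single point are linear operations. Monotonicity~\eqref{5-Zeroes-Equ-LE-Conditions-3} survives because a convex combination of monotonically decreasing functions is again monotonically decreasing. The pointwise lower bound~\eqref{5-Zeroes-Equ-LE-Conditions-4} follows from $\theta(x) \geq t(\theta_0/2) + (1-t)(\theta_0/2) = \theta_0/2$. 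Finally, the Lipschitz estimate~\eqref{5-Zeroes-Equ-LE-Conditions-5} follows from the triangle inequality with the same constant $L$.

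For completeness, given a Cauchy sequence $\{\theta_k\} \subset C^*$ with uniform limit $\theta \in C^0([0,\lambda])$, conditions~\eqref{5-Zeroes-Equ-LE-Conditions-Initial-1},~\eqref{5-Zeroes-Equ-LE-Conditions-3}, and~\eqref{5-Zeroes-Equ-LE-Conditions-4} are immediately inherited since pointwise evaluations and pointwise inequalities are preserved under pointwise convergence. The Lipschitz constant~\eqref{5-Zeroes-Equ-LE-Conditions-5} is preserved because $|\theta(x) - \theta(y)| = \lim_{k\to\infty}|\theta_k(x) - \theta_k(y)| \leq L|x-y|$.

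The main obstacle is transferring condition~\eqref{5-Zeroes-Equ-LE-Conditions-Initial-2}, namely $d\theta/d\xi|_{\xi=0}=0$, to the limit function, since uniform convergence does not in general preserve pointwise derivatives. The way I would resolve this is to exploit the stronger quadratic estimate hidden in the derivation of Lemma~\ref{5-Zeroes-Lem-LE-Conditions}: the computation leading to~\eqref{eq:5-Zeroes-LE-Equation-Rewritten-Inequality} actually furnishes the sharper pointwise bound $|\theta_k(\xi) - \theta_0| \leq \theta_0^n\xi^2/(2l)$, which is strictly tighter than a linear Lipschitz control and encodes the vanishing of the derivative at $\xi=0$ directly. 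This quadratic envelope passes to the uniform limit, forcing $\theta'(0) = 0$ for the limit function. Making this tightening of membership criteria explicit (or equivalently, replacing condition~\eqref{5-Zeroes-Equ-LE-Conditions-Initial-2} by the quadratic bound that the proof of Lemma~\ref{5-Zeroes-Lem-LE-Conditions} actually establishes) closes the last gap and yields closedness of $C^*$, completing the proof.
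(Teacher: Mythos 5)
Your proof follows essentially the same route as the paper: convexity by termwise verification, using the triangle inequality and the shared constant $L$ for the Lipschitz condition, and completeness by showing the defining conditions are closed under uniform limits inside $(C^0([0,\lambda]),||\cdot||_\infty)$. The one substantive difference is your treatment of condition~\eqref{5-Zeroes-Equ-LE-Conditions-Initial-2}: the paper simply asserts that conditions~\eqref{5-Zeroes-Equ-LE-Conditions-Initial-1} through~\eqref{5-Zeroes-Equ-LE-Conditions-4} are ``conserved'' and reserves its care for the Lipschitz estimate, whereas you correctly observe that the vanishing of $d\theta/d\xi$ at $\xi=0$ is exactly the condition that uniform convergence does \emph{not} automatically preserve (indeed, a uniform limit of functions in $C^*$ need not be differentiable at $0$ at all if only conditions~\eqref{5-Zeroes-Equ-LE-Conditions-Initial-1} and~\eqref{5-Zeroes-Equ-LE-Conditions-3}--\eqref{5-Zeroes-Equ-LE-Conditions-5} are imposed). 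Your repair --- strengthening the membership criterion to the quadratic envelope $|\theta(\xi)-\theta_0|\leq\theta_0^n\xi^2/(2l)$ already furnished by~\eqref{eq:5-Zeroes-LE-Equation-Rewritten-Inequality} --- is sound: that bound is stable under both uniform limits and convex combinations and forces the one-sided derivative at $0$ to vanish. This is an improvement on the paper's argument rather than a deviation from it; the only caveat is that it amounts to redefining $C^*$, so one should confirm that the operator $T$ of~\eqref{eq:5-Zeroes-LE-Definition-Operator-On-C-Star} still maps into the smaller set, which it does by the very computation~\eqref{eq:5-Zeroes-LE-Equation-Rewritten-Inequality}.
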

\begin{proof}
	We begin by showing completeness.
	It is clear that conditions~\eqref{5-Zeroes-Equ-LE-Conditions-Initial-1} to~\eqref{5-Zeroes-Equ-LE-Conditions-4} are conserved.
	Generally speaking Lipschitz continuous functions need not be complete under $||\cdot||_\infty$.
	However, since we had chosen $L$ to be independent of $\theta$ for $||f_m-f||_\infty\rightarrow0$ we can write
	\begin{equation}
		|f(x)-f(y)|=|f(x)-f_m(x)+f_m(x)-f_m(y)+f_m(y)-f(y)|\leq 2\epsilon+L|x-y|
		\label{eq:5-Zeroes-LE-Equicontinuity-Proof}
	\end{equation}
	which proves the statement in the limit $\epsilon\rightarrow0$.
	Convexity for conditions~\eqref{5-Zeroes-Equ-LE-Conditions-Initial-1} to~\eqref{5-Zeroes-Equ-LE-Conditions-4} is also clear.
	Let $f,g\in C^*$ and $h\in[0,1]$.
	Then
	\begin{alignat}{3}
		&&&|hf(x)+(1-h)g(x)-hf(y)-(1-h)g(y)|\\
		&\leq h&&|f(x)-f(y)|+ (1-h)|g(x)-g(y)|\\
		&\leq L&&|x-y|
	\end{alignat}
	which shows convexity.
\end{proof}\noindent
\begin{lemma}
	The space $C^*$ with $||\cdot||_\infty$ is compact.
	\label{5-Zeroes-Lem-LE-Compact}
\end{lemma}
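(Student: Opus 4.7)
The plan is to invoke the Arzelà--Ascoli theorem. Since $C^*$ sits inside $C^0([0,\lambda])$ equipped with $\|\cdot\|_\infty$, and the domain $[0,\lambda]$ is compact, it suffices to show that $C^*$ is closed, uniformly bounded, and equicontinuous; then Arzelà--Ascoli gives that $C^*$ is compact.

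First I would address uniform boundedness. Every $f\in C^*$ satisfies $f(0)=\theta_0$ together with the monotonicity condition~\eqref{5-Zeroes-Equ-LE-Conditions-3}, so $f(x)\leq\theta_0$ for all $x\in[0,\lambda]$. Combined with the lower bound~\eqref{5-Zeroes-Equ-LE-Conditions-4}, this gives $\theta_0/2\leq f\leq\theta_0$ pointwise, hence $\|f\|_\infty\leq\theta_0$ uniformly in $f\in C^*$.

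Next I would observe that equicontinuity is essentially built into the defining property~\eqref{5-Zeroes-Equ-LE-Conditions-5}: the Lipschitz constant $L$ was explicitly chosen independent of $\theta$, so for any $\varepsilon>0$ the choice $\delta=\varepsilon/L$ works simultaneously for every $f\in C^*$. Closedness was already established in Lemma~\ref{5-Zeroes-Lem-LE-Convex-Complete}, which showed that the defining conditions survive passage to a uniform limit (the Lipschitz condition survives because $L$ does not depend on the approximating sequence, the pointwise bounds and initial conditions are preserved by uniform convergence, and monotonicity is also preserved). Putting these together and applying Arzelà--Ascoli concludes the argument.

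I expect the only subtle point to be the preservation of condition~\eqref{5-Zeroes-Equ-LE-Conditions-Initial-2} — the derivative condition $\theta'(0)=0$ — under uniform limits, since uniform convergence of continuous functions does not control derivatives. However, this condition is redundant for elements of $C^*$: any function satisfying~\eqref{5-Zeroes-Equ-LE-Conditions-Initial-1}, \eqref{5-Zeroes-Equ-LE-Conditions-3}, \eqref{5-Zeroes-Equ-LE-Conditions-4}, and~\eqref{5-Zeroes-Equ-LE-Conditions-5} need not be differentiable, and in the fixed-point setup~\eqref{5-Zeroes-Equ-LE-Conditions-Initial-2} is recovered a posteriori from the integral representation~\eqref{eq:5-Zeroes-LE-Equation-Rewritten-Integration-2}. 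So the compactness argument proper reduces to the three standard ingredients above, with no genuine obstacle beyond a careful bookkeeping of which of the five conditions truly cut out the ambient Banach space.
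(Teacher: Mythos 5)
Your proof is correct and follows essentially the same route as the paper: Arzelà--Ascoli applied on the compact domain $[0,\lambda]$, with uniform boundedness from conditions~\eqref{5-Zeroes-Equ-LE-Conditions-3} and~\eqref{5-Zeroes-Equ-LE-Conditions-4} and equicontinuity from the uniform Lipschitz constant in~\eqref{5-Zeroes-Equ-LE-Conditions-5}. Your explicit treatment of closedness (needed to pass from relative compactness to compactness) and of the status of the derivative condition~\eqref{5-Zeroes-Equ-LE-Conditions-Initial-2} is in fact more careful than the paper's own argument, which leaves both points implicit.
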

\begin{proof}
	To prove part of this statement we use the Arzela-Ascoli theorem.
	Since $C^*$ is a function space over a compact set $[0,\lambda]$ we only need to show that it is bounded and equicontinuous.
	For every function $f\in C^*$ we know by condition~\eqref{5-Zeroes-Equ-LE-Conditions-3} and~\eqref{5-Zeroes-Equ-LE-Conditions-4} that $\theta/2\leq f\leq\theta_0$ which shows boundedness.
	Since with condition~\eqref{5-Zeroes-Equ-LE-Conditions-5} every function $f\in C^*$ is especially Lipschitz, we can clearly see that equicontinuity hols.
\end{proof}
\begin{proof}[Proof \acrNoHyperlink{\acs}{LE} Local Existence]
	With these results it is now possible to prove our initial statement of theorem~\ref{5-Zeroes-Lem-Lane-Emden-Local-Existence}.
	It is left to prove that the operator $T:C^*\rightarrow C^*$ defined by
	\begin{equation}
		T(f)(x) = \theta_0 - \int\limits_0^x\int\limits_0^s \left(\frac{t}{s}\right)^{l-1}f^n(t)\diff t\diff s
		\label{eq:5-Zeroes-LE-Definition-Operator-On-C-Star}
	\end{equation}
	is continuous on $C^*$.
	With a quick integral approximation we obtain
	\begin{align}
		||T(f)-T(g)||_{\infty} &= ||\int\limits_0^x\int\limits_0^s \left(\frac{t}{s}\right)^{l-1}\left[f^n(t)-g^n(t)\right]\diff t\diff s\\
		&\leq \frac{x^2}{2l}||f^n-g^n||_{\infty}.
	\end{align}
	Since $z\mapsto z^n$ is continuous for $z\geq0$ and $n\geq0$ with $\mathrm{img}(f)\subseteq[\theta_0/2,\theta_0]$ for each $f\in C^*$ by condition~\eqref{5-Zeroes-Equ-LE-Conditions-4} the statement follows directly.
\end{proof}\noindent
Equation~\eqref{eq:5-Zeroes-LE-Equation-Rewritten-Integration-1} shows that non-negative solutions must have non-positive slope.
The following theorem makes use of that fact and of a different representation of the \ac{LE} equation to show that solutions can be extended until they reach zero.
\begin{lemma}
	Let $\theta$ be a \ac{LE} solution and $[0,\xi_\textrm{m}]$ be the maximum interval of existence.
	Then $\theta(\xi_\textrm{m})=0$ or $\xi_\textrm{m}=\infty$.
\end{lemma}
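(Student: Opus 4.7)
The strategy is proof by contradiction: I will assume $\xi_\textrm{m} < \infty$ and $\theta(\xi_\textrm{m}) > 0$ simultaneously, and then construct an extension of $\theta$ past $\xi_\textrm{m}$, contradicting maximality of the interval. The key observation is that away from the origin the generalised \ac{LE} equation is a regular second-order ODE, so one can simply invoke standard continuation results from ODE theory rather than re-running a Schauder fixed-point argument.

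First I would exploit the integral identity from equation~\eqref{eq:5-Zeroes-LE-Equation-Rewritten-Integration-1},
$$\theta'(\xi) = -\int_0^\xi (t/\xi)^{l-1}\theta^n(t)\,dt \leq 0,$$
to conclude that $\theta$ is monotonically non-increasing on its maximal interval. Setting $\delta := \theta(\xi_\textrm{m}) > 0$, monotonicity gives $\delta \leq \theta(\xi) \leq \theta_0$ on $[0,\xi_\textrm{m}]$, and bounding the integrand by $\theta_0^n$ yields $|\theta'(\xi)| \leq \theta_0^n\,\xi_\textrm{m}/l$. In particular the pair $(\theta,\theta')$ stays bounded, and the integral formula even shows that $\theta'(\xi)$ has a well-defined finite limit as $\xi \to \xi_\textrm{m}$.

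Next I would rewrite the equation as the first-order system
$$\frac{d}{d\xi}\begin{pmatrix}\theta\\ \psi\end{pmatrix} = \begin{pmatrix}\psi\\ -\tfrac{l-1}{\xi}\psi - \theta^n\end{pmatrix},$$
whose right-hand side is continuous and locally Lipschitz in $(\theta,\psi)$ on the open set $\{\xi > 0,\ \theta > \delta/2\}$, since $z \mapsto z^n$ is smooth for $z > 0$ (for any $n \geq 0$). Applying the standard Picard-Lindelöf theorem at the point $\xi_\textrm{m}$ with initial data $\bigl(\theta(\xi_\textrm{m}),\theta'(\xi_\textrm{m})\bigr)$ produces a $C^2$ solution on some interval $(\xi_\textrm{m}-\eta,\xi_\textrm{m}+\eta)$; by local uniqueness it agrees with the original $\theta$ on the overlap, so gluing yields a genuine solution on $[0,\xi_\textrm{m}+\eta)$ and contradicts the choice of $\xi_\textrm{m}$.

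I expect the main obstacle to be mostly bookkeeping: one must verify that $\theta(\xi_\textrm{m})$ and $\theta'(\xi_\textrm{m})$ really are limits of the solution (so that Picard-Lindelöf is applied to a correctly posed initial value problem) and that the concatenated function lies in the chosen solution class. Both are routine consequences of the integral representation above and of the ODE itself, which determines $\theta''$ continuously wherever $\xi > 0$ and $\theta$ stays positive.
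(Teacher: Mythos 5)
Your proposal is correct and follows essentially the same route as the paper: assume $\xi_\textrm{m}<\infty$ and $\theta(\xi_\textrm{m})>0$, pass to the first-order system, and invoke Picard--Lindel\"of to continue the solution past $\xi_\textrm{m}$, contradicting maximality. The only difference is that you carefully justify the boundedness of $(\theta,\theta')$ and the existence of their limits at $\xi_\textrm{m}$ via the integral representation, details the paper's very terse proof leaves implicit.
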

\begin{proof}
	It is clear that $\theta<0$ is a case that can never appear as long as we are interested in positive solutions.
	Suppose $\theta(\xi_\textrm{m})>0$ and $\xi_\textrm{m}<\infty$.
	Then there exists an interval $(\xi_\textrm{m}-\epsilon,\xi_\textrm{m}+\epsilon)$ and $(\theta(\xi_\textrm{m})-\delta,\theta(\xi_\textrm{m})+\delta)$ where the \ac{LE} differential equation has a solution.
	With equations~\ref{eq:4-NumSol-Equ-LE-Substitution} we can directly see that the Picard-Lindelöf theorem~\cite{lindelofApplicationMethodeApproximations1894} is applicable in this situation.
\end{proof}
\begin{proof}[Proof \acrNoHyperlink{\acl}{LE} Finite Boundary]
	To prove Theorem~\ref{5-Zeroes-Theo-Lane-EmdenFiniteBoundary} it is left to show that the \ac{LE} equation has no global solutions for $n<5$.
	In~\cite[p.~36]{quittnerSuperlinearParabolicProblems2007}, the author shows that for $n\leq5$, \ac{LE} solutions can not be global solutions.
	Furthermore it is explicitly shown that for $n\geq5$ global solutions and thus no zero value exist.
	A variation of the proof for this particular usecase can be found in appendix~\ref{subsec:99-App-NoGlobalLE}.
\end{proof}
\subsection{\texorpdfstring{\acrNoHyperlink{\acs}{TOV}}{TOV} Equation}
\label{subsec:5-Zeroes-TOV-Equ}
Motivated by the previous section about numerical solutions, we state the following hypothesis.
\begin{hypothesis}[\acrNoHyperlink{\acs}{TOV} Zero Value Hypothesis]
	\label{theo:5-Zeroes-TOV-Zero-Value-Hypothesis}
	Given the \ac{TOV} differential equation with polytropic \ac{eos} $\rho=Ap^{\frac{n}{n+1}}$ and $p_0,A>0$
	\begin{alignat}{3}
		\frac{\partial m}{\partial r} &= &&4\pi\rho r^2\\
		\frac{\partial p}{\partial r} &= -&&\frac{m\rho}{r^2}\left(1+\frac{p}{\rho}\right)\left(\frac{4\pi r^3 p}{m}+1\right)\left(1-\frac{2m}{r}\right)^{-1}
		\label{5-Zeroes-Equ-TOV-Equ}
	\end{alignat}
	and initial values $m_0=0,p_0\geq0$ there exists a $n_0\geq0$ such that all solutions with same parameters $A,p_0$ and smaller exponent $0\leq n<n_0$ have $p(r_0)$ for some $r_0>0$.
\end{hypothesis}
\begin{lemma}[\acrNoHyperlink{\acs}{TOV} Local existence]
	The \ac{TOV} equation~\ref{5-Zeroes-Equ-TOV-Equ} with initial values $m_0=0$ and $p_0\geq0$ and monotonously increasing continuous \ac{eos} $\rho:\R_{\geq0}\rightarrow\R_{\geq0}$ with $\rho(0)=0$ has a local solution for $r\in[0,\epsilon]$ with some $\epsilon>0$.
\end{lemma}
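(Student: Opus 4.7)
The plan is to follow the same architecture as the Lane--Emden local existence proof, applying the Schauder fixed-point theorem to an integral reformulation of the TOV system on an interval $[0,\epsilon]$, with $\epsilon>0$ to be fixed later. First I would dispose of the trivial case $p_0=0$: since $\rho(0)=0$, the pair $(m,p)\equiv(0,0)$ is an obvious local solution. For the interesting case $p_0>0$ I set $\rho_0 := \rho(p_0)>0$ and integrate both TOV equations to rewrite them as the coupled fixed-point system
\begin{align*}
m(r) &= \int_0^r 4\pi\,\rho(p(r'))\,r'^2\,\diff r',\\
p(r) &= p_0 - \int_0^r \frac{m(r')\rho(p(r'))}{r'^2}\left(1+\frac{p(r')}{\rho(p(r'))}\right)\left(\frac{4\pi r'^3 p(r')}{m(r')}+1\right)\left(1-\frac{2m(r')}{r'}\right)^{-1}\diff r',
\end{align*}
which I read as the definition of an operator $T:(m,p)\mapsto (M,P)$ acting on a suitable function space.

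Next I would define $C^*\subseteq C([0,\epsilon])^2$ as the set of pairs with $m(0)=0$, $p(0)=p_0$, with $p$ monotonously decreasing and taking values in $[p_0/2,p_0]$, with $m$ monotonously increasing, with average density $\bar\rho(r):=3m(r)/(4\pi r^3)$ (extended by $\rho_0$ at $r=0$) lying in $[\rho_0/2,2\rho_0]$, and with both components Lipschitz on $[0,\epsilon]$ with a common constant $L$ independent of the particular element of $C^*$. The verifications that $C^*$ is nonempty, convex and, via Arzel\`a--Ascoli, compact in $\|\cdot\|_\infty$ then go through essentially as in Lemma~\ref{5-Zeroes-Lem-LE-Convex-Complete} and Lemma~\ref{5-Zeroes-Lem-LE-Compact}; the equi-Lipschitz constant $L$ is preserved under uniform limits exactly as in equation~\eqref{eq:5-Zeroes-LE-Equicontinuity-Proof}.

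The real work is in checking that $T$ maps $C^*$ into itself and is $\|\cdot\|_\infty$-continuous. For $\epsilon$ small the bound $\bar\rho\leq 2\rho_0$ gives $m(r)/r^2 = r\cdot\tfrac{4\pi}{3}\bar\rho \leq \tfrac{8\pi\rho_0}{3}\epsilon$ and $2m(r)/r\leq \tfrac{16\pi\rho_0}{3}\epsilon^2$ (which I would shrink below $1/2$), while $\bar\rho\geq\rho_0/2$ gives $r^3/m\leq 3/(2\pi\rho_0)$. Together with the $p$-range bounds and the continuity of $\rho$, these make the integrand defining $P$ uniformly bounded on $C^*$ and bounded away from the singular value $2m/r=1$. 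From these uniform bounds I would verify in turn that $M$ and $P$ satisfy the correct initial values, the monotonicity, the range conditions, the density bounds, and the Lipschitz estimate on $[0,\epsilon]$, provided $\epsilon$ is sufficiently small. Continuity of $T$ then follows from uniform continuity of $\rho$ on $[p_0/2,p_0]$ and of the finite product of bounded continuous factors in the integrand, by the same kind of estimate as in the closing argument of Lemma~\ref{5-Zeroes-Lem-Lane-Emden-Local-Existence}. Schauder's theorem then produces a fixed point of $T$ in $C^*$, which is the desired local solution of the TOV system.

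The main obstacle is the coordinate singularity at $r=0$, where the TOV right-hand side is literally of the form $0/0$ through the factors $m\rho/r^2$ and $4\pi r^3 p/m$. The whole point of encoding both an upper and a lower bound on $\bar\rho=3m/(4\pi r^3)$ into the definition of $C^*$ is precisely to remove this singularity, forcing admissible mass profiles to obey the Newtonian-like behavior $m(r)\sim\tfrac{4\pi}{3}\rho_0 r^3$ already observed via L'H\^opital's rule in Section~\ref{subsec:4-NumSol-Sec-Comp-TOV-LE}. Showing that $T$ genuinely preserves these density bounds, rather than merely the coarser bounds on $m$ and $p$, is the delicate step and is where I expect the proof to require the most care: it uses continuity of $\rho$ at $p_0$ essentially, since one must choose $\epsilon$ small enough that the pressure stays so close to $p_0$ along every admissible profile that $\rho(p(r'))$ remains in a narrow window around $\rho_0$, guaranteeing the same window for the spatial average $3M(r)/(4\pi r^3)$.
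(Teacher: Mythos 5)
Your proposal matches the paper's own proof in all essentials: the paper likewise defines a convex, compact (via Arzel\`a--Ascoli) set $K$ of pairs $(m,p)$ cut out by the initial values, the monotonicity of $m$ and $p$, the bounds $p\geq p_0/2$ and $m\leq 4\pi\rho_0 r^3/3$, and a uniform Lipschitz condition, and then applies the Schauder fixed-point theorem to the integrated TOV operator on $K$. The only notable difference is that the paper algebraically cancels the singular factor at $r=0$ by rewriting the integrand as $(p+\rho)\left(4\pi p r'^3+m\right)/r'^2$, so it needs no lower bound on the average density $3m/(4\pi r^3)$, whereas your two-sided bound on $\bar{\rho}$ serves the same purpose (and tacitly requires $\rho(p_0)>0$).
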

\begin{lemma}
	\label{5-Zeroes-Lem-TOV-Conditions}
	Let $(m,p):\R\rightarrow\R^2$ be a \ac{TOV} solution with initial values $m_0=0$ and $p_0>0$.
	Also let $\rho:\R_{\geq0}\rightarrow\R_{\geq0}$ be the polytropic \ac{eos}.
	Then for a small enough $\epsilon$ Interval $[0,\epsilon]$ the solution $(m,p)$ satisfies the following conditions.
	\begin{align}
		p(0)=p_0>0 &\mathrm{\ and\ } m(0)=m_0=0\label{5-Zeroes-Equ-TOV-Conditions-0}\\
		m(r) &\leq \frac{4\pi\rho_0 r^3}{3}\label{5-Zeroes-Equ-TOV-Conditions-1}\\
		\lim_{r\rightarrow0}\frac{m}{r^3}&=\frac{4\pi\rho_0}{3}\label{5-Zeroes-Equ-TOV-Conditions-2}\\
		p(r)&\geq\frac{p_0}{2}\label{5-Zeroes-Equ-TOV-Conditions-3}\\
		x\leq y&\Rightarrow m(x)\leq m(y)\label{5-Zeroes-Equ-TOV-Conditions-4}\\
		x\leq y&\Rightarrow p(x)\geq p(y)\label{5-Zeroes-Equ-TOV-Conditions-5}\\
		|m(x)-m(y)|+|p(x)-p(y)|&\leq L|x-y|\forall x,y\in[0,\epsilon]\forall(m,p)\in K\label{5-Zeroes-Equ-TOV-Conditions-6}
	\end{align}
\end{lemma}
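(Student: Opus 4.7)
The plan is to verify each condition in turn, exploiting that the right-hand sides of the TOV system are well-behaved near $r=0$ provided $p>0$ and $1-2m/r>0$, both of which I would enforce by choosing $\epsilon$ sufficiently small. Conditions \eqref{5-Zeroes-Equ-TOV-Conditions-0} are the initial values by assumption. For the monotonicity statements \eqref{5-Zeroes-Equ-TOV-Conditions-4} and \eqref{5-Zeroes-Equ-TOV-Conditions-5}, I shrink $\epsilon$ so that on $[0,\epsilon]$ we have $p>0$, $m,\rho\geq0$ and $1-2m/r>0$ (this is possible by local existence and continuity, since at $r=0$ we have $p_0>0$ and $m=0$). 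Then $\partial m/\partial r=4\pi\rho r^{2}\geq 0$ gives \eqref{5-Zeroes-Equ-TOV-Conditions-4}, and inspecting the signs of the factors in the TOV pressure equation immediately gives $\partial p/\partial r\leq 0$, i.e.\ \eqref{5-Zeroes-Equ-TOV-Conditions-5}.

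Because the polytropic \ac{eos} $\rho=Ap^{n/(n+1)}$ is monotonically increasing in $p$, \eqref{5-Zeroes-Equ-TOV-Conditions-5} forces $\rho$ itself to be non-increasing in $r$, with maximum $\rho_0=Ap_0^{n/(n+1)}$ at $r=0$. Integrating $\partial m/\partial r=4\pi\rho r^{2}\leq 4\pi\rho_0 r^{2}$ yields condition \eqref{5-Zeroes-Equ-TOV-Conditions-1}. Condition \eqref{5-Zeroes-Equ-TOV-Conditions-2} then follows by L'H\^opital's rule applied to $m(r)/r^{3}$ using continuity of $\rho$, exactly as in the initial-value computation performed in Section~\ref{subsec:4-NumSol-Sec-Comp-TOV-LE}. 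Condition \eqref{5-Zeroes-Equ-TOV-Conditions-3} is simply the statement that a continuous function with $p(0)=p_0$ stays within $p_0/2$ of $p_0$ on a sufficiently small interval.

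The Lipschitz estimate \eqref{5-Zeroes-Equ-TOV-Conditions-6} reduces to a uniform bound on $|\partial m/\partial r|+|\partial p/\partial r|$. The mass derivative is at most $4\pi\rho_0\epsilon^{2}$ by \eqref{5-Zeroes-Equ-TOV-Conditions-1}. For the pressure derivative I would bound each of its four factors using the conditions just established: from \eqref{5-Zeroes-Equ-TOV-Conditions-1}, $m/r^{2}\leq (4\pi\rho_0/3)\epsilon$; the ratio $p/\rho$ is bounded since $p\leq p_0$ and $\rho\geq A(p_0/2)^{n/(n+1)}$ by the EoS together with \eqref{5-Zeroes-Equ-TOV-Conditions-3}; the factor $4\pi r^{3}p/m$ is bounded because the same lower bound on $\rho$ combined with $\partial m/\partial r=4\pi\rho r^{2}$ gives a uniform lower bound $m/r^{3}\geq (4\pi/3)A(p_0/2)^{n/(n+1)}$; and $(1-2m/r)^{-1}\leq 2$ once $\epsilon$ is small enough that $m/r\leq 1/4$. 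None of these bounds depends on the particular $(m,p)$, only on $A,n,p_0,\epsilon$, so the resulting $L$ is uniform over the candidate space $K$ that will play the role of $C^{*}$ in the forthcoming Schauder-type argument.

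The main obstacle is the apparent circularity: the lower bound on $m/r^{3}$ used in the Lipschitz step relies on the lower bound $\rho\geq A(p_0/2)^{n/(n+1)}$, which in turn relies on \eqref{5-Zeroes-Equ-TOV-Conditions-3}, which itself requires $\epsilon$ to have already been chosen small. I would resolve this by a single global choice of $\epsilon$ made at the outset, collecting all finitely many smallness conditions (each of the form $\epsilon\leq\epsilon_i$ coming from a continuity statement or an $\epsilon^{k}\leq\text{const}$ bound) and taking their minimum. Once that $\epsilon$ is fixed, the chain of implications runs in order (0) $\Rightarrow$ (5) $\Rightarrow$ (1) $\Rightarrow$ (2) $\Rightarrow$ (3) $\Rightarrow$ (4) $\Rightarrow$ (6) without circularity, and the constants appearing in the Lipschitz bound are visibly solution-independent.
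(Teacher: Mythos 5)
Your proposal is correct and follows the paper's overall structure (verify each condition by integrating the system and exploiting the sign/monotonicity of the right-hand sides), but it diverges at the one point where the argument has any real content, namely conditions~\eqref{5-Zeroes-Equ-TOV-Conditions-3} and~\eqref{5-Zeroes-Equ-TOV-Conditions-6}. You bound $\partial p/\partial r$ in its factored form, which forces you to control $p/\rho$ and $4\pi r^3p/m$ from above and hence to produce \emph{lower} bounds on $\rho$ and on $m/r^3$; those lower bounds come from condition~\eqref{5-Zeroes-Equ-TOV-Conditions-3}, which is where your acknowledged circularity enters, and your resolution (take the minimum of finitely many smallness thresholds) does not quite close it, because the threshold you use for~\eqref{5-Zeroes-Equ-TOV-Conditions-3} via bare continuity depends on the particular solution rather than only on $A,n,p_0$. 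The paper instead multiplies the factors out, writing the integrand as $(p+\rho)\left(4\pi p r'^3+m\right)r'^{-2}\left(1-2m/r'\right)^{-1}$, so that only \emph{upper} bounds $p\leq p_0$, $\rho\leq\rho_0$, $m\leq 4\pi\rho_0 r^3/3$ are needed; this yields the quantitative estimate $p_0-p\leq 4\pi(p_0+\rho_0)\left(p_0+\rho_0/3\right)r^2$, which simultaneously delivers~\eqref{5-Zeroes-Equ-TOV-Conditions-3} with an $\epsilon$ depending only on the data and the Lipschitz constant for the pressure component, with no circularity and no appeal to lower bounds on $\rho$ or $m/r^3$. Your route can be repaired (either by a bootstrap argument on $\sup\{r:p\geq p_0/2\}$ or simply by adopting the multiplied-out form), and the uniformity of $L$ over the candidate space is what the subsequent Schauder argument actually needs, so I recommend the paper's version of this step; everything else in your write-up matches the paper's proof.
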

\begin{proof}
	Conditions~\eqref{5-Zeroes-Equ-TOV-Conditions-0},\eqref{5-Zeroes-Equ-TOV-Conditions-4} and~\eqref{5-Zeroes-Equ-TOV-Conditions-5} are directly proven by looking at equation~\eqref{5-Zeroes-Equ-TOV-Equ}.
	To prove condition~\eqref{5-Zeroes-Equ-TOV-Conditions-1}, we inspect that with the initial value $m_0=0$ we can write
	\begin{equation}
		m = \int\limits_0^r 4\pi\rho r'^2 \diff r'\leq4\pi\rho_0\int\limits_0^r r'^2 \diff r'=\frac{4\pi\rho_0 r^3}{3}
		\label{eq:5-Zeroes-TOv-Conditions-Proof-Intermed-1}
	\end{equation}
	where in the second step we used that $\rho$ is monotonously increasing and positive solutions of the \ac{TOV} equation have non-positive slope in the pressure $p$ component.\\
	The third condition is obtained when applying L'Hosptial's Rule as was done in the beginning of section~\ref{subsec:4-NumSol-Sec-Comp-TOV-LE}.
	For the third condition we first inspect that condition~\ref{5-Zeroes-Equ-TOV-Conditions-1} lets us choose $(1-2m/r)^{-1}\leq 2$.
	\begin{align}
		p_0 - p &=\int\limits_0^r\frac{1}{r'^2}\left(p+\rho\right)\left(4\pi pr'^3+m\right)\left(1-\frac{2m}{r'}\right)^{-1}\diff r'\label{5-Zeroes-Equ-TOV-Conditions-Proof-1}\\
				&\leq2\left(p_0+\rho_0\right)\int\limits_0^r\frac{4\pi pr'^3+m}{r'^2}\diff r'\label{5-Zeroes-Equ-TOV-Conditions-Proof-2}\\
				&\leq2(p_0+\rho_0)\int\limits_0^r\left(4\pi p_0 r' + \frac{4\pi\rho_0 r'}{3}\right)\diff r'\label{5-Zeroes-Equ-TOV-Conditions-Proof-3}\\
				&=4\pi(p_0+\rho_0)\left(p_0+\frac{\rho_0}{3}\right)r^2\label{5-Zeroes-Equ-TOV-Conditions-Proof-4}
	\end{align}
	In the first step we used that $p$ is a \ac{TOV} solution while in the second step the aforementioned inequality and just as in the last step the non-positive slope of $p$ with the monotonicity of $\rho$ came into play.
	Next we used the condition~\ref{5-Zeroes-Equ-TOV-Conditions-1} to restrict $r\in[0,\epsilon]$ such that $(1-2m/r)^{-1}\leq2$ and integrated the result.
	Condition~\eqref{5-Zeroes-Equ-TOV-Conditions-6} can be shown when inspecting $m$ and $p$ individually.
	\begin{align}
		|m(x)-m(y)|&=\left|\int\limits_x^y 4\pi\rho r'^2 \diff r'\right|\leq4\pi\rho_0\epsilon^2|x-y|\\
		|p(x)-p(y)|&\leq8\epsilon\pi(p_0+\rho_0)\left(p_0+\frac{\rho_0}{3}\right)|x-y|
	\end{align}
	For the first equation we used the decreasing monotonous behaviour of $p$ and increasing behaviour of $\rho$.
	Afterwards, the just proven lines~\eqref{5-Zeroes-Equ-TOV-Conditions-Proof-1} to~\eqref{5-Zeroes-Equ-TOV-Conditions-Proof-4} and $x,y\leq\epsilon$ show the next statement.
	It is now clear that a small enough $\epsilon>0$ can be chosen such that all stated conditions are satisfied.
\end{proof}\noindent
We will now call $K$ the space of functions $f:[0,\epsilon]\rightarrow\R_{\geq0}^2$ that obey conditions~\eqref{5-Zeroes-Equ-TOV-Conditions-0} to~\eqref{5-Zeroes-Equ-TOV-Conditions-6}.
\begin{lemma}
	\label{5-Zeroes-Lem-K-Complete-Convex}
	The space $K$ with $||\cdot||_\infty$ is complete and convex.
\end{lemma}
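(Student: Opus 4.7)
The plan is to mirror the proof of Lemma~\ref{5-Zeroes-Lem-LE-Convex-Complete}, handling completeness and convexity separately and checking that each of the seven conditions (\ref{5-Zeroes-Equ-TOV-Conditions-0})-(\ref{5-Zeroes-Equ-TOV-Conditions-6}) survives the relevant operation.

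For completeness, I would take a Cauchy sequence $\{(m_k, p_k)\}_{k} \subset K$ with respect to $||\cdot||_\infty$; since $C^0([0,\epsilon], \R^2)$ is a Banach space, the sequence has a continuous uniform limit $(m,p)$. Conditions (\ref{5-Zeroes-Equ-TOV-Conditions-0}), (\ref{5-Zeroes-Equ-TOV-Conditions-1}), (\ref{5-Zeroes-Equ-TOV-Conditions-3}), (\ref{5-Zeroes-Equ-TOV-Conditions-4}) and (\ref{5-Zeroes-Equ-TOV-Conditions-5}) are pointwise equalities or weak inequalities whose right-hand sides do not depend on $k$, so uniform (hence pointwise) convergence transfers them to $(m,p)$. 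For the Lipschitz condition (\ref{5-Zeroes-Equ-TOV-Conditions-6}) I would reuse the insertion trick from equation (\ref{eq:5-Zeroes-LE-Equicontinuity-Proof}): given $\delta > 0$, fix $k$ with $||(m_k, p_k) - (m,p)||_\infty < \delta/4$, expand $|m(x) - m(y)| + |p(x) - p(y)|$ by adding and subtracting $m_k(x), m_k(y), p_k(x), p_k(y)$, invoke the $k$-uniform Lipschitz bound with constant $L$, and send $\delta \to 0$.

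For convexity, I would let $f_1 = (m_1, p_1), f_2 = (m_2, p_2) \in K$ and $h \in [0,1]$, and set $f = h f_1 + (1-h) f_2$. Every defining condition is a convex constraint: the initial values at $r = 0$ are affine in $f$; the bounds (\ref{5-Zeroes-Equ-TOV-Conditions-1}) and (\ref{5-Zeroes-Equ-TOV-Conditions-3}) are preserved because convex combinations of functions obeying the same one-sided bound still obey it; the monotonicity conditions (\ref{5-Zeroes-Equ-TOV-Conditions-4}) and (\ref{5-Zeroes-Equ-TOV-Conditions-5}) survive non-negative linear combinations of monotone functions; the limit (\ref{5-Zeroes-Equ-TOV-Conditions-2}) trivially satisfies $h\cdot 4\pi\rho_0/3 + (1-h)\cdot 4\pi\rho_0/3 = 4\pi\rho_0/3$; and the Lipschitz constant of $f$ is at most $hL + (1-h)L = L$.

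The one delicate point I anticipate is that condition (\ref{5-Zeroes-Equ-TOV-Conditions-2}), which prescribes the $r\to 0$ behaviour of $m_k/r^3$, is not automatically preserved by the uniform limit of the $m_k$. My workaround is a sandwich argument built out of the other conditions: using (\ref{5-Zeroes-Equ-TOV-Conditions-0}), (\ref{5-Zeroes-Equ-TOV-Conditions-3}), (\ref{5-Zeroes-Equ-TOV-Conditions-5}) together with the continuity of $\rho$ in $p$, for any $\delta > 0$ I can select $r_\delta \in (0,\epsilon]$ so that $p_k(r) \in [p_0 - \delta, p_0]$ for every $k$ and every $r \in [0, r_\delta]$; plugging this into the integral representation $m_k(r) = \int_0^r 4\pi \rho(p_k(r')) r'^2 \diff r'$ sandwiches $m_k(r)/r^3$ between $4\pi \rho(p_0 - \delta)/3$ and $4\pi \rho_0/3$, and sending $k \to \infty$, then $r \to 0^+$, then $\delta \to 0$ forces the limit value for $(m,p)$. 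The remainder of the verification is routine and parallels the Lane-Emden case almost verbatim.
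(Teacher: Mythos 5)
Your route is essentially the paper's: check that each of conditions \eqref{5-Zeroes-Equ-TOV-Conditions-0}--\eqref{5-Zeroes-Equ-TOV-Conditions-6} survives uniform limits and convex combinations, with the Lipschitz condition \eqref{5-Zeroes-Equ-TOV-Conditions-6} handled by the same insertion trick as in equation \eqref{eq:5-Zeroes-LE-Equicontinuity-Proof}. The convexity half of your argument is fine. You are in fact more careful than the source, which dismisses everything except \eqref{5-Zeroes-Equ-TOV-Conditions-6} as immediate, and your suspicion about condition \eqref{5-Zeroes-Equ-TOV-Conditions-2} is justified: a statement about $\lim_{r\to0}m(r)/r^3$ is not preserved under uniform convergence of the $m_k$. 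Concretely, one can take $m_k(r)=\tfrac{4\pi\rho_0}{3}r^3$ for $r\le 2^{-1/3}/k$, constant on $[2^{-1/3}/k,1/k]$, and equal to $\tfrac{2\pi\rho_0}{3}r^3$ for $r\ge1/k$; each $m_k$ satisfies \eqref{5-Zeroes-Equ-TOV-Conditions-1}, \eqref{5-Zeroes-Equ-TOV-Conditions-2}, \eqref{5-Zeroes-Equ-TOV-Conditions-4} and a $k$-independent Lipschitz bound, yet $\|m_k-m\|_\infty=O(k^{-3})$ with $\lim_{r\to0}m(r)/r^3=2\pi\rho_0/3$.

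The problem is that your patch does not close this gap. $K$ is defined as the set of \emph{all} continuous pairs $(m,p)$ obeying conditions \eqref{5-Zeroes-Equ-TOV-Conditions-0}--\eqref{5-Zeroes-Equ-TOV-Conditions-6}; its elements are not assumed to be fixed points of $U$, so the integral representation $m_k(r)=\int_0^r 4\pi\rho(p_k(r'))r'^2\diff r'$ on which your sandwich argument rests is simply unavailable for a general member of $K$. The only lower control you have on $m_k(r)/r^3$ near $0$ is the non-uniform limit \eqref{5-Zeroes-Equ-TOV-Conditions-2} itself, which is precisely what fails to pass to the limit. To repair the lemma one should either replace \eqref{5-Zeroes-Equ-TOV-Conditions-2} by a quantitative two-sided bound such as $\tfrac{4\pi}{3}\rho_0 r^3-Cr^4\le m(r)\le\tfrac{4\pi}{3}\rho_0 r^3$ with $C$ depending only on $p_0$, $\rho$ and $L$ (actual solutions satisfy this because $p$ is Lipschitz and $\rho$ is Lipschitz on $[p_0/2,p_0]$; such a bound is affine in $m$ and hence closed under both uniform limits and convex combinations, and it implies the limit statement), or work with the closed convex hull of $U(K)$, where your integral-representation argument does apply. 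The paper's own proof silently skips the point, so the defect is inherited rather than introduced, but your sandwich step as written assumes more structure than membership in $K$ provides.
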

\begin{proof}
	We need to show that conditions~\eqref{5-Zeroes-Equ-TOV-Conditions-0} to~\eqref{5-Zeroes-Equ-TOV-Conditions-6} are preserved under convexity and convergence in the $||\cdot||_\infty$ norm.
	We only need to show this for equation~\eqref{5-Zeroes-Equ-TOV-Conditions-6} which is true by the same argument as in the \ac{LE} case in lemma~\ref{5-Zeroes-Lem-LE-Convex-Complete}.
\end{proof}
\begin{lemma}
	The space $K$ is compact.
\end{lemma}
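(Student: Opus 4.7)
The plan is to mirror the proof of Lemma~\ref{5-Zeroes-Lem-LE-Compact} and apply the Arzelà--Ascoli theorem to the family $K \subseteq C([0,\epsilon],\R^2)$ equipped with $||\cdot||_\infty$. Since $[0,\epsilon]$ is compact and $\R^2$ is a finite-dimensional target, Arzelà--Ascoli reduces compactness (or at least relative compactness) of $K$ to verifying two properties: uniform boundedness and equicontinuity of the family.

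For uniform boundedness, I would read off the required bounds directly from the conditions defining $K$. Condition~\eqref{5-Zeroes-Equ-TOV-Conditions-1} together with $m \geq 0$ (part of $\R_{\geq 0}^2$-valuedness) gives $0 \leq m(r) \leq 4\pi \rho_0 \epsilon^3/3$, while conditions~\eqref{5-Zeroes-Equ-TOV-Conditions-3} and~\eqref{5-Zeroes-Equ-TOV-Conditions-5} (with the initial value $p(0)=p_0$) sandwich $p_0/2 \leq p(r) \leq p_0$ for every $(m,p) \in K$. Thus every element of $K$ is bounded in $||\cdot||_\infty$ by a constant that depends only on $\rho_0, p_0, \epsilon$, not on the particular function.

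For equicontinuity, I would invoke condition~\eqref{5-Zeroes-Equ-TOV-Conditions-6}: for every $(m,p) \in K$ and every $x,y \in [0,\epsilon]$,
\begin{equation*}
    |m(x)-m(y)| + |p(x)-p(y)| \leq L|x-y|,
\end{equation*}
with the Lipschitz constant $L$ independent of the chosen pair. This immediately yields uniform equicontinuity of $K$ as a family of $\R^2$-valued functions (one may take $\delta = \varepsilon/L$ for any prescribed $\varepsilon > 0$). Arzelà--Ascoli now gives that $K$ is relatively compact in $C([0,\epsilon],\R^2)$.

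To upgrade relative compactness to compactness it remains to observe that $K$ is closed. This is exactly what Lemma~\ref{5-Zeroes-Lem-K-Complete-Convex} established: $K$ is complete in the ambient complete Banach space $C([0,\epsilon],\R^2)$ with $||\cdot||_\infty$, and a complete subset of a complete metric space is closed. Hence $K$ is closed and relatively compact, and therefore compact. The only genuinely nontrivial step is the equicontinuity input, but that has already been packaged into condition~\eqref{5-Zeroes-Equ-TOV-Conditions-6}, so beyond assembling these pieces there is no real obstacle.
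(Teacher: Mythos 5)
Your proof is correct and follows essentially the same route as the paper: Arzel\`a--Ascoli on $[0,\epsilon]$, with uniform boundedness read off from condition~\eqref{5-Zeroes-Equ-TOV-Conditions-1} (and the pressure bounds) and equicontinuity from the uniform Lipschitz condition~\eqref{5-Zeroes-Equ-TOV-Conditions-6}. You are in fact slightly more careful than the paper, which stops at bounded plus equicontinuous and does not explicitly invoke closedness (via Lemma~\ref{5-Zeroes-Lem-K-Complete-Convex}) to pass from relative compactness to compactness.
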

\begin{proof}
	We again use the Arzela-Ascoli Theorem.
	By condition~\eqref{5-Zeroes-Equ-TOV-Conditions-1} and $r\in[0,\epsilon]$ it is clear that all functions $f\in K$ are bounded.
	To show equicontinuity, we can apply the exact same argument as in lemma~\ref{5-Zeroes-Lem-LE-Compact} with condition~\eqref{5-Zeroes-Equ-TOV-Conditions-6} to obtain the desired result.
\end{proof}
\begin{proof}[Proof \acrNoHyperlink{\ac}{TOV} Existence]
	To show that local solutions of the \ac{TOV} equation exist, it remains to be proven that $U:K\rightarrow K$ defined by
	\begin{align}
		U_1((m,p))(r) &= \int\limits_0^r 4\pi\rho(p) r'^2 \diff r'
		\label{eq:5-Zeroes-TOV-Operator-Def-1}\\
		U_2((m,p))(r) &= -\int\limits_0^r\frac{p+\rho(p)}{r'^2}(4\pi\rho r'^3 + m)\left(1-\frac{2m}{r'}\right)^{-1}\diff r'
		\label{eq:5-Zeroes-TOV-Operator-Def-2}.
	\end{align}
	is continuous.
	For $(m,p),(u,q)\in K$ we respectively inspect the first and second component.
	By continuity of $\rho$, we have $|\rho(p)-\rho(q)|<\varepsilon$.
	Since by condition~\eqref{5-Zeroes-Equ-TOV-Conditions-0} and~\eqref{5-Zeroes-Equ-TOV-Conditions-3}, we know that we only need to consider $\rho$ on a compact interval $[p_0,p_0/2]$, it is clear that by choosing $||p-q||<\delta$ small enough the term in equation~\eqref{eq:5-Zeroes-TOV-Ineq-1} shrinks arbitrarily.
	\begin{equation}
		\left|\left|\int\limits_0^r 4\pi(\rho(p)-\rho(q))r'^2\diff r\right|\right|_\infty\leq \frac{4\pi r^3}{3}||\rho(p)-\rho(q)||_\infty.
		\label{eq:5-Zeroes-TOV-Ineq-1}
	\end{equation}
	For the second component, we make use of condition~\eqref{5-Zeroes-Equ-TOV-Conditions-1} to again restrict $r\in[0,\epsilon]$ such that $(1-2m/r)^{-1}\leq2$.
	\begin{align}
		\left|\left|U_2(m,p)-U_2(u,q)\right|\right|_\infty &\leq \int\limits_0^r(p_0+\rho_0)\left|\left|\left(4\pi r'(\rho(p))-\rho(q))+\frac{m-u}{r'^2}\right)\right|\right|_\infty\diff r'\\
		&\leq (p_0+\rho_0)M\left|\left|p-q\right|\right|_\infty + 8\pi\rho_0\epsilon(p_0+\rho_0)\left|\left|\frac{m-u}{r^2}\right|\right|_\infty
		\label{eq:5-Zeroes-TOV-Ineq-2}
	\end{align}
	By condition~\eqref{5-Zeroes-Equ-TOV-Conditions-2} we see that the term
	\begin{equation}
		\frac{m-u}{r^2}\rightarrow0
		\label{eq:5-Zeroes-TOV-Limit-UM-Div-R2}
	\end{equation}
	in the limit $r\rightarrow0$ and in general is bound by $8\pi\epsilon\rho_0/3$.
	This shows that for every $m,u$ there must exist a $\tau_0>0$ such that~\eqref{eq:5-Zeroes-TOV-Limit-UM-Div-R2} has its maximum value, or it is identically $0$.
	\begin{equation}
		\left|\left|\frac{m-u}{r^2}\right|\right|_{\infty} = \frac{1}{\tau_0^2}|m(\tau_0)-u(\tau_0)|\leq\frac{1}{\tau_0^2}||m-u||_{\infty}
		\label{eq:5-Zeroes-TOV-Last-Term}
	\end{equation}
	Let $m_k\rightarrow u$ be a series in $(K.||\cdot||_\infty)$.
	Without loss of generality, we can remove all elements of the series for which the maximum is identically $0$ and still have more than finite elements left.
	One can obtain another series $\tau_k$ by selecting the maximum point as done in equation~\eqref{eq:5-Zeroes-TOV-Last-Term}.
	This series is bound by assumption in $[0,\epsilon]$ and thus has a convergent subsequence.
	Consider $\tau_k\rightarrow\tau$.
	Then by definition we have
	\begin{equation}
		\left|\left|\frac{m_k-u}{r^2}\right|\right|_{\infty} = \frac{1}{\tau_k^2}\left|m_k(\tau_k)-u(\tau_k)\right| \leq \frac{1}{\tau_k^2} \left|\left|m_k-u\right|\right|_{\infty}.
		\label{eq:5-Zeroes-TOV-Last-Term-Series-2}
	\end{equation}
	If $\tau=0$, condition~\eqref{5-Zeroes-Equ-TOV-Conditions-2} implies that equation~\eqref{eq:5-Zeroes-TOV-Last-Term-Series-2} vanishes in the limit.
	In the case $\tau>0$, we can approximate equation~\eqref{eq:5-Zeroes-TOV-Limit-UM-Div-R2} by taking the smallest $\tau_{j}>0$ of the series and the result is proven again.
	This proves continuity of $T$ which in turn shows local existence of \ac{TOV} solutions.
\end{proof}%
\newpage
\section{Conclusion}
\label{sec:80-Outlook}
This thesis has dealt with the question to develop methods and describe zero values of the \ac{TOV} and \ac{LE} equation.
We demonstrated how to implicitly derive a fully special relativistic \ac{eos} in the example of a non-interacting gas (section~\ref{sec:2-Thermo}).
It was derived by using an explicit result for the partition function $\mathcal{Z}$ in equation~\ref{2-IntEner-PartFunc}, calculating the internal energy $\mathcal{U}$ and pressure $p$.
Afterwards an implicit relation between the energy density $\mathcal{U}/V$ and the pressure $p$ was found.
These results are also applicable in a wider context when generalising to special relativistic treatments of thermodynamics.
Future work could be carried forward in the context of interacting gases or by introducing quantum effects and varying constituents.\\
Further, the derived \ac{eos} and a standard polytropic \ac{eos} $\rho=Ap^{1/\gamma}$ were applied in numerical solutions of the \ac{TOV} equation.
Exact results in the non-relativistic \ac{LE} case (section~\ref{subsec:5-zeroes-le-equation}) show that this equation has finite zero values in the range $1<n<5$ of the polytropic index.
Numerical results of the relativistic \ac{TOV} case with a polytropic \ac{eos} demonstrate a similar behaviour.
We assess the question if hypothesis~\ref{theo:5-Zeroes-TOV-Zero-Value-Hypothesis} is provable by a similar procedure as in the \ac{LE} case.
We observe that its proof~\cite{quittnerSuperlinearParabolicProblems2007} (shortened version see section~\ref{subsec:99-App-NoGlobalLE}) heavily relies on pieces such as Bochner's identity or gradient estimates.
Although some of these statements directly survive when going to a curved space, it is unclear how the \ac{TOV} equation could be reformulated in terms of the Laplace operator on a spacelike hypersurface, in order to apply said arguments.
It's an affair that can be attributed to the non-linear first order nature of the Einstein equations~\ref{eq:01-Intr-Einstein-Equ}.\\
Another ansatz is to compare solutions to functions with known zero values.
Statements about global solutions such as Theorem XIII in~\cite[p. 99]{walterOrdinaryDifferentialEquations1998} could prove useful in this scenario but would require good approximations of the defect $(m,p)'-U(m,p)$ (see equations~\ref{eq:5-Zeroes-TOV-Operator-Def-1},\ref{eq:5-Zeroes-TOV-Operator-Def-2}) of the \ac{TOV} equation.\\
An additional question raised is how the behaviour of zero values changes if rotating or charged stars are being considered.
A stellar object in the Kerr metric~\cite{kerrGravitationalFieldSpinning1963} would be expected to yield smaller zero values compared to \ac{TOV} results due to a in general larger attractive force.
Since spherical symmetry in this case is not a valid assumption anymore, a new derivation is required which could however readily use the developed results and methods.
We expect similar behaviour in the charged case.\\
In the pursuit of classifying stellar model properties this thesis has improved on previous knowledge and left new and interesting questions unanswered.
The developed numerical methods are readily available and easy to use while the obtained results provide valuable background information to any theoretical astrophysicist.

\newpage
\appendix
\pagenumbering{roman}
\renewcommand{\thesection}{\Alph{section}}
\renewcommand{\thesubsection}{\Alph{subsection}}
\renewcommand\thetheorem{\thesubsection.\arabic{theorem}}

\section*{Appendix}
\addcontentsline{toc}{section}{Appendix}
\begin{subsection}{Known Exact Solutions of the \texorpdfstring{\acrNoHyperlink{\acs}{LE}}{LE} Equation}
\label{99-App-A-Exact-LE-Solutions}
This section relies on information in~\cite{weissteinLaneEmdenDifferentialEquation2020} and~\cite{chandrasekharChandrasekharAnIntroductionStudy1958}.
The \ac{LE} equation is
\begin{equation}
	\frac{1}{\xi^2}\frac{d}{d\xi}\left(\xi^2\frac{d\theta}{d\xi}\right)+\theta^n=0
	\label{eq:99-App-A-LE-Equation}
\end{equation}
which for $n=0$ transforms readily into
\begin{equation}
	\int\limits_0^\xi\frac{d}{d\xi}\left(\xi'^2\frac{d\theta}{d\xi}\right)d\xi' = -\int\limits_0^\xi\xi'^2 d\xi'.
	\label{eq:99-App-One-Integral-LE}
\end{equation}
Both sides can be evaluated directly and then simplified further.
\begin{align}
	\xi^2\frac{d\theta}{d\xi} &= -\frac{\xi^3}{3}\\
	\theta(\xi) &= \theta(0)-\frac{\xi^2}{6}
\end{align}
With the initial condition $\theta(0)=1$, we obtain the desired result.
For $n=1$, equation~\ref{eq:99-App-A-LE-Equation} transforms into
\begin{equation}
	\frac{d}{d\xi}\left(\xi^2\frac{d\theta}{d\xi}\right)+\xi^2\theta=0
	\label{eq:99-App-LE-Normal}
\end{equation}
we start by substituting $U=\theta x$ and thus obtain (also multiplying by $x^2$)
\begin{align}
	0 &= \frac{dU}{d\xi} + \xi\frac{\diff^2  U}{d\xi^2} - \frac{dU}{d\xi} + \xi U\\
	-U &= \frac{\diff^2  U}{d\xi^2}
\end{align}
and the last equation can be solved with a linear combination of $\cos(\xi)$ and $\sin(\xi)$.
Transforming back to $\theta$, we then have
\begin{equation}
	\theta(\xi) = A\frac{\sin(\xi)}{\xi} - B\frac{\cos(k\xi)}{\xi}.
	\label{eq:99-App-LE-Sol-N1-Ansatz}
\end{equation}
The need for a well defined limit at $\xi\rightarrow0$ implies that $B=0$ and thus since $\sin(z)/z\rightarrow1$ for $z\rightarrow0$, we have $A=\theta(0)=1$ and
\begin{equation}
	\theta(\xi) = \frac{\sin(\xi)}{\xi}.
	\label{eq:99-App-LE-Sol-N1-Result}
\end{equation}
For $n=5$, we start by making the two substitutions $x=1/\xi$ and $\theta=ax^\omega$ as done in~\cite[94\psqq]{chandrasekharChandrasekharAnIntroductionStudy1958} and then transforming the \ac{LE} equation into
\begin{alignat}{5}
	&x^4\frac{d\theta}{d\xi}&&+\theta^n&&=0\\
	&a\omega(\omega-1)x^{\omega+2}&&+a^n x^{n\omega} &&=0.
\end{alignat}
From this we see that $\omega+2=n\omega$ and $a\omega(\omega-1)=-a^n$ needs to be satisfied, since the equation needs to hold for all $x\in\R_{\geq0}$.
Rewriting these conditions, we obtain the singular solution for the \ac{LE} equation
\begin{equation}
	\theta(x) = \left(\frac{2(n-3)}{(n-1)^2}\right)^{1/(n-1)}x^{2/(n-1)}
	\label{eq:99-App-LE-Sol-N2}
\end{equation}
since with $x=1/\xi$, we have $\theta(\xi)\rightarrow\infty$ for $\xi\rightarrow0$.
Notice that the solution is only valid if $n\geq3$.
We can use this solution to the \ac{LE} equation and perturb it to make the more general ansatz
\begin{equation}
	\theta(x) = ax^\omega z(x).
	\label{eq:99-App-LE-Sol-Singular}
\end{equation}
If $n<3$, the factor $a$ has to be replaced with a more general one.
The singular solution is obtained when taking $z=1$.
Using another transformation $1/x=\xi=\exp(-t)$, we obtain
\begin{alignat}{5}
	&\frac{1}{\xi^2}\frac{d}{d\xi}\left(\xi^2\frac{d\theta}{d\xi}\right) &&+ \theta^n &&=0\\
	&x^4\frac{\diff^2 \theta}{dx^2} &&+ \theta^n &&=0\\
	ax^{\omega+2}\Bigl[&x^2\frac{\diff^2 z}{dx^2} + 2\omega x\frac{dz}{dx} + \omega(\omega-1)z\Bigr] &&+ a^n x^{n\omega}z^n &&=0\\
	&\frac{\diff^2 z}{dt^2} + (2\omega-1)\frac{dz}{dt}+\omega(\omega-1)z &&+ a^{n-1}z^n &&=0\\
	&\frac{\diff^2 z}{dt^2} + \frac{5-n}{n-1}\frac{dz}{dt} + 2\frac{3-n}{(n-1)^2}z &&+ 2\frac{(n-3)}{(n-1)^2}z^n &&=0.
\end{alignat}
For $n=5$, we obtain
\begin{equation}
	\frac{\diff^2 z}{dt^2}=\frac{1}{4}z(1-z^4).
	\label{eq:99-App-LE-Transform-N5}
\end{equation}
we multiply both sides with $dz/dt$ and integrate
\begin{equation}
	\frac{1}{2}\left(\frac{\diff^2 z}{dt^2}\right) = \frac{1}{8}z^2-\frac{1}{24}z^6+D
	\label{eq:99-App-A-LE-For-n-5-In-z-writing}
\end{equation}
where $D$ is the integration constant.
For $\xi\rightarrow0$, we expect $\theta\rightarrow\theta_0$ and thus $z=\theta_0\e^{-\omega t}(1/a+\mathcal{O}(\e^{-t}))$ as $t$ approaches $\infty$.
We immediately see that $dz/dt$ exhibits a similar behaviour and thus the integration constant $D$ has to vanish.\\
The right-hand side of equation~\ref{eq:99-App-A-LE-For-n-5-In-z-writing} cannot get negative since otherwise $z$ would take complex values which enables us to take the square root with a minus sign\footnote{This only determines the direction in which $t$ is defined, so it is arbitrary.} and integrate again
\begin{equation}
	\int\left(1-\frac{1}{3}z^4\right)^{-1/2}\frac{dz}{z}=-\frac{1}{2}\int dt.
	\label{eq:99-App-LE-N5-Intermediate-Eq}
\end{equation}
We change the integration by again substituting the variables $1/3z^4=\sin^2(\alpha)$ and calculate $dz/z=2\cos(\alpha)/\sin(\alpha)d\alpha$ as well as $-dt=d\alpha/\sin(\alpha)$.
Now we rewrite the integral
\begin{align}
	\int\frac{1}{\sqrt{1-\sin^2(\alpha)}}\frac{2\cos(\alpha)}{\sin(\alpha)}\frac{d\alpha}{} &= -\int dt\\
	\int\frac{2d\alpha}{\sin(\alpha)} &= -\int dt.
\end{align}
Evaluating those integrals leads us to
\begin{equation}
	\log(\tan(\alpha/2))+\log(1/C) = -t
	\label{eq:99-App-A-LE-For-n-5-Integral}
\end{equation}
where the integration constant has been chosen in advance to simplify the next expressions.
From here, we further manipulate equation~\ref{eq:99-App-A-LE-For-n-5-Integral} and combine it with our previous substitution $1/3z^4=\sin^2(\alpha)$ to obtain
\begin{equation}
	\frac{1}{3}z^4=\sin^2(\alpha)=\frac{4\tan^2(\alpha/2)}{\left(1+\tan^2(\alpha/2)\right)}^2
	\label{eq:99-App-LE-N5-Intermediate-Eq2}
\end{equation}
and with the solution for out integral before and plugging in the substitution from the beginning $\xi=\e^{-t}$, we have
\begin{equation}
	z=\pm\left(\frac{12C^2\xi^2}{1+C^2\xi^2}\right)^{1/4}.
	\label{eq:99-App-LE-N5-Intermediate-Eq3}
\end{equation}
For $\theta$, we need to have $\theta\rightarrow\theta0=1$ as $\xi\rightarrow0$.
This means that $C=1$ and with $\theta=ax^\omega z$, we obtain
\begin{equation}
	\theta = \frac{1}{\left(1+\frac{1}{3}\xi^2\right)^{1/2}}.
	\label{eq:99-App-LE-N5-LE-Result}
\end{equation}
We see that this equation has no zero value and tends to $0$ as $\xi\rightarrow\infty$.
\end{subsection}
\subsection{New Exact \texorpdfstring{\acrNoHyperlink{\acs}{LE}}{LE} Series Solution at Index 2}
\label{subsec:99-app-new-exact-le-sol-n2}
Although not the primary concern of this thesis, the author was able to derive another exact solution for the \ac{LE} equation at the exponent $n=2$ when
considering a series expansion of $\theta$ around the point $\xi=0$ with initial conditions
\begin{equation}
	\theta=\sum\limits_{m=0}^\infty a_m\xi^m \hspace{1cm} a_0=\left.\theta\right|_{\xi=0}=\theta_0
	\hspace{1cm} a_1=\left.\frac{d\theta}{d\xi}\right|_{\xi=0}=0
	\label{eq:99-App-LEN2-Series-Initial-Conditions}
\end{equation}
Since the series is absolut convergent, we can plug it into the \ac{LE} equation~\ref{eq:3-Mass-Equ-Lane-Emden-Eq} and use the Cauchy Product formula.
\begin{equation}
	\sum\limits_{m=2}^\infty m(m-1)a_m\xi^{m-2}+\sum\limits_{m=1}^\infty (2m)a_m\xi^{m-2} +
	\sum\limits_{m=0}^\infty\sum\limits_{k=0}^m a_{m-k}a_k\xi^m = 0
	\label{eq:99-App-LEN2-Series-PluggedIn}
\end{equation}
\begin{theorem}
	The odd coefficients $a_{2m+1}$ of this series expansion vanish.
\end{theorem}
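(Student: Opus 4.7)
The plan is to convert equation~\eqref{eq:99-App-LEN2-Series-PluggedIn} into an explicit two-step recursion for the $a_m$ and then run a strong induction on $m$ where the inductive step reduces to a parity count of indices in the Cauchy convolution.

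First I would re-index each of the three sums so that they are all indexed by the common power $\xi^p$. The first sum becomes $\sum_{p=0}^{\infty}(p+2)(p+1)a_{p+2}\xi^p$; the second, after shifting, becomes $\sum_{p=-1}^{\infty}2(p+2)a_{p+2}\xi^p$. The only anomalous term is the $p=-1$ contribution $2a_1\xi^{-1}$, which is killed by the initial condition $a_1=0$. Matching coefficients of $\xi^p$ for $p\geq 0$ then yields
\begin{equation}
(p+2)(p+3)\,a_{p+2} \;=\; -\sum_{k=0}^{p} a_{p-k}\,a_k .
\end{equation}
This is a clean recursion that expresses each $a_{p+2}$ in terms of the lower-indexed coefficients only.

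Next I would prove $a_{2m+1}=0$ for all $m\geq 0$ by strong induction. The base case $m=0$ is the initial condition $a_1=0$. For the inductive step, suppose $a_{2j+1}=0$ for every $j<m$, and apply the recursion at $p=2m-1$:
\begin{equation}
(2m+1)(2m+2)\,a_{2m+1} \;=\; -\sum_{k=0}^{2m-1} a_{2m-1-k}\,a_k .
\end{equation}
In every summand, the two indices $k$ and $2m-1-k$ sum to the odd number $2m-1$, so exactly one of them is odd and at most $2m-1<2m+1$. By the inductive hypothesis, that odd-indexed factor vanishes, every term on the right is zero, and hence $a_{2m+1}=0$.

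I do not expect a genuine obstacle here; the argument is essentially a parity statement saying that the Cauchy square of a series supported on even indices is again supported on even indices, combined with the fact that the linear part of the Lane-Emden operator preserves parity. The only bookkeeping to watch is the shift of the second sum down to $p=-1$, which is what forces the initial-condition constraint $a_1=0$ to be consistent with the ODE in the first place; once this is noted, the induction is mechanical.
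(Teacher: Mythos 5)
Your proposal is correct and follows essentially the same route as the paper: derive the recursion $(m+2)(m+3)a_{m+2}=-\sum_{k=0}^{m}a_{m-k}a_k$ by matching coefficients, then induct on the odd index using the fact that in each convolution term the two indices sum to an odd number, so one factor is odd-indexed and vanishes by hypothesis. Your handling of the $p=-1$ boundary term from the shifted second sum is a slightly more careful piece of bookkeeping than the paper's, but the argument is the same.
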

\begin{proof}
	We rewrite the summations of equation~\ref{eq:99-App-LEN2-Series-PluggedIn} to start at the
	same index $m=0$ and combine them
	\begin{equation}
		\sum\limits_{m=0}^\infty\left((m+2)(m+1)a_{m+2}\xi^{m}+(2m+2)a_{m+1}\xi^{m-1} + \sum\limits_{k=0}^m a_{m-k}a_k\xi^m\right) = 0
		\label{eq:99-App-LE-New-Recursive-1}
	\end{equation}
	With equation~\ref{eq:99-App-LEN2-Series-Initial-Conditions}, we can start the summation
	in the middle one index higher and separate the term $\xi^m$.
	This equation has to be true
	inside the radius of convergence of the series and thus needs to vanish for ambiguous $\xi$.
	\begin{equation}
		(m+2)(m+1)a_{m+2}+2(m+2)a_{m+2}+\sum\limits_{k=0}^m a_{m-k}a_k = 0
		\label{eq:99-App-LE-New-Recursive-2}
	\end{equation}
	and upon further manipulation results in the recursive description for the coefficients
	of the series
	\begin{equation}
		a_{m+2} = -\frac{1}{(m+2)(m+3)}\sum\limits_{k=0}^m a_{m-k}a_k.
		\label{eq:99-App-LEN2-Recursive-Coefficients}
	\end{equation}
	We show the statement by induction.
	For $a_1$ it is already true.
	Let the statement be true for all odd values $2k+1\leq2m+1$.
	Writing down $a_{2m+3}$ gives us
	\begin{equation}
		a_{2m+3} = -\frac{1}{(2m+3)(2m+4)}\left(a_0 a_{2m+1}+a_1 a_{2m}+\dots+a_{2m}a_1+a_{2m+1}a_0\right).
		\label{eq:99-App-LE-New-Recursive-3}
	\end{equation}
	It is clear that in the summation odd and even coefficients get paired and will thus
	vanish completely.
\end{proof}
This proof shows that we can restrict ourselves to the subsequence $b_m=a_{2m}$ and the subseries with initial values given by
\begin{equation}
	\theta = \sum\limits_{m=0}^\infty b_m\xi^{2m} \hspace{1cm} b_{m+1} =
	-\frac{1}{(2m+2)(2m+3)}\sum\limits_{k=0}^m b_{m-k}b_k \hspace{1cm} b_0=\theta_0
	\label{eq:99-App-LEN2-bn-Definition}
\end{equation}
\begin{theorem}
	The series $\theta=\sum\limits_{m=0}^\infty b_m\xi^{2m}$ converges for $\xi\leq1$.
\end{theorem}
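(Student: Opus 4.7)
My plan is to dominate the coefficients by a simple geometric sequence, specifically to prove that $|b_m|\leq 6^{-m}$ for every $m\geq 0$ (working under the natural normalisation $\theta_0 = b_0 = 1$ inherited from Section~\ref{subsec:5-zeroes-le-equation}). Once this bound is in hand, the absolute series $\sum_{m\geq 0}|b_m|\,\xi^{2m}$ is dominated term-by-term by the convergent geometric series $\sum_{m\geq 0}(\xi^2/6)^m$ whenever $\xi^2<6$; in particular the original series converges absolutely on $[0,1]$, which is exactly what the theorem asserts (with room to spare).

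To prove the bound I would proceed by induction on $m$. The base case $m=0$ is immediate from $|b_0|=1=6^0$. For the inductive step, assume $|b_k|\leq 6^{-k}$ for $0\leq k\leq m$ and apply the triangle inequality to the recursion~\eqref{eq:99-App-LEN2-bn-Definition}:
\[
|b_{m+1}|\;\leq\;\frac{1}{(2m+2)(2m+3)}\sum_{k=0}^{m}|b_{m-k}|\,|b_k|\;\leq\;\frac{m+1}{(2m+2)(2m+3)}\,6^{-m}\;=\;\frac{1}{2(2m+3)}\,6^{-m}.
\]
To close the induction I need $\tfrac{1}{2(2m+3)}\leq \tfrac{1}{6}$, i.e.\ $2m+3\geq 3$, which holds for every $m\geq 0$, with equality exactly at $m=0$. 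That tightness at $m=0$ is precisely what forces the choice of the constant $6$ in the ansatz: the combinatorial factor $(2m+2)(2m+3)$ at $m=0$ equals $6$, and the ratio $\sum_{k=0}^{m}6^{-(m-k)}6^{-k}=(m+1)6^{-m}$ inherits a single power of $6$ per induction step.

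Concluding, for every $\xi$ with $\xi^2<6$, and in particular on $[0,1]$, Weierstrass' $M$-test gives absolute and uniform convergence of $\sum b_m\xi^{2m}$, which proves the theorem. I do not foresee any serious obstacle here; the step that takes a moment of thought is simply recognising the right geometric ansatz by inspection of the recursion at $m=0$. A minor remark is that the estimate $\xi^2<6$ is certainly not sharp (numerical evidence from Section~\ref{subsec:4-NumSol-Sec-TOV-Exponents} indicates that the true radius of convergence, determined by the first zero $\xi_0$ of the $n=2$ Lane--Emden solution, is considerably larger), but a crude geometric majorant is more than enough to cover the stated range $\xi\leq 1$.
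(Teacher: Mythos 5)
Your proof is correct, but it follows a genuinely different route from the paper's. The paper establishes the weaker decay estimate $|b_{m+1}|\leq 1/(4m+6)$ and, in a separate induction, shows that the coefficients $b_m$ alternate in sign, then invokes the Leibniz criterion to conclude convergence for $\xi\leq1$. Your geometric ansatz $|b_m|\leq 6^{-m}$ closes correctly under the same recursion (the required inequality $2(2m+3)\geq 6$ holds for all $m\geq0$, with equality at $m=0$, consistent with $b_1=-1/6$), and it buys strictly more: absolute and uniform convergence on any $[0,r]$ with $r<\sqrt{6}$, hence a radius of convergence of at least $\sqrt{6}\approx 2.45$ rather than merely $1$, with no need to track signs at all. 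By contrast, the paper's $O(1/m)$ bound is not summable, so it cannot yield absolute convergence at $\xi=1$ and must lean on the alternating structure; its appeal to Leibniz moreover tacitly requires monotone decrease of the terms $|b_m|\xi^{2m}$, which is not explicitly verified there --- a step your argument sidesteps entirely. The sign-alternation property the paper proves remains of independent interest (and is compatible with your bound), but it is not needed for the convergence claim. One cosmetic point: the numerical evidence on where the series actually breaks down lives in the appendix discussion of the $n=2$ solution (around $\xi\approx 3.9$), not in the section on zero values of the TOV exponents, but this does not affect your proof.
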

\begin{proof}
	We start by showing that $|b_{m+1}|\leq1/(4m+6)$.
	Using the triangle inequality on~\ref{eq:99-App-LEN2-bn-Definition}, we obtain
	\begin{equation}
		|b_{m+1}| \leq \frac{1}{(2m+2)(2m+3)}\sum\limits_{k=0}^m|b_{m-k}b_k|.
		\label{eq:99-App-LE-New-Recursive-bn}
	\end{equation}
	For $m=0$, we have $b_1=-1/6$ and thus $|b_1|\leq1/3$ and so the statement is true for $m=0$.
	If the statement holds for all $m<m+1$, we have
	\begin{equation}
		|b_{m+1}| \leq \frac{m}{(2m+2)(2m+3)} = \frac{1}{2}\frac{m}{(m+1)}\frac{1}{(2m+3)}
		\leq\frac{1}{2}\frac{1}{(2m+3)}
		\label{eq:99-App-LE-New-Bn-Inequality}
	\end{equation}
	which proves the first statement.
	To see that $b_m$ is a alternating sequence, we again inspect~\ref{eq:99-App-LEN2-bn-Definition}.
	The first element of the series is $b_0=1$ and the second is $b_1=-1/6$ so the initial statement is again correct.
	Let $b_m$ be alternating for all $m<2m+1$.
	Then equation~\ref{eq:99-App-LEN2-bn-Definition} for odd values reads
	\begin{equation}
		b_{2m+1} = -\frac{1}{(4 m+4)(4 m+5)}\left(b_{2m}b_0+b_{2m-1}b_1+\dots+b_0 b_{2m}\right)
		\label{eq:99-App-LE-Bn-Recursion}
	\end{equation}
	This shows that if all odd values $b_{2k+1}$ are negative and all even ones are positive,
	then $b_{2m+1}$ is negative too.
	The same holds true when the index $2m+2$ is even.
	By the Leibniz criterion the series converges if $\xi\leq1$.
\end{proof}
\begin{figure}[H]
	{\centering
	\import{pictures/5-MoreExactSolutions/}{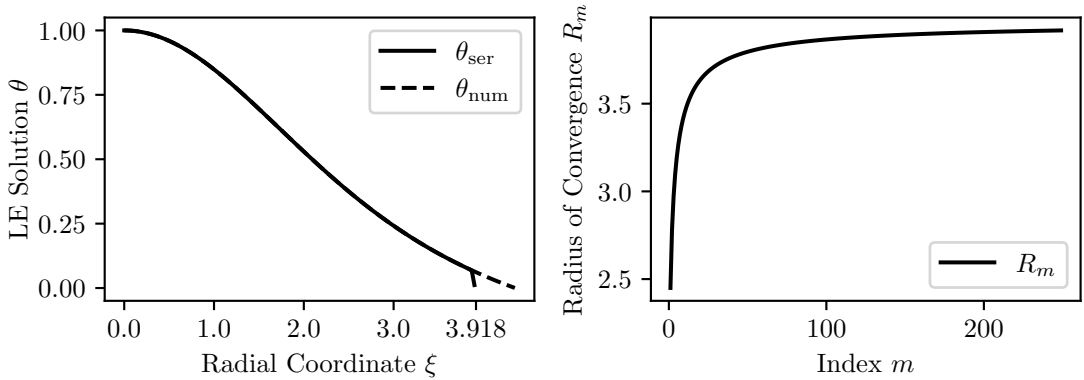}
	}%
	\caption[\acrNoHyperlink{\acs}{LE} Solution for $n=2$]{\ac{LE} Solution for $n=2$}
	\label{fig:99-App-Plt-LEN2-Plot}
	\small
	We see the power the series calculated with the coefficients explained above.
	The far right tick is the last calculated value for $R_m$ where plotting is stopped) and the sequence $R_m=(|a_m|)^{-1/m}$ which approaches the radius of convergence.
\end{figure}\noindent
These results can be used to calculate the power series of the \ac {LE} equation numerically.
Figure~\ref{fig:99-App-Plt-LEN2-Plot} shows those results.
Good agreement is found between the two solutions
as the difference $\Delta=\theta_{calc}-\theta_{ser}$ is not larger than $0.00015$.
The solutions differ for larger values around $\approx3.918$
The top right plot indicates that the radius of convergence might actually be larger than
where solutions deviate.
One important observation is that in this procedure, the coefficients were calculated up to $b_{250}\approx-4.7723\times10^{-296}$ which is in proximity to the floating point limit of python given by $\approx2.2251\times10^{-308}$ and indicates that numerical errors may be the reason the series starts to diverge.
Also in order to achieve sufficient numerical precision, we would need to calculate higher order terms of the series, which is supported by $(R_{250})^{500}\approx3.2159\times10^{296}$ and shows that these parts of the series still yield large contributions to the value of $\theta$ at this point.
\subsection{Exact \texorpdfstring{\acrNoHyperlink{\acs}{TOV}}{TOV} Solution in Absence of Mass}
\label{subsec:99-App-TOV-No-Mass}
In addition to the explicit \ac{LE} solution for $n=2$, the author derived a limiting case for solutions of the \ac{TOV} equation.
\begin{theorem}[TOV Exact Solution]
	The TOV equation~\ref{eq:4-NumSol-Equ-TOVEqBasic1} with a polytropic EOS $\rho=Ap^{1/\gamma}$ has a well defined limiting case where $A\rightarrow0$ with $m=0$ and
	\begin{equation}
		p = \frac{p_0}{2\pi rp_0+1}
		\label{eq:99-App-TOV-No-Mass-Sol}
	\end{equation}
\end{theorem}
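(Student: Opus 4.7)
The plan is to take the limit $A \to 0$ in the TOV system, reducing it to a separable scalar ODE in $p$, and then integrating. First, I would note that the polytropic EOS $\rho = A p^{1/\gamma}$ forces $\rho \equiv 0$ when $A$ vanishes, so the mass equation $\partial_r m = 4\pi \rho r^2$ together with $m_0 = 0$ gives $m \equiv 0$. Thus the hypothesis "$A \to 0$ with $m = 0$" is internally consistent and collapses the TOV system onto a single pressure equation.

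Second, I would rewrite the pressure equation so that no factor is individually singular in the limit. The right-hand side contains $4\pi r^3 p / m$ and $(1 - 2m/r)^{-1}$ which separately misbehave as $m \to 0$, so I would first clear the $1/m$ by rewriting
\begin{equation*}
\frac{dp}{dr} = -\frac{(\rho + p)(4\pi r^3 p + m)}{r^2 (1 - 2m/r)},
\end{equation*}
and only then substitute $m = 0$ and $\rho = 0$. The result is a separable scalar ODE in $p$ alone, with a right-hand side quadratic in $p$.

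Third, I would integrate by separation of variables, recognizing the reduced equation as of the form $dp/p^2 = -f(r)\,dr$, impose $p(0) = p_0$ to fix the integration constant, and rearrange to obtain the closed-form Möbius-type expression of the claimed shape $p_0 / (c\, p_0\, r + 1)$ with an explicit constant $c$ coming from the $4\pi$ prefactor in Einstein's equations. A direct substitution of this candidate into the reduced ODE then verifies the formula and confirms it takes the value $p_0$ at the origin.

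The step requiring the most care is the order of operations in the limit: because the raw TOV right-hand side has $1/m$ factors, a naive $m \to 0$ limit is ill-defined, and the algebraic rearrangement above is essential before any substitution is made. One should also check that the combined EOS and Buchdahl-type constraint $1 - 2m/r > 0$ is preserved trivially in this degenerate regime. Once the rearrangement is settled, the remaining work is routine separation of variables and final verification by substitution.
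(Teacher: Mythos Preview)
Your reduction to the separable ODE $\partial_r p=-4\pi r p^2$ and its integration are correct and coincide with the paper's computation. Note, however, that separating $dp/p^2=-4\pi r\,dr$ gives $p=p_0/(1+2\pi p_0 r^2)$, with $r^2$ rather than the linear $r$ appearing in the displayed formula; the paper's own proof also arrives at $r^2$, so the linear $r$ in the statement is a typo you should not try to reproduce.

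The substantive difference is in what ``well defined limiting case'' is taken to mean. You interpret it as: rewrite the right-hand side so that no $1/m$ remains, set $A=0$ (hence $\rho=m=0$), and solve. The paper interprets it as continuous dependence of solutions on the parameter $A$ and does additional work to secure this: it introduces the change of variables $p=y^a$, $m=Arv$, and shows that for $a\geq n+1$ the transformed system is Lipschitz in $(y,v)$ uniformly on a compact domain, so Picard--Lindel\"of yields a unique solution depending continuously on $A$; only then does it pass to $A\to0$. Your algebraic rearrangement handles the apparent $1/m$ singularity but does not, by itself, show that the family of solutions $p_A$ for $A>0$ actually converges to the solution of the degenerate equation. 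If you want to match the paper's level of rigor, you would need either that continuous-dependence argument or an equivalent one.
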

\begin{proof}
	First we transform the TOV equation~\ref{3-Mass-TOV-Eq} using $p=y^a$ with $a>0$ and $m=Arv$ and together with the polytropic EOS $\rho=Ap^{1/\gamma}$ obtain
	\begin{align}
		\frac{\partial v}{\partial r} &= 4\pi ry^{a/\gamma}-Av\\
		ay^{a-1}\frac{\partial y}{\partial r} &= -\frac{vA^2 y^{a/\gamma}}{r}\left(1+\frac{y^a}{Ay^{a/\gamma}}\right)\left(\frac{4\pi r^2 y^a}{Av} +1\right)\frac{1}{1-2vA}
		\label{tmp-label-2}
	\end{align}
	Rearranging the second equation one obtains
	\begin{equation}
		\frac{\partial y}{\partial r} = -\frac{y^{a/\gamma-a+1}}{ar}\left(A+y^{a-a/\gamma}\right)\left(4\pi r^2 y^a +Av\right)\frac{1}{1-2vA}.
		\label{eq:99-App-TOV-No-Mass-Transformed-TOV-Eq}
	\end{equation}
	Using $\gamma=1+1/n$ with $n>0$, we see that this equation is continuous in every variable $(r,y,v)\in\R_{>0}\times\R_{\geq0}\times[0,1/2A)$.
	We restrict ourselves to a compact domain $r\in \overline{B_{\delta}(\tau)}$ and $v\in[0,1/2A-\epsilon]$ where $0<\epsilon<1/2A$.
	We choose $0<\delta$ and $0\leq\tau$ in such a way that $0<\epsilon<1/2A$ is satisfied for some $\epsilon$ which is possible since $m\rightarrow0$ for $r\rightarrow0$.
	To obtain Lipschitz continuity in $(y,v)$, all of the following conditions need to be fulfilled.
	\begin{equation}
		\frac{a}{\gamma}-a+1 \geq 1 \hspace{1cm} a-\frac{a}{\gamma} \geq 1 \hspace{1cm} a \geq 1
		\label{eq:99-App-TOV-No-Mass-Parameter-Condition}
	\end{equation}
	The second equation implies the first and the third.
	Thus we only need to choose $a\geq(1-1/\gamma)^{-1}$.
	With $\gamma=1+1/n$ we can rewrite this equation to
	\begin{equation}
		a\geq n+1
		\label{eq:99-App-TOV-No-Mass-Ineq-1}
	\end{equation}
	which can be easily satisfied.
	This now shows with extension of the Picard-Lindelöf Theorem that there exists a unique solution for given initial values $\tau, y(\tau), v(\tau)\in B_{\delta}(\tau)\times\R_{\geq0}\times[0,1/2A-\epsilon)$ that especially continuously depends on $A$.
	Without loss of generality we can choose $\delta$ small enough such that solutions are positive in $p$.
	By transforming back equation~\ref{tmp-label-2} to
	\begin{equation}
		\frac{\partial p}{\partial r} = -\frac{1}{r^2}\left(Ap^{1/\gamma}+p\right)\left(4\pi r^3 p+vA\right)\left(1-2vA\right)^{-1}
		\label{eq:99-App-TOV-No-Mass-TOV-Eq-Normal}
	\end{equation}
	and letting $A\rightarrow0$, we obtain
	\begin{equation}
		\frac{\partial p}{\partial r} = - 4\pi rp^2
		\label{eq:99-App-TOV-No-Mass-TOV-Eq-No-Mass}
	\end{equation}
	which is then solved by
	\begin{equation}
		p = \frac{\tilde{p}}{2\pi\tilde{p}(r^2-\tau^2)+1}
		\label{eq:99-App-TOV-No-Mass-Sol-1}
	\end{equation}
	where $\tilde{p}$ is the initial value at $r=\tau$.
	It is clear that this solution can be extended to $r\in[0,\infty)$ by choosing arbitrary small $\tau$.
	In the limit, the result equals the hypothesis.
\end{proof}
\subsection{Numerical Optimisations}
\label{subsec:99-App-Numerical-Optimisations}
In his success to solve the differential equations, numerical optimisations of the solving routine have been gathered by the author.
When solving the \ac{TOV} or \ac{LE} equation numerically calculations for the next step clearly depend on results from the before calculated one.
This simple fact which is integral to the concept of an \ac{ODE} means that numerics is single thread \footnote{Thread refers to a sequence of instructions given to a processing unit.} bound which already puts one at a systematic disadvantage.
This means any effort to parallelise the given task must focus on distributing the individual solving routines to respective threads.
With this first consideration at hand a server with high number of processing units may come into mind to then speed up the calculation.
However in the situation of section~\ref{subsec:4-NumSol-Sec-TOV-Exponents} this would only allow us to calculate more results in parallel but not speed up the individual solving routines.
And this is exactly the bottleneck in our case.
When looking at figure~\ref{fig:4-NumSol-Plt-TOV-Exponents-Combo} it is clear that higher values for $r_0$ are of interest where the curves reach the current maximum limit.
From this figure it is also clear that the effort will increase exponentially since we are bound to a single thread for individual solving routines.
Hence parallel computation of this problem would not help in this case especially considering that server processing units typically have lower single threaded performance than consumer chips.\\
Explicit parallelisation is achieved by distributing different values of $n$ to different threads.
More efficient distribution of those calculations can make a difference but effects discussed above and below have more impact.
Furthermore in order to fully understand how to improve for this distribution it is partly necessary to calculate the results which is then even less of an issue afterwards.\\
Another optimisation evolved around not computing unnecessary cases.
Consider again figure~\ref{fig:4-NumSol-Plt-TOV-Exponents-Combo}.
It is clear (however not proven until now) that for each combination $A,p_0$ at some value $n_0(A,p_0)$ values for $r_0$ will reach the numerical upper limit of the routine.
For other solving routines with $n\geq n_0$ the same will happen which makes it unnecessary to even begin to compute them since they will not be shown in the graph anyways.
This behaviour has been verified for a number of numerical combinations of parameters before beeing utilised in the final version.
For the solving routine this means explicitly that once there have been $2$ previous solving routines that failed in obtaining a value lower than $r_0$, the whole routine will not consider solving for any combinations of these particular parameters $A,p_0$ anymore.\\
Another obvious optimisation was to use a database to store results.
In this case MongoDB~\cite{dirolfPymongo11Python2021} was chosen for its good integration with the python language.
This pool of results allows one to compare the current solving routine with already calculated results.
Extra calculations can be permitted after a database query in the following cases:
\begin{itemize}
	\item If the current solving routine does have higher precision (in form of lower constant stepsize).
	\item If the previous solving routine reached the limit for $r_0$ and the current one would go further.
	\item If there exist multiple results for a combination of parameters $A,p_0,n$ \footnote{Then every result present is deleted and one new result will be calculated.}.
\end{itemize}
With these optimisations less results in total need to be calculated.
One still present disadvantage is however that allocation of parameters for solving routines is done beforehand and can not be changed in mid process.
This means if a thread started by the program finished and another is still running with multiple solving routines left, these cannot be given to another thread.
This is a possibility for future tweaks.\\
So far the previous discussion evolved around a fixed stepsize.
Since in this case the 4th order Runge Kutta solving method was implemented by hand, it is possible to use variable stepsizes for solving.
From figure~\ref{fig:4-NumSol-Plt-ValidateLEResults} we can see that the largest errors occur in the first step of integration.
This means in the beginning a small stepsize is desirable to retain accuracy while later it can be increased.
These settings were experimented with but since the author had already achieved numerous results with a constant stepsize were not committed to the final version.
It is however believed to yield a significant advantage if one would want to calculate higher values of $r_0$.\\
Furthermore, one could attempt to rewrite the code in a lower level programming language such as C to obtain a performance benefit.
Results from~\cite{jensenDrujensenFib2021} show a $676\%$ increase in performance for calculating Fibonacci numbers between the two languages when compiling code with PyPy.
It is however unclear to which degree this theoretical improvement would carry over to the code present.\\
The code is also designed in a way to check regularly for already present results in the database.
These calls could be forwarded to a different handler in order to possibly parallelise the problem and reduce disk\footnote{We refer to the term disk although the data was stored on a solid state drive.} latency issues.
\subsection{Nonexistence of Global \texorpdfstring{\acrNoHyperlink{\acs}{LE}}{LE} Solutions}
\label{subsec:99-App-NoGlobalLE}
We sketch the most important steps of the proof to theorem (8.1) part i) in~\cite[p.~36]{quittnerSuperlinearParabolicProblems2007} to show theorem~\ref{5-Zeroes-Theo-Lane-EmdenFiniteBoundary}.
\begin{lemma}
	\label{lem:99-App-NoGlobalLE-Lemm8-1}
	Non-negative solutions $u:\R^n\rightarrow\R$ of the differential equation
	\begin{equation}
		-\Delta u=u^p
		\label{eq:99-App-NoGlobalLE-LE-Equation}
	\end{equation}
	do not exist if $1<p<p_S$ where $p_S$ is the critical Sobolev-Exponent defined as
	\begin{equation}
		p_S=\begin{cases}
				\frac{n+2}{n-2} &\text{ if } n>2\\
				\infty &\text{ if } n\leq2
			\end{cases}
		\label{eq:99-App-NoGlobalLE-LE-Sobolev-Exponent}
	\end{equation}
	where $n$ is the dimension.
\end{lemma}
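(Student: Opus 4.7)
The plan is to argue by contradiction: assume $u \in C^2(\R^n)$ is a non-trivial, non-negative solution of $-\Delta u = u^p$. Since $-\Delta u = u^p \geq 0$, the strong maximum principle gives $u > 0$ on all of $\R^n$, and elliptic regularity promotes $u$ to $C^\infty$, so dividing by powers of $u$ is permitted. The first technical step is to produce a pointwise differential inequality by combining the Bochner identity $\tfrac{1}{2}\Delta|\nabla u|^2 = |D^2 u|^2 + \nabla u\cdot \nabla(\Delta u)$ with the relation $\nabla(\Delta u) = -p\, u^{p-1}\nabla u$ (obtained by differentiating the PDE) and Kato's refined inequality $|D^2 u|^2 \geq (\Delta u)^2/n = u^{2p}/n$. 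This yields
\begin{equation*}
\tfrac{1}{2}\Delta|\nabla u|^2 \;\geq\; \tfrac{1}{n}\,u^{2p} \;-\; p\, u^{p-1}|\nabla u|^2.
\end{equation*}

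Next I would pass to the auxiliary function $v = u^\beta$ for an exponent $\beta > 0$ to be fixed, which converts the previous inequality into one of the schematic form $-\Delta(u^{a}) \geq C_1(n,p,\beta)\, u^{b} - C_2(n,p,\beta)\, u^{c}|\nabla u|^2$ with positive constants $C_1, C_2$. I would then fix a cutoff $\varphi_R \in C_c^\infty(\R^n)$ with $\varphi_R \equiv 1$ on $B_R$, supported in $B_{2R}$, and satisfying $|\nabla\varphi_R| \leq C/R$, $|\Delta\varphi_R| \leq C/R^2$. Multiplying the inequality by $\varphi_R^m$ for some large $m$, integrating by parts twice and applying Hölder and Young inequalities, the leftover gradient term can be partially reabsorbed into the $u^{2p}$ term on the left, while the residual error terms localised on the annulus $B_{2R}\setminus B_R$ carry a factor of $R^{-2}$.

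Carrying out this bookkeeping leads to the scaling estimate
\begin{equation*}
\int_{B_R} u^{p+1}\, dx \;\leq\; C\, R^{\,n - 2(p+1)/(p-1)}.
\end{equation*}
At the critical exponent $p = p_S = (n+2)/(n-2)$ the power of $R$ vanishes, consistent with the existence of the Aubin--Talenti extremisers, whereas for $p < p_S$ it is strictly negative. Letting $R\to\infty$ forces $\int_{\R^n} u^{p+1} = 0$, hence $u \equiv 0$, contradicting our starting assumption. The main obstacle throughout is the bookkeeping of exponents: $\beta$ must be chosen so that, after integration by parts, the coefficient of the residual gradient term is strictly dominated by the $u^{2p}$ contribution generated by Kato on the left-hand side, and this single algebraic inequality is exactly where the subcritical condition $p < p_S$ enters. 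There is no slack, since the Liouville property is known to fail at criticality; once the correct tuning of $\beta$ is identified, everything else --- the Hölder, Young and annular estimates --- is mechanical.
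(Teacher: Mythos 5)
Your proposal follows essentially the same route as the paper's (which sketches Quittner--Souplet, Theorem 8.1(i)): Bochner's identity combined with the Cauchy--Schwarz/Kato bound $(\Delta v)^2\le n|D^2v|^2$, the auxiliary power $v=u^\beta$, cutoff functions with Young's inequality, and a final scaling/limit argument hinging on an integral exponent $r>n(p-1)/2$. The only caveat is your claim that the estimate comes out at the specific exponent $r=p+1$; the Bochner machinery actually delivers it at $r=2p+q$ for $q$ in a window determined by positivity of the coefficients $\alpha,\delta$ (the paper's Lemma~\ref{lem:99-App-NoGlobalLE-Lemm8-10}(ii)), and the argument only needs \emph{some} admissible $r>n(p-1)/2$, which exists precisely when $p<p_S$.
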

The following theorem can be used directly to prove the preceding one.
\begin{lemma}
	\label{lem:99-App-NoGlobalLE-Lemm8-6}
	Let $1<p<p_S$ and let $B_1$ be the unit ball in $\R^n$.
	There exists $r=r(n,p)>\max{(n(p-1)/2,p)}$ such that if $0<u\in C^2(B_1)$ is a solution of equation~\eqref{eq:99-App-NoGlobalLE-LE-Equation} in $B_1$, then
	\begin{equation}
		\int\limits_{|x|<1/2}u^r\leq C(n,p).
		\label{eq:99-App-NoGlobalLE-LE-Inequality}
	\end{equation}
\end{lemma}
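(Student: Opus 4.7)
\textbf{Proof plan for Lemma~\ref{lem:99-App-NoGlobalLE-Lemm8-6}.}

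The plan is to establish the interior $L^r$ bound by a weighted integration-by-parts (Bernstein-type) argument applied to both $u$ and $|\nabla u|^2$, with exponent parameters tuned using the subcriticality hypothesis $p<p_S$. First I would fix a cutoff function $\varphi\in C_c^\infty(B_1)$ with $\varphi\equiv 1$ on $B_{1/2}$, $0\le\varphi\le 1$, and integer $s\ge 1$ to be chosen large. Multiplying the equation $-\Delta u=u^p$ by $u^{q-1}\varphi^{2s}$ and integrating by parts yields
\begin{equation}
\int u^{p+q-1}\varphi^{2s}
= (q-1)\int u^{q-2}|\nabla u|^2\varphi^{2s}
+ 2s\int u^{q-1}\varphi^{2s-1}\nabla u\cdot\nabla\varphi,
\end{equation}
which expresses the desired integral of $u$ in terms of gradient quantities that one must separately control.

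Next I would bring in the Bochner identity $\tfrac12\Delta|\nabla u|^2=|\nabla^2 u|^2+\nabla u\cdot\nabla\Delta u$ together with $\Delta u=-u^p$. Testing this identity against $|\nabla u|^{2\alpha-2}\varphi^{2s}$ for a suitable $\alpha>0$, and using $|\nabla^2 u|^2\ge(\Delta u)^2/n=u^{2p}/n$, produces a second integral inequality of the form
\begin{equation}
c_1\int u^{2p}|\nabla u|^{2\alpha-2}\varphi^{2s}
\le C\int u^{p-1}|\nabla u|^{2\alpha}\varphi^{2s}
+ \text{(cutoff remainders)}.
\end{equation}
The two inequalities are then coupled: Hölder applied to the cross terms and Young's inequality with a small parameter allow one to absorb the gradient integrals into the main term on the left of the first identity, at the price of a constant depending on $\nabla\varphi$ and $\Delta\varphi$. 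The cutoff remainders are estimated purely in terms of powers of $u$ against $|\nabla\varphi|,|\Delta\varphi|$, and ultimately bounded by $\int u^{r}\varphi^{2s-k}$ with a reduced exponent; choosing $s$ large makes these absorbable.

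The delicate point, and the main obstacle, is the bookkeeping of exponents that forces the choice of $r$. The powers of $u$ appearing on each side must satisfy $r+p-1$ on the left while the right assembles products $u^{r-1}|\nabla u|$ and $u^{p-1}|\nabla u|^{2\alpha}$; scaling consistency under $u\mapsto\lambda^{2/(p-1)}u(\lambda\cdot)$ forces $r>n(p-1)/2$, and the Bochner step only closes if the ratio $(p-1)/2$ versus the Sobolev critical gap $(n+2)/(n-2)-p$ is strictly positive, which is exactly the hypothesis $p<p_S$. Thus I would explicitly solve the algebraic system $\{\alpha,q,s\}\mapsto\{r\}$ for the admissible range of $r$, verify $r>\max(n(p-1)/2,p)$, and conclude $\int_{B_{1/2}}u^r\le\int u^r\varphi^{2s}\le C(n,p)$. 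All constants depend only on $n,p,\varphi$, which is a fixed choice, giving the required $C(n,p)$.
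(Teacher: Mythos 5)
Your overall strategy (Bochner identity plus a cutoff, with Young's inequality to absorb remainders, and subcriticality entering through the sign of a coefficient) belongs to the same family of techniques as the paper's proof, but the concrete pair of inequalities you set up cannot be closed, and the obstruction is a sign rather than bookkeeping. Homogeneity forces the matching $\alpha=1$ and $q=p+1$ between your two displays: your first identity then reads
\begin{equation*}
\int u^{2p}\varphi^{2s}=p\int u^{p-1}|\nabla u|^{2}\varphi^{2s}+\text{(cutoff)},
\end{equation*}
while your Bochner inequality reads $\tfrac{1}{n}\int u^{2p}\varphi^{2s}\le p\int u^{p-1}|\nabla u|^{2}\varphi^{2s}+\text{(cutoff)}$. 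Substituting one into the other yields $\bigl(\tfrac{1}{n}-1\bigr)\int u^{2p}\varphi^{2s}\le\text{(cutoff)}$, which is vacuous since $\tfrac{1}{n}<1$. Taking your exponent $\alpha>1$ only pushes the problem up one rung: the majorizing side then contains $\int u^{p-1}|\nabla u|^{2\alpha}\varphi^{2s}$, which needs yet another estimate, and the cascade never terminates. H\"older cannot repair this, because any scaling-consistent combination of your two inequalities reduces to the same wrong-signed cancellation.

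The missing idea is that the Bochner identity must be applied to a \emph{power} $v=u^{k+1}$ of the solution, with multiplier $\varphi v^{m}$, rather than to $u$ itself weighted by powers of $|\nabla u|$. Since $\Delta v=(k+1)u^{k}\Delta u+k(k+1)u^{k-1}|\nabla u|^{2}$, the Cauchy--Schwarz step $|D^{2}v|^{2}\ge\tfrac{1}{n}(\Delta v)^{2}$ then places the quartic gradient integral $I=\int\varphi u^{q-2}|\nabla u|^{4}$ on the \emph{favourable} side with a coefficient that is tunable in the two free parameters $(q,k)$. This is exactly the paper's route (lemma~\ref{lem:99-App-NoGlobalLE-Lemma1} and lemma~\ref{lem:99-App-NoGlobalLE-Lemm8-10}): one obtains $\alpha I+\beta J+\gamma K\le\text{(cutoff)}$, eliminates $J$ via the equation, and the entire proof reduces to the algebraic fact that for $1<p<p_S$ one can choose $(q,k)$ with the coefficients $\alpha$ and $\delta$ simultaneously positive and $2p+q>n(p-1)/2$; the discriminant computation in lemma~\ref{lem:99-App-NoGlobalLE-Lemm8-10}(ii) is precisely where $p<p_S$ is used, and the exponent of the conclusion is $r=2p+q$. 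Your sketch correctly anticipates the final cutoff/Young absorption (which matches the paper's last step) but never generates the term $I$, and without it the two-parameter optimization that carries the subcriticality hypothesis has nowhere to live.
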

To proof this theorem, a gradient estimate for a local solution of equation~\eqref{eq:99-App-NoGlobalLE-LE-Equation} was applied.
The following theorems are auxiliary in this task.
\begin{lemma}
	\label{lem:99-App-NoGlobalLE-Lemma1}
	Let $\Omega\subseteq\R^n$ and $0\leq\varphi$ be a test function in $C_c^\infty(\Omega)$ with $0\leq u\in C^2(\Omega)$.
	Then we fix $q\in R$ and define
	\begin{equation}
		I=\int\varphi u^{q-2}\left|\nabla u\right|^4, \hspace{1cm} J=\int\varphi u^{q-1}\left|\nabla u\right|^2\Delta u, \hspace{1cm} K=\int \varphi u^q(\Delta u)^2.
		\label{eq:99-App-NoGlobalLE-Def-IJK}
	\end{equation}
	For any $k\in\R$ with $k\neq -1$ we have
	\begin{equation}
		\alpha I + \beta J + \gamma K \leq \frac{1}{2}\int u^q\left|\nabla u\right|^2\Delta\varphi + \int u^q\left[\Delta u + (q-k)u^{-1}\left|\nabla u\right|^2\right]\nabla u\cdot\nabla\varphi
		\label{eq:99-App-NoGlobalLE-Ineq-IJK}
	\end{equation}
	where the parameters $\alpha, \beta, \gamma$ are defined as
	\begin{equation}
		\alpha=-\frac{n-1}{n}k^2 + (q-1)k-\frac{q(q-1)}{2}, \hspace{1cm} \beta=\frac{n+2}{n}k-\frac{3q}{2}, \hspace{1cm} \gamma = -\frac{n-1}{n}.
		\label{eq:99-App-NoGlobalLE-Def-Alph-Bet-Gam}
	\end{equation}
\end{lemma}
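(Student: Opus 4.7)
The plan is to derive the inequality from Bochner's identity, supplemented by a Cauchy--Schwarz eigenvalue bound applied to $v = u^{k+1}$ which is the source of the parameter $k$. Multiplying Bochner's identity $\tfrac12\Delta|\nabla u|^2 = |D^2 u|^2 + \langle\nabla u,\nabla\Delta u\rangle$ by $\varphi u^q$ and integrating over $\Omega$, then performing two integrations by parts on the left hand side and one on the right (expanding $\Delta(\varphi u^q)$ in the first case and moving the $\nabla\Delta u$ onto $\varphi u^q\nabla u$ in the second), produces the exact identity
\[
\int\varphi u^q|D^2 u|^2 = \tfrac{q(q-1)}{2}I + \tfrac{3q}{2}J + K + \tfrac{1}{2}\int u^q|\nabla u|^2\Delta\varphi + \int u^q\bigl[\Delta u + qu^{-1}|\nabla u|^2\bigr]\langle\nabla u,\nabla\varphi\rangle.
\]
This already yields the $k=0$ case of the lemma once the Cauchy--Schwarz estimate $|D^2 u|^2 \ge (\Delta u)^2/n$ is invoked and the terms rearranged: the $\tfrac{1}{n}K$ produced on the left, subtracted from the $K$ on the right, yields exactly the factor $\gamma = -\tfrac{n-1}{n}$.

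To obtain the general $k$, I would apply the same eigenvalue Cauchy--Schwarz bound $n|D^2 v|^2 \ge (\Delta v)^2$ not to $u$ but to $v = u^{k+1}$, which is admissible precisely when $k \neq -1$. A chain-rule computation gives $\Delta v = (k+1)u^k[\Delta u + ku^{-1}|\nabla u|^2]$ and $|D^2 v|^2 = (k+1)^2 u^{2k-2}\bigl[u^2|D^2 u|^2 + 2ku\langle D^2 u\,\nabla u,\nabla u\rangle + k^2|\nabla u|^4\bigr]$. Cancelling the nonzero factor $(k+1)^2 u^{2k-2}$ and rearranging produces the pointwise Kato-type bound
\[
|D^2 u|^2 \ge \tfrac{(\Delta u)^2}{n} - \tfrac{2k}{u}\langle D^2 u\,\nabla u,\nabla u\rangle + \tfrac{2k}{nu}|\nabla u|^2\Delta u - \tfrac{(n-1)k^2}{n}\tfrac{|\nabla u|^4}{u^2}.
\]
I would then multiply by $\varphi u^q$, integrate, and eliminate the mixed term via $\langle D^2 u\,\nabla u,\nabla u\rangle = \tfrac{1}{2}\langle\nabla u,\nabla|\nabla u|^2\rangle$ followed by a further integration by parts against $\varphi u^{q-1}$, which contributes $-(q-1)I - J$ together with the boundary piece $-\int u^{q-1}|\nabla u|^2\langle\nabla u,\nabla\varphi\rangle$.

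Substituting this lower bound into the Bochner identity eliminates $\int\varphi u^q|D^2 u|^2$, and the surviving contributions to $I$, $J$, $K$ regroup into exactly $\alpha(k)I + \beta(k)J + \gamma K$ with the stated values: the coefficient $\tfrac{n+2}{n}k = k + \tfrac{2k}{n}$ of $J$ combines the $kJ$ from the integration by parts with the $\tfrac{2k}{n}J$ from the Kato bound, while the quadratic $\alpha(k) = -\tfrac{n-1}{n}k^2 + (q-1)k - \tfrac{q(q-1)}{2}$ combines the three $I$ contributions from Kato, from the integration by parts, and from Bochner. The shift $q \mapsto q-k$ in the gradient-$\varphi$ boundary term is precisely the cancellation of the original $+qu^{-1}|\nabla u|^2$ from Bochner against the $-ku^{-1}|\nabla u|^2$ generated by the last integration by parts. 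The main obstacle is the bookkeeping: four integrations by parts, one chain-rule expansion, and one Cauchy--Schwarz step must all be tracked simultaneously without dropping terms, and every coefficient must be verified to regroup exactly as stated. Conceptually no deep input beyond Bochner and Cauchy--Schwarz is required; the hypothesis $k \neq -1$ is nothing more than the non-degeneracy condition needed for the rescaling $v = u^{k+1}$ to carry genuine information.
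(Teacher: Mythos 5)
Your proposal is correct and rests on exactly the same two ingredients as the paper's proof: the Bochner identity and the trace Cauchy--Schwarz bound $(\Delta v)^2\leq n\lvert D^2v\rvert^2$ applied to the rescaled function $v=u^{k+1}$ (whence $k\neq-1$), combined via integration by parts against $\varphi$. The only difference is organizational --- you integrate Bochner at the level of $u$ and import $k$ through a pointwise Kato-type lower bound on $\lvert D^2u\rvert^2$, whereas the paper integrates the inequality for a general $v$ with exponent $m$ and substitutes $v=u^{k+1}$, $m=(q-2k)/(k+1)$ at the end; the resulting coefficients $\alpha,\beta,\gamma$ agree either way.
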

\begin{proof}
	The proof to this theorem relies on the Bochner identity
	\begin{equation}
		\frac{1}{2}\Delta |\nabla v|^2 = \nabla (\Delta v)\cdot \nabla v + |D^{2}v|^2
		\label{eq:99-App-NoGlobalLE-Bochner-Identity}
	\end{equation}
	where $|D^2 v|^2=\sum\limits_{i,j}(\partial_i\partial_j v)^2$.
	In the first step, equation~\eqref{eq:99-App-NoGlobalLE-Bochner-Identity} is multiplied by $\varphi v^m$ and integrated.
	Afterwards the Cauchy-Schwarz inequality applied on the inner product of two matrices $\left<A,B\right>=\mathrm{tr}(AB^*)$, results in
	\begin{equation}
		(\Delta v)^2 = (\mathrm{tr}(D^2 v))^2 \leq \mathrm{tr}\left[(D^2 v)(D^2 v)^*\right]\mathrm{tr}(I_n)=n|D^2 v|^2.
		\label{eq:99-App-NoGlobalLE-Step1-Result}
	\end{equation}
	When rearranging terms of equation~\eqref{eq:99-App-NoGlobalLE-Bochner-Identity}, the first result is obtained.
	\begin{alignat}{1}
		\frac{m(1-m)}{2} \int \varphi v^{m-2}|\nabla v|^{4}-\frac{3 m}{2} \int \varphi v^{m-1}|\nabla v|^{2} \Delta v-\frac{n-1}{n} \int \varphi v^{m}(\Delta v)^{2} \\
		\leq \frac{1}{2} \int v^{m}|\nabla v|^{2} \Delta \varphi+\int\left[v^{m} \Delta v+m v^{m-1}|\nabla v|^{2}\right] \nabla v \cdot \nabla \varphi
	\end{alignat}
	In the second part, we set $v=u^{k+1},m=(k+1)^{-1}(q-2k)$ and start by computing the respective terms of equation~\eqref{eq:99-App-NoGlobalLE-Step1-Result}.
	Combining them yields
	\begin{alignat}{1}
			{\left[\frac{m(1-m)}{2}(k+1)^{2}-\frac{3 m}{2} k(k+1)-\frac{n-1}{n} k^{2}\right] I+\left[-\frac{3 m}{2}(k+1)-2 k \frac{n-1}{n}\right] J} \\
			-\frac{n-1}{n} K \leq \frac{1}{2} \int u^{q}|\nabla u|^{2} \Delta \varphi+\int u^{q}\left[\Delta u+(k+m(k+1)) u^{-1}|\nabla u|^{2}\right] \nabla u \cdot \nabla \varphi,
	\end{alignat}
	which proves the lemma.
\end{proof}
\begin{lemma}
	\label{lem:99-App-NoGlobalLE-Lemm8-10}
	(i) Let $\Omega,\varphi,u$ be as in lemma~\ref{lem:99-App-NoGlobalLE-Lemma1} and $u$ additionally satisfies equation~\eqref{eq:99-App-NoGlobalLE-LE-Equation}.
	Fix $q,k\in\R$ with $q>-p,k\neq-1$ and define $I$ as in equation~\eqref{eq:99-App-NoGlobalLE-Def-IJK} with
	\begin{equation}
		K = \int\varphi u^{2p+q}.
		\label{eq:99-App-NoGlobalLE-K-NewDef}
	\end{equation}
	The it holds
	\begin{equation}
		\alpha I + \delta K \leq \frac{1}{2}\int u^q |\nabla u|^2\Delta \varphi + C\int\left[u^{p+q} +u^{q-1}|\nabla u|^2\right]|\nabla u\cdot \nabla\varphi|
		\label{eq:99-App-NoGlobalLE-lemma2-statement}
	\end{equation}
	where $C=C(n,p,q,k)>0$ and $\alpha$ as in equation~\eqref{eq:99-App-NoGlobalLE-Def-Alph-Bet-Gam} and
	\begin{equation}
		\delta=\frac{1}{p+q}\left(\frac{3q}{2}-\frac{n+2}{n}k\right)-\frac{n-1}{n}.
		\label{eq:99-App-NoGlobalLE-Def-Delt}
	\end{equation}
	(ii) Assume $1<p<p_S$.
	Then there exist $q,k\in\R$ with $q\neq -p,k\neq-1$ such that the constants $\alpha,\delta$ satisfy
	\begin{equation}
		\alpha,\delta>0\hspace{1cm}2p+q>n(p-1)/2.
		\label{eq:99-App-NoGlobalLE-Lem3-Ineq-Alph-Delt}
	\end{equation}
\end{lemma}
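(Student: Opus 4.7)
My plan is to derive part (i) directly from Lemma~\ref{lem:99-App-NoGlobalLE-Lemma1} by using the PDE $\Delta u = -u^p$ to eliminate Laplacians, then to integrate by parts so that the two resulting terms $\int\varphi u^{2p+q}$ and $\int\varphi u^{p+q-1}|\nabla u|^2$ are consolidated into a single multiple of the new $K$. Part (ii) is then reduced to a parameter-selection exercise, and this is where the Sobolev threshold $p_S$ is used.

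For part (i), substituting $\Delta u = -u^p$ into the inequality of Lemma~\ref{lem:99-App-NoGlobalLE-Lemma1} immediately turns $\int\varphi u^q(\Delta u)^2$ into the new $K$ and turns $J$ into $-\int\varphi u^{p+q-1}|\nabla u|^2$. The remaining $|\nabla u|^2$-term I would handle by the identity $\nabla u^{p+q}=(p+q)u^{p+q-1}\nabla u$, which is admissible since $q>-p$, followed by integration by parts against the compactly supported $\varphi$:
\[
\int \varphi\, u^{p+q-1}|\nabla u|^2 \;=\; \frac{1}{p+q}\,K \;-\; \frac{1}{p+q}\int u^{p+q}\,\nabla\varphi\cdot\nabla u .
\]
After plugging this back, the coefficient in front of $K$ on the left becomes exactly $\gamma-\beta/(p+q)=\delta$, while the right-hand side gains only an extra term of the form $\int u^{p+q}\,\nabla\varphi\cdot\nabla u$. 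Combining this with the original term $\int u^q\Delta u\,\nabla u\cdot\nabla\varphi = -\int u^{p+q}\,\nabla u\cdot\nabla\varphi$ already present in Lemma~\ref{lem:99-App-NoGlobalLE-Lemma1} and taking absolute values produces the claimed right-hand side with a constant $C=C(n,p,q,k)$.

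For part (ii), I would read $\alpha$ as a downward-opening quadratic in $k$ (leading coefficient $-(n-1)/n$) with vertex $k^\star=n(q-1)/(2(n-1))$, and $\delta$ as a linear function of $k$ with negative slope whenever $p+q>0$. A direct computation gives $\alpha(k^\star)=-(q-1)[(n-2)q+n]/(4(n-1))$, so $\alpha(k^\star)>0$ forces $q\in(-n/(n-2),1)$ when $n>2$ and $q<1$ when $n\le 2$. A further short calculation shows that $\delta(k^\star)>0$ is equivalent to $(n+2)(n-q)>2(n-1)^2 p$. I would then check that combining these with the constraints $q>-p$ and $2p+q>n(p-1)/2$ still leaves an admissible open interval for $q$, slightly perturb $k$ off $k^\star$ (avoiding $k=-1$) so that both strict positivity conditions survive, and thereby produce the required $(q,k)$.

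The main obstacle is precisely this last compatibility check: the four conditions on $q$ collapse onto a single point exactly when $p=p_S$, so verifying that they genuinely overlap for every $p\in(1,p_S)$ requires a careful algebraic comparison of the four endpoint expressions in $(n,p)$. This is the step in which the explicit form $p_S=(n+2)/(n-2)$ of the critical Sobolev exponent is used in a non-trivial way, and it is also the reason the whole argument breaks down once $p\ge p_S$.
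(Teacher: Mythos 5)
Your part (i) is correct and is essentially the paper's own argument: the identity $(p+q)J=-K+\int(\nabla\varphi\cdot\nabla u)\,u^{p+q}$, obtained by writing $u^{p+q-1}|\nabla u|^2=\tfrac{1}{p+q}\nabla u\cdot\nabla(u^{p+q})$ and integrating by parts, is exactly the one-line reduction the paper records in its proof, and your coefficient $\delta=\gamma-\beta/(p+q)$ together with the absorption of the remaining first-order terms into the $C$-term is the intended bookkeeping.

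Part (ii), however, has a genuine gap, and it sits precisely at the step you defer. Fixing $k$ at the vertex $k^\star=n(q-1)/(2(n-1))$ of the concave quadratic $\alpha(\cdot,q)$ is the wrong parameter choice: the coefficient of $k$ in $\delta$ is $-(n+2)/(n(p+q))<0$, so the vertex maximises $\alpha$ but is far from maximising $\delta$, and the admissible window for $q$ then closes strictly before $p$ reaches $p_S$. Concretely, for $n=3$ and $p=4.9<p_S=5$ your own (correct) formulas give $q\in(-3,1)$ from $\alpha(k^\star)>0$ and $5(3-q)>2\cdot 4\cdot 4.9$, i.e.\ $q<-4.84$, from $\delta(k^\star)>0$; these are incompatible, and in fact for $n=3$ the interval is already empty for every $p\geq 3.75$. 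So the assertion that the four conditions collapse only at $p=p_S$ is false for your choice of $k$, and the proof as proposed fails on a whole subinterval of $(1,p_S)$. The repair --- and the route the paper takes, following Quittner--Souplet --- is to push $k$ down to just above the \emph{smaller root} $k_-(q)$ of $\alpha(\cdot,q)=0$, where $\alpha$ remains positive after an arbitrarily small perturbation while $\delta$ becomes as large as the constraint $\alpha>0$ permits; the condition $\delta(k_-(q),q)>0$ is then a quadratic inequality in $q$ whose discriminant is $D=np\left[(n+2)-(n-2)p\right]/(n+2)^2$, positive exactly when $p<p_S$, and the paper exhibits the explicit admissible pair $q=2p/(n+2)$, $k$ near $-np/(n+2)$ (perturbed to avoid $k=-1$). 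As a sanity check that the root choice succeeds where the vertex fails: for $n=3$, $p=4.9$, $q=-2$ one finds $k_-=-3$, and $k=-3+\varepsilon$ gives $\alpha>0$, $\delta\approx 0.02>0$, and $2p+q=7.8>n(p-1)/2=5.85$.
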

\begin{proof}
	(i) With $\Delta u+u^p=0$ and $J$ from equation~\eqref{eq:99-App-NoGlobalLE-Def-IJK} we obtain
	\begin{equation}
		(p+q)J = -\int\varphi u^{2p+q}+\int(\nabla\varphi\cdot\nabla u)u^{q+p}.
		\label{eq:99-App-NoGlobalLE-Lem3-Proof-i}
	\end{equation}
	(ii) In order to fulfill equation~\eqref{eq:99-App-NoGlobalLE-Lem3-Ineq-Alph-Delt} we rewrite the conditions.
	In the case for $\delta>0$ we obtain a polynomial in $q$ with discriminant
	\begin{equation}
		D=\frac{p^{2}+n(n+2) p-(n-1)^{2} p^{2}}{(n+2)^{2}}=\frac{n p[(n+2)-(n-2) p]}{(n+2)^{2}}>0.
		\label{eq:99-App-NoGlobalLE-Lem3-Proof-ii-Discriminant}
	\end{equation}
	Now we can simply choose
	\begin{equation}
		q=\frac{2p}{n+2}\hspace{1cm} k=k_0(q)^{-}=\left(-\frac{np}{n+2}\right)^- \hspace{1cm}\text{with} k\neq-1
		\label{eq:99-App-NoGlobalLE-Lem3-Proof-ii}
	\end{equation}
	where the superscript refers to the lower result of the quadratic polynomial in $q$ of equation~\eqref{eq:99-App-NoGlobalLE-Lem3-Proof-ii-Discriminant}.
\end{proof}
\begin{proof}[Proof of lemma~\ref{lem:99-App-NoGlobalLE-Lemm8-6}]
	With $k,q$ from lemma~\ref{lem:99-App-NoGlobalLE-Lemm8-10} (ii) and $\Omega=B_1$, we want to further develop estimates to the RHS of equation~\eqref{eq:99-App-NoGlobalLE-Lem3-Ineq-Alph-Delt}.
	Let $\xi$ be a testfunction on $\Omega$ with $\xi=1$ for $|x|\leq 1/2$ and $0\leq\xi\leq1$. Then we define $\theta=(3p+1+2q)/(2(2p+q))\in(0,1)$ and let $\varphi=\xi^m$ where $m=2/(1-\theta)$. Then
	\begin{equation}
		\left|\nabla\varphi\right|\leq C\xi^{m-1}\leq C\varphi^{\theta},\hspace{1cm}|\Delta\varphi|\leq C\xi^{m-2}\leq C\varphi^{\theta}.
		\label{eq:99-App-NoGlobalLE-Lem4-Proof-i}
	\end{equation}
	With Young's Inequality of the form
	\begin{equation}
		xyz\leq\epsilon x^a+\epsilon y^b+ C(\epsilon)z^c,\hspace{1cm}a^{-1}+b{-1}+c^{-1}=1
		\label{eq:99-App-NoGlobalLE-Lem4-Proof-ii}
	\end{equation}
	we get
	\begin{equation}
		\begin{gathered}
		\int u^{q}|\nabla u|^{2} \Delta \varphi=\int\left(\varphi^{1 / 2} u^{(q-2) / 2}|\nabla u|^{2}\right)\left(\varphi^{(q+2) / 2(2 p+q)} u^{(q+2) / 2}\right) \\
		\times\left(\varphi^{-(p+1+q) /(2 p+q)} \Delta \varphi\right) \leq \varepsilon \int \varphi u^{q-2}|\nabla u|^{4}+\varepsilon \int \varphi u^{2 p+q}+C(\varepsilon)
		\end{gathered}
		\label{eq:99-App-NoGlobalLE-Lem4-Proofiii-1}
	\end{equation}
	and
	$C \int u^{q-1}|\nabla u|^{2}|\nabla u \cdot \nabla \varphi| \leq \int\left(\varphi^{3 / 4} u^{3(q-2) / 4}|\nabla u|^{3}\right)\left(\varphi^{(q+2) / 4(2 p+q)} u^{(q+2) / 4}\right)$
	\begin{equation}
	\times\left(\varphi^{-(3 p+1+2 q) / 2(2 p+q)}|\nabla \varphi|\right) \leq \varepsilon \int \varphi u^{q-2}|\nabla u|^{4}+\varepsilon \int \varphi u^{q+2 p}+C(\varepsilon)
	\label{eq:99-App-NoGlobalLE-Lem4-Proofiii-2}
	\end{equation}
	Combining this with equation~\eqref{eq:99-App-NoGlobalLE-lemma2-statement}, we obtain
	\begin{equation}
		\alpha I+\delta K \leq C(n, p, q, k) \varepsilon(I+K)+C(\varepsilon).
		\label{eq:99-App-NoGlobalLE-Lem4-Proofiii-3}
	\end{equation}
	By choosing $\epsilon$ small enough, we obtain the desired result of lemma~\ref{lem:99-App-NoGlobalLE-Lemm8-6}.
\end{proof}

\pagebreak
\addcontentsline{toc}{section}{References}
\pagenumbering{alph}
\sloppy
\printbibliography

\includepdf{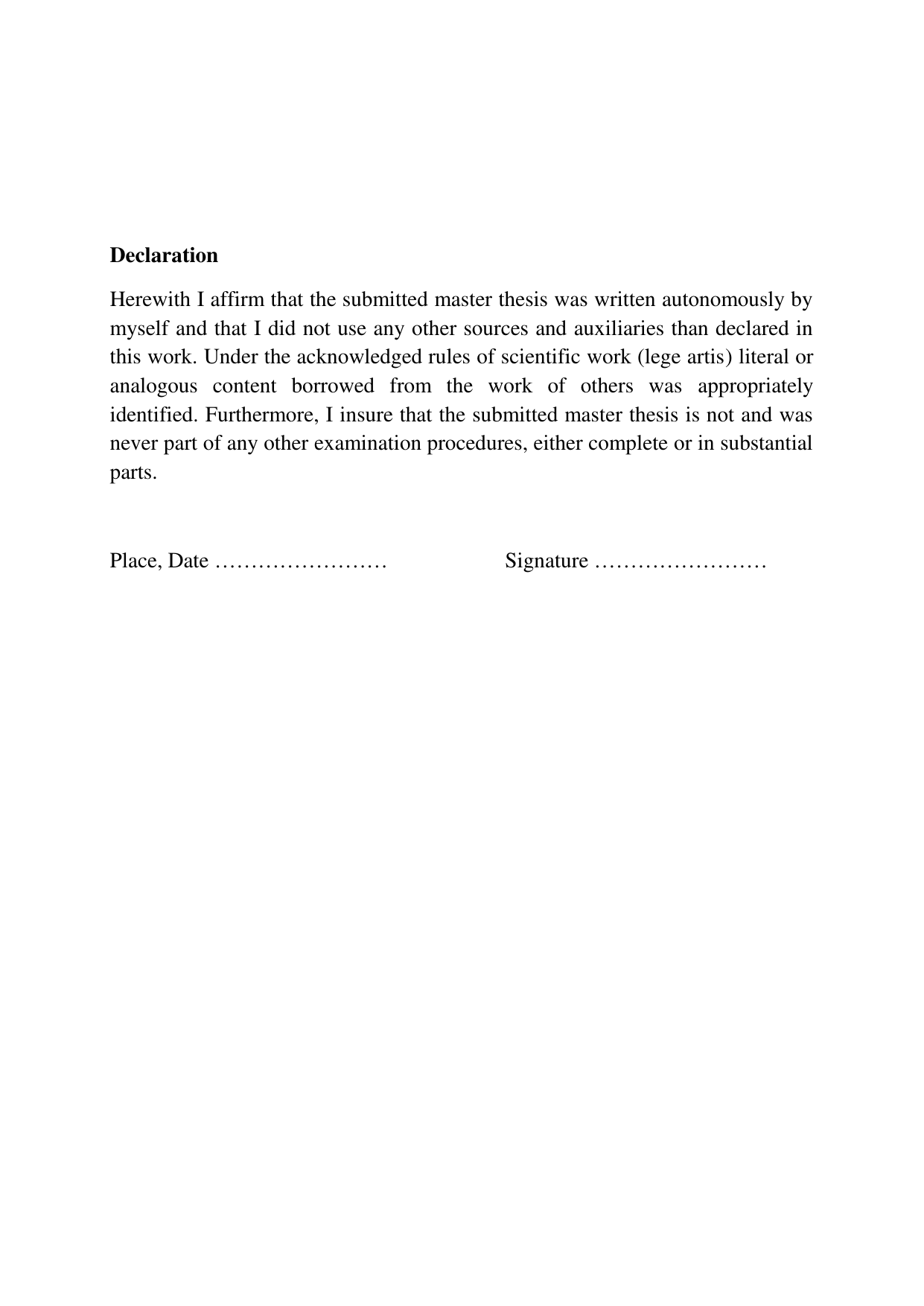}
\end{document}